\newtheorem{theorem}{Theorem}[section]
\newtheorem{proposition}[theorem]{Proposition}
\newtheorem{corollary}[theorem]{Corollary}
\newtheorem{lemma}[theorem]{Lemma}
\newtheorem{remark}[theorem]{Remark}
\theoremstyle{plain}
\newtheorem{result}{Theorem}
\newtheorem{definition}[theorem]{Definition}%
\newcommand{\one}{\mbox{$1 \hspace{-1.0mm}  {\bf l}$}}
\def\tr{\mathrm{tr}}
\def\none{\tilde{\mbox{$1 \hspace{-1.0mm}  {\bf l}$}}}
\newcommand{\cG}{{\mathcal G}}         
\newcommand{\C}{\mathbb{C}} 
\newcommand{\N}{\mathbb{N}} 
\newcommand{\quadtext}[1]{\quad\text{#1}\quad}
\begin{document}

\title{Asymptotic robustness of entanglement in noisy quantum networks
and graph connectivity}

\author{Fernando Lled\'o}
\affiliation{Departamento de Matem\'aticas, Universidad Carlos III de
Madrid, E-28911, Legan\'es (Madrid), Spain}
\affiliation{Instituto de Ciencias Matem\'aticas, E-28049 Madrid, Spain}
\author{Carlos Palazuelos}
\affiliation{Departamento de An\'alisis Matem\'atico y Matem\'atica Aplicada, Universidad Complutense de Madrid, E-28040 Madrid, Spain}
\affiliation{Instituto de Ciencias Matem\'aticas, E-28049 Madrid, Spain}
\author{Julio I. de Vicente}
\affiliation{Departamento de Matem\'aticas, Universidad Carlos III de
Madrid, E-28911, Legan\'es (Madrid), Spain}
\affiliation{Instituto de Ciencias Matem\'aticas, E-28049 Madrid, Spain}

\begin{abstract}
Quantum networks are promising venues for quantum information processing. This motivates the study of the entanglement properties of the particular multipartite quantum states that underpin these structures. In particular, it has been recently shown that when the links are noisy two drastically different behaviors can occur regarding the global entanglement properties of the network. While in certain configurations the network displays genuine multipartite entanglement (GME) for any system size provided the noise level is below a certain threshold, in others GME is washed out if the system size is big enough for any fixed non-zero level of noise. However, this difference has only been established considering the two extreme cases of maximally and minimally connected networks (i.e.\ complete graphs \textit{versus} trees, respectively). In this article we investigate this question much more in depth and relate this behavior to the growth of several graph theoretic parameters that measure the connectivity of the graph sequence that codifies the structure of the network as the number of parties increases. The strongest conditions are obtained when considering the degree growth. Our main results are that a sufficiently fast degree growth (i.e.\ $\Omega(N)$, where $N$ is the size of the network) is sufficient for asymptotic robustness of GME, while if it is sufficiently slow (i.e.\ $o(\log N)$) then the network becomes asymptotically biseparable. We also present several explicit constructions related to the optimality of these results.
\end{abstract}


\maketitle
\tableofcontents


\section{Introduction}\label{sec1}

Entanglement plays a key role both in the foundations of quantum mechanics and in the applications of quantum information theory. Thus, entanglement theory has been developed over the last decades in order to characterize and quantify this resource and to understand the possibilities and limitations for its manipulation (see e.g.\ the review article \cite{horodecki2009quantum}). However, entanglement is very fragile and distributing an entangled state shared by a large number of parties is a formidable challenge in practice. In the last years, quantum networks have arisen as a very promising platform for large-scale quantum communication \cite{wehner2018quantum}. Additionally, they can be used to model distributed quantum computation schemes \cite{distributedqc1,distributedqc2}. The main idea underlying these structures is that each party (which is identified with a node in the network) controls several qudits, each of which can be entangled with a qudit held by a neighboring node. Thus, each link in the network represents the ability to create bipartite entanglement between the corresponding pair of nodes. Then, the use of quantum repeaters and teleportation makes it possible to establish long-distance quantum communication and to distribute to the parties any multipartite state that might be necessary for some specific quantum information task. This approach needs to consume many instances of the pre-shared entanglement and, thus, the entangled links need to be permanently replenished, which leads to the study of entanglement routing techniques (see e.g.\ \cite{elkouss} and references therein).

However, if entangled links are not refreshed and are modelled by a particular bipartite state, each network pattern defines a multipartite quantum state, which can be understood as a fixed resource state. This state is modelled by an undirected simple graph where the parties correspond to vertices (nodes) and edges (links) represent that a pair of parties share some form of bipartite entanglement to be specified. This view is more natural from the perspective of entanglement theory and leads to the investigation of the entanglement properties of multipartite network states. This approach has been taken for instance in the study of the phenomenon of entanglement percolation \cite{percolation}. Here, one studies which network states enable to establish with non-zero probability a perfect 2-qubit maximally entangled state between a pair of arbitrarily distant parties within the network configuration by local operations and classical communication (LOCC) (see the review \cite{percolationreview} and references therein). Another example is given by the recent work \cite{pairable}, which shows that a $N$-partite network state with $O(\log N)$ qubits per party in a precise configuration gives a construction of a maximally pairable state, i.e.\ for any choice of grouping of the parties into disjoint pairs, a state from which by LOCC one can obtain 2-qubit maximally entangled states shared by each pair. In addition to this, references \cite{yamasaki} and \cite{kraftspee} have studied general LOCC transformations starting from network states where the links correspond to maximally entangled states. On a different note, \cite{networkGMNL} shows that all noiseless network states arising from a connected graph are genuine multipartite nonlocal and \cite{ASGME} uses noisy quantum network states to prove that genuine multipartite nonlocality can be superactivated for any number of parties.

A natural and interesting question in the above direction is to study when a network state displays entanglement. Obviously, the parties corresponding to nodes connected by a link can be entangled, but can this be extended to global entanglement that is spread through the whole network? This notion is precisely captured by the concept of \textit{genuine multipartite entanglement (GME)}. In fact, GME is known to be necessary for maximum sensitivity in quantum metrology \cite{gmesensing1,gmesensing2} and to obtain multipartite secret key \cite{gmekey}. Indeed, most relevant states that appear in applications, such as for example graph states, are GME. Interestingly, since the set of \textit{biseparable (BS)} states (i.e.\ non-GME states) is closed by LOCC, any network state that is not GME is severely limited for many applications as it is impossible to obtain any GME state from it by further LOCC processing.

It is easy to see that any network state corresponding to a connected graph and where the edges correspond to pure bipartite entangled states has to be GME. However, an interesting phenomenon arises when the nodes share mixed states. This can already be seen by considering the simple and standard noise model in which every edge describes an isotropic state shared by the adjacent vertices with visibility $p$ (where $p=1$ corresponds to the ideal case of a pure maximally entangled state; see \cite{isotropic}).
In \cite{ASGME} it has been shown that the properties of the graph codifying the structure of the network play a non-trivial role in this problem. To wit, it turns out that for any constant $p<1$ there exist graph sequences such that the corresponding network state is not GME for a large enough size. That is, the slightest form of noise renders the network state biseparable if the number of parties is sufficiently large.  Remarkably, on the other hand, there also exist graph sequences for which the corresponding network state has an asymptotically robust GME, i.e.\ GME can be displayed for any system size for some fixed $p<1$. Thus, if the devices that produce entangled pairs between nodes are sufficiently developed as to guarantee a visibility above a certain threshold, one can find network configurations that are GME for any number of parties. On the other hand, other network configurations are doomed to failure and fundamentally limited in applications since in any realistic scenario the entangling devices will always operate outside the ideal case of a perfect visibility ($p=1$).

Intuitively, the more connected a network is the more likely it is to have robust multipartite entanglement properties. In fact, \cite{ASGME} proved the existence of these two different behaviors by only considering the extreme cases of tree graphs (minimally connected) and completely connected graphs (maximally connected). This, however, leaves wide open a finer analysis of the relation between asymptotic robustness of GME in noisy network states of increasing size and different notions of graph connectivity. In this article, we combine quantum information and graph theory techniques to study systematically and in more depth this problem, unveiling a subtle and non-trivial relation. The most standard measures of the connectivity of a graph are the vertex-connectivity and the edge-connectivity \cite{diestel,bondy}. While we find that when these quantities grow sufficiently fast this ensures that the network has asymptotically robust GME, on the other hand, it turns out that minimally connected graph sequences as measured by these two parameters can also have this property. Our analysis reveals that the degree growth of the graph sequence, which is also intimately connected to the connectivity of the graphs, gives the strongest results providing sufficient conditions for both asymptotic GME and asymptotic biseparability. Yet, we prove that this parameter alone cannot fully characterize these properties.

Although precise definitions will be given below (see Section~\ref{sec2}), we will briefly introduce here the minimum necessary notions for understanding the main results of the article. We consider simple, connected and undirected graphs $G=(V,E)$ with order $|G|=|V|$. Given a vertex $v\in V$ we denote its degree (the number of vertices adjacent to $v$) by $\delta(v)$ and define the maximal and minimal degrees of $G$ respectively by
\begin{equation}
\delta_{\mathrm{min}}(G)=\min_{v\in V}\delta(v)
\quadtext{and}
\delta_{\mathrm{max}}(G)=\max_{v\in V}\delta(v).
\end{equation}
Given a graph $G=(V,E)$ as above, a dimension $d\in\mathbb{N}$ ($d\geq2$) and a visibility parameter $p\in[0,1]$,  the \textit{isotropic network state} is the $|G|$-partite density matrix given by
\begin{equation}
\sigma_d(G,p)= \bigotimes_{e\in E}\rho_e(p,d),
\end{equation}where the subscript $e=\{u,v\}$ denotes that the isotropic state $\rho(p,d)$ (cf.\ Eq.\ (\ref{isotropicd}) below) is shared by the parties $u$ and $v$. The state $\sigma_d$ corresponds to a
$|G|$-partite state described by the graph $G$ where the vertices are identified with the parties and the edges are identified with bipartite isotropic states of parameter $p$, shared by the corresponding adjacent parties. Thus, for any given values of $p$ and $d$ we can describe any sequence of isotropic network states with growing size by a sequence $\cG=(G_n)_{n\in\N}$ of (finite, simple and connected) graphs $G_n=(V_n,E_n)$ such that the order
$|G_n|$ tends to infinity. The multipartite entanglement robustness properties that we want to study are precisely captured by the notions of asymptotically genuine multipartite entanglement (AGME) and asymptotic biseparability (ABS), as introduced in Definition \ref{def:asymptotic} below and that rigorously establish the intuitive ideas discussed in the previous paragraphs.

The main results in this article (where we use standard asymptotic notation) give sufficient conditions guaranteeing either AGME or ABS behaviors, which go well beyond the mere existence of the phenomenon that was established in \cite{ASGME}.

\

\begin{result}[High degree growth implies AGME;  cf. Theorem~\ref{th:degAGME}]
Any graph sequence $\cG=(G_n)_{n\in\N}$ such that $\delta_{\mathrm{min}}(G_n)\in\Omega(|G_n|)$ is AGME.\label{thm:1}
\end{result}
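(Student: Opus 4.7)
The plan is to reduce the statement to the asymptotic GME result for complete graphs established in \cite{ASGME}, which provides a threshold $p^*(d) < 1$ independent of $N$ such that $\sigma_d(K_N, p)$ is GME for every $p > p^*(d)$. The strategy is to show that, when $\delta_{\min}(G_n) \geq c |G_n|$, one can use local operations and classical communication (LOCC) to transform $\sigma_d(G_n, p)$ into a state at least as GME as $\sigma_d(K_{|G_n|}, p')$ for some $p' > p^*(d)$, uniformly in $n$. Since LOCC cannot produce GME from biseparable states, this would yield AGME.

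The LOCC reduction uses three ingredients. First, a graph-theoretic lemma: the hypothesis $\delta_{\min}(G_n) \geq c|G_n|$ forces $\mathrm{diam}(G_n) \leq D(c) = O(1/c)$, proved by a BFS argument in which each step visits $\Omega(|G_n|)$ new vertices until $V_n$ is exhausted. Second, entanglement swapping: for any two parties $u, v$, a path of length $\leq D(c)$ in $G_n$ can be used, via iterated swapping of the intermediate isotropic states, to obtain an effective isotropic state between $u$ and $v$ whose visibility depends only on $p$, $c$, $d$ and tends to $1$ as $p \to 1$. Third, entanglement distillation: since $|E_n| \in \Omega(|G_n|^2)$ provides many isotropic states above the distillability threshold $1/(d+1)$, one can amplify visibility arbitrarily close to $1$. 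Combining these, for $p$ close enough to $1$ (but independent of $n$), one produces effective isotropic states on the chosen pairs with visibility above $p^*(d)$.

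The main obstacle is the careful accounting of LOCC resources, since swapping consumes the path's edges and distillation consumes multiple copies. In some extreme configurations a direct reduction to $K_{|G_n|}$ fails: for instance, if $G_n$ consists of two cliques of size $|G_n|/2$ joined by a single bridge, then $\delta_{\min}(G_n) \approx |G_n|/2$ but the edge-connectivity $\lambda(G_n)$ equals $1$, and the bridge is consumed after one swap, precluding a full reduction. The remedy would be a more delicate composition argument: each ``dense block'' of $G_n$ yields a GME reduced state (applying the theorem inductively to the block), and the surviving bipartite entanglement along a bridge, together with the block-wise GME, would be shown to give overall GME via a convexity argument on appropriate GME measures. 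Note also that the naive fidelity witness $\alpha I - |\Psi_{G_n}\rangle\langle\Psi_{G_n}|$ with $|\Psi_{G_n}\rangle = \bigotimes_{e \in E_n} |\Phi^+\rangle_e$ does not suffice in general, since the biseparability bound $d^{-\lambda(G_n)}$ can be as weak as $1/d$ while $|E_n|$ grows quadratically.
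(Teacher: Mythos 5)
Your first half — bounded diameter from linear minimal degree, iterated noisy teleportation along short edge-disjoint paths, distillation, and the fidelity-based biseparability witness — is essentially the paper's Proposition~4.2 and Proposition~5.2, and it correctly disposes of the case where the edge-connectivity also grows linearly. You also correctly identify the genuine obstruction (two cliques joined by a bridge, where $\delta_{\min}\approx|G_n|/2$ but $\lambda=1$). The problem is that your proposed remedy for that case is where essentially all of the remaining work lies, and as stated it has two concrete gaps. First, ``each dense block yields a GME reduced state'' is too weak a conclusion to compose: a state whose reductions to each block are GME and which retains entanglement across the bridge can still a priori be a mixture of states each separable across \emph{some} cut. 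What the paper actually needs and proves (Lemma~4.4) is the much stronger quantitative statement that for a block satisfying the many-short-edge-disjoint-paths condition, the weight of the biseparable part in \emph{any} convex decomposition of the reduced state tends to $0$. The contradiction is then obtained by splitting the hypothetical biseparable decomposition of $\sigma_d(G_n,p)$ into bipartitions that cut some block (total weight $\to 0$ by the lemma) and bipartitions that respect all blocks (total weight bounded by $(2^{K-1}-1)\tfrac{d+1}{d}(1-p)<1/2$ via the best separable approximation of the isotropic states crossing the cut), violating normalization. Your ``convexity argument on appropriate GME measures'' does not identify this mechanism, and in particular does not explain why block-wise GME plus a single entangled bridge rules out global biseparability.

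Second, your induction ``apply the theorem to each dense block'' presupposes a decomposition of $G_n$ into blocks that are themselves highly edge-connected, with the number of blocks bounded uniformly in $n$ (the bound $2^{K-1}-1$ above must not grow). This is the content of the paper's Proposition~6.7, whose proof is a nontrivial recursion: one repeatedly deletes minimum edge-cuts, passes to subsequences, and tracks how the ratio $\delta_{\min}/|{\cdot}|$ improves at each split via the function $f_c(k)=c/(1-kc)$, which is what guarantees termination after at most $O(1/c)$ splits. Nothing in your sketch addresses why the splitting process terminates or why the blocks inherit linear minimal degree relative to their own size. Without both of these ingredients the bridge case is not handled, so the proof is incomplete.
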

\

\begin{result}[Low degree growth implies ABS; cf. Theorem~\ref{th:degABS}]
Any graph sequence $\cG=(G_n)_{n\in\N}$ such that $\delta_{\mathrm{max}}(G_n)\in o\left(\log|G_n|\right)$ is ABS.\label{thm:2}
\end{result}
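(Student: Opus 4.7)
The plan is to construct, for each sufficiently large $n$, an explicit convex decomposition of $\sigma_d(G_n,p)$ as a mixture of states each separable across some non-trivial vertex bipartition, thereby certifying biseparability. We may assume $p>1/(d+1)$, for otherwise $\rho(p,d)$ is itself separable and $\sigma_d(G_n,p)$ is fully separable. Under this hypothesis, use the convex identity
\[
\rho(p,d)\;=\;\alpha\,|\Phi^+\rangle\langle\Phi^+|\,+\,(1-\alpha)\,\rho_{\mathrm{sep}},\qquad \alpha=\frac{(d+1)p-1}{d}\in(0,1),
\]
where $\rho_{\mathrm{sep}}$ is the boundary separable isotropic state of visibility $1/(d+1)$. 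Tensoring over all edges of $G_n$ yields
\[
\sigma_d(G_n,p)\;=\;\sum_{S\subseteq E_n}\alpha^{|S|}(1-\alpha)^{|E_n|-|S|}\,\sigma_S,
\]
where $\sigma_S$ carries $|\Phi^+\rangle\langle\Phi^+|$ on edges $e\in S$ and $\rho_{\mathrm{sep}}$ on edges $e\notin S$; equivalently, $S$ is a Bernoulli$(\alpha)$ random subgraph of $G_n$.

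The first observation is that $\sigma_S$ is biseparable whenever the subgraph $(V_n,S)$ is disconnected: for any partition $C\,|\,V_n\setminus C$ separating two connected components, every crossing edge belongs to $E_n\setminus S$ and hence to the separable state $\rho_{\mathrm{sep}}$, while non-crossing edges contribute only product structure on one side. The only potential obstruction to biseparability therefore lies in the event $\mathcal B_n=\{S\subseteq E_n:(V_n,S)\text{ is connected}\}$. The low-max-degree hypothesis is engineered to make $\Pr[\mathcal B_n]$ vanish: using $\delta_{\max}(G_n)\in o(\log|G_n|)$, the expected number of isolated vertices in the random subgraph is at least $|G_n|\,(1-\alpha)^{\delta_{\max}(G_n)}=|G_n|^{1-o(1)}\to\infty$, and a second-moment estimate (the only positive covariances coming from adjacent vertex pairs, which are scarce under the degree hypothesis) then yields $\Pr[\mathcal B_n]\to 0$ as $n\to\infty$.

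The main obstacle is bridging the gap between \emph{vanishing probability of $\mathcal B_n$} and \emph{exact biseparability}: each $\sigma_S$ with $S\in\mathcal B_n$ is itself genuinely multipartite entangled and cannot simply be discarded. The plan to close the argument is to refine the decomposition so as to absorb the bad mass into the biseparable part: for each bad $S$ the contribution of $\sigma_S$ is redistributed by re-applying the $\alpha$-identity on a carefully selected subset of edges of $(V_n,S)$, producing an enlarged convex combination whose summands are all separable across some cut. The hypothesis $\delta_{\max}(G_n)\in o(\log|G_n|)$ enters quantitatively to guarantee that the resulting coefficients remain non-negative, which is possible only when $\Pr[\mathcal B_n]$ is small enough relative to the biseparable mass available for absorption. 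The principal technical challenge is to carry out this bookkeeping consistently for all connected configurations, in particular for graphs with many cycles where $\mathcal B_n$ is driven by many essentially different connected spanning subgraphs rather than by $S=E_n$ alone.
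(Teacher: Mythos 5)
Your overall strategy---expand $\sigma_d(G_n,p)$ as a tensor-product binomial sum, observe that most terms are biseparable, and then absorb the bad terms---is the same skeleton as the paper's proof of Theorem~\ref{th:degABS}. However, the step you yourself flag as ``the principal technical challenge'' is not a technicality: it is the entire mathematical content of the theorem, and your plan for it does not work as stated. The paper expands with respect to the decomposition $\rho(p,d)=p\,\phi^+_d+(1-p)\none_{d^2}$ (maximally mixed noise), classifies a term as good when some vertex $v$ has \emph{no} incident edge carrying $\phi^+$ (which makes the term separable across $\{v\}|\overline{\{v\}}$, i.e.\ $1$-biseparable---a cleaner and more local criterion than disconnectivity of the random subgraph), and then pairs each bad term $F_n$ with, for each vertex $v$, the unique good term obtained by deleting all edges at $v$. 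The weights $1/|G_n|$ and $(2^{\delta(v)}-1)^{-1}$ make the recombined local object at $v$ an (unnormalized) isotropic state in dimension $d^{|F_n(v)|}$, whose separability is exactly characterized by the threshold $1/(d^{|F_n(v)|}+1)$; the condition $\delta_{\max}(G_n)\in o(\log|G_n|)$ is precisely what makes this threshold inequality hold for large $n$. No probabilistic second-moment estimate is needed anywhere---the argument is a purely algebraic regrouping.

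Your choice of decomposition onto the \emph{boundary} separable state $\rho_{\mathrm{sep}}=\rho(1/(d+1),d)$ actively obstructs this absorption. Since $\rho_{\mathrm{sep}}$ sits exactly at the separability threshold, the mixture $\epsilon\,\phi^+_e+c\,(\rho_{\mathrm{sep}})_e$ on a single edge is entangled for every $\epsilon>0$, no matter how small $\epsilon/c$ is; so there is no ``separable mass available for absorption'' in the sense your plan requires, and the nonnegativity of coefficients is not the issue---the issue is that the recombined local states would simply fail to be separable. The paper's use of $\none_{d^2}$, which lies in the interior of the separable set with a quantitatively known separable neighbourhood (via the isotropic-state criterion), is what makes the bookkeeping close. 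Finally, your good event (disconnectivity of $(V_n,S)$) does not come with the bijective pairing structure between good and bad configurations that the isolated-vertex criterion provides, so even granting a better noise model, the ``carefully selected subset of edges'' redistribution would still have to be invented from scratch; as written, the proof has a genuine gap at its central step.
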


\

We complete the analysis with explicit constructions of graph sequences that show the sharpness of these results and the limitations of the degree growth to fully describe the AGME and ABS properties.

\

\begin{result}[Sublinear degree growth is compatible with ABS; cf. Theorem~\ref{th:highdegABS}]

For any function $f\colon\mathbb{N}\to\mathbb{N}$ such that $f(n)\in o(n)$, there exists an ABS graph sequence $\mathcal{G}=(G_n)_{n\in\mathbb{N}}$ such that
$\delta_{\mathrm{min}}(G_n)\in\Omega(f(|G_n|))$.\label{thm:3}
\end{result}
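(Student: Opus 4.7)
The plan is to build $G_n$ as a ``caterpillar of cliques'', in which every vertex sits inside a clique of size of order $f(|G_n|)$ (producing the prescribed minimum-degree growth), while the cliques themselves are joined only along a very sparse path backbone; asymptotic biseparability of the network state will then be inherited from the backbone via Result~\ref{thm:2}.

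Concretely, I set $m_n:=\max\{2,f(n)\}$ and $k_n:=\lfloor n/m_n\rfloor$, which tends to infinity by the assumption $f\in o(n)$. I then define $G_n$ by gluing to every vertex $v_j$ of a path $P_{k_n}=v_1v_2\cdots v_{k_n}$ a clique $K_{m_n}^{(j)}$ whose only common vertex with the rest of the construction is $v_j$ (any extra vertices needed to reach order exactly $n$ are absorbed into a single clique). Every vertex lies in some clique $K_{m_n}^{(j)}$, so its degree is at least $m_n-1$, and hence $\delta_{\mathrm{min}}(G_n)\in\Omega(f(|G_n|))$.

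To establish ABS, I use the disjoint decomposition of $E(G_n)$ into path edges and clique edges, which yields the tensor factorisation
\begin{equation}
\sigma_d(G_n,p)=\sigma_d(P_{k_n},p)\otimes\bigotimes_{j=1}^{k_n}\sigma_d(K_{m_n}^{(j)},p),
\end{equation}
where the qudits held by each vertex of $G_n$ are naturally grouped by incidence. Since $\delta_{\mathrm{max}}(P_{k_n})=2\in o(\log k_n)$, Result~\ref{thm:2} implies that for every fixed $p<1$ the backbone state $\sigma_d(P_{k_n},p)$ is biseparable once $n$ is large, so I may write $\sigma_d(P_{k_n},p)=\sum_i\lambda_i\tau_i$ with each $\tau_i$ separable across some bipartition $A_i\mid B_i$ of $\{v_1,\dots,v_{k_n}\}$.

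The remaining step is to lift these bipartitions to $G_n$: I extend $A_i\mid B_i$ to $\widetilde A_i\mid\widetilde B_i$ by placing the whole of $K_{m_n}^{(j)}$ on the side containing $v_j$, so that no clique edge crosses $\widetilde A_i\mid\widetilde B_i$. Then $\bigotimes_j\sigma_d(K_{m_n}^{(j)},p)$ is a product of states each supported on a single side, while $\tau_i$ remains separable across the induced partition; hence $\tau_i\otimes\bigotimes_j\sigma_d(K_{m_n}^{(j)},p)$ is separable across $\widetilde A_i\mid\widetilde B_i$, and averaging over $i$ with weights $\lambda_i$ realises $\sigma_d(G_n,p)$ as a biseparable state for every $p<1$ once $n$ is large enough. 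The main obstacle I foresee is exactly this lifting step, since the biseparable decomposition on the backbone may mix many different bipartitions and every one of them must be extendable to $G_n$ without creating a new crossing edge; the fact that the cliques are pendant subgraphs attached at single backbone vertices is precisely what makes this work simultaneously for \emph{every} bipartition appearing in the mixture.
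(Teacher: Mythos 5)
Your proof is correct and follows essentially the same strategy as the paper's: a sparse tree backbone decorated with cliques that enforce the minimum-degree growth, the tensor factorisation of $\sigma_d(G_n,p)$ over the disjoint edge sets, and biseparability inherited from the backbone via the tree/low-degree ABS result. The only differences are cosmetic — the paper links its cliques in a chain by bridge edges and phrases the lifting step as coarse-graining each clique into a single party, whereas you attach pendant cliques to a path and lift the bipartitions explicitly — and your direct parameter choice $m_n\approx f(n)$ slightly streamlines the paper's bookkeeping with the monotonised function $\tilde g$.
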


\

\begin{result}[Sublinear degree growth is compatible with AGME; cf. Theorem~\ref{th:lowdegAGME}]
For any $\alpha\in(0,1]$, there exists an AGME sequence of regular graphs $\cG=(G_n)_{n\in\N}$ such that $\delta_{\mathrm{max}}(G_n)\in O(|G_n|^\alpha)$.\label{thm:4}
\end{result}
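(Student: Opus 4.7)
Given $\alpha \in (0, 1]$, I set $k := \lceil 1/\alpha \rceil$ and take $G_n := K_n^{\Box k}$, the $k$-fold Cartesian power of the complete graph $K_n$. This yields a $k(n-1)$-regular graph on $|G_n| = n^k$ vertices, so $\delta_{\max}(G_n) = k(n-1) \leq k\, |G_n|^{1/k} \leq k\, |G_n|^\alpha \in O(|G_n|^\alpha)$; regularity and the degree bound are immediate. It therefore remains to show that, for each fixed $k$, the sequence $(K_n^{\Box k})_{n\in\N}$ is AGME.

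The key structural observation is the \emph{line decomposition} of $K_n^{\Box k}$: for each coordinate $i\in\{1,\dots,k\}$ and each $\mathbf{x}\in[n]^{k-1}$, the $n$ vertices that agree with $\mathbf{x}$ outside position $i$ span a clique isomorphic to $K_n$. Every edge belongs to exactly one such line, so $\sigma_d(G_n,p)$ factorises as a tensor product of $kn^{k-1}$ independent copies of $\sigma_d(K_n,p)$ across the lines. By Theorem~\ref{th:degAGME} applied to the complete-graph sequence, each line substate is GME across its $n$ parties whenever $p$ exceeds a universal threshold $p^{\star}_1 < 1$ that does not depend on $n$.

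The plan is then to lift this line-wise GME to genuine multipartite entanglement of the whole network by induction on $k$, writing $K_n^{\Box k}=K_n^{\Box(k-1)}\Box K_n$. The product is organised into $n$ disjoint ``layers'' (each a copy of $K_n^{\Box(k-1)}$ indexed by the new coordinate) joined by $n^{k-1}$ disjoint ``transverse'' cliques $K_n$, with every vertex belonging to exactly one layer and one transverse clique. From an inductively supplied GME witness on each layer and the witness from Theorem~\ref{th:degAGME} on each transverse clique, I would build a GME witness for the product by aligning their actions on the layer/transverse partition of qudits at each vertex; equivalently, one could exhibit an LOCC protocol that, with positive probability, distils a GME output by first extracting a GME resource from each transverse clique and then fusing it with the inductively supplied resources of the layers at the shared vertices.

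The principal obstacle is to control the visibility threshold $p^{\star}_k$ produced by this construction and show it stays strictly below $1$ as $n\to\infty$. Since GME is strictly stronger than entanglement across every cut, a bare cut-based argument cannot close the induction; the witness (or the distillation protocol) must exploit the wreath-product symmetry $S_n\wr S_k$ of $K_n^{\Box k}$ to reduce the biseparability question to a low-dimensional invariant subspace where the threshold can be controlled uniformly in $n$. The fact that $k=\lceil 1/\alpha\rceil$ is a fixed constant depending only on $\alpha$ is essential here: any multiplicative loss incurred at each inductive step needs to be absorbed only $k$ times, which keeps $p^{\star}_k$ strictly below $1$ independently of the size of the network.
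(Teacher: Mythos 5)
Your graph is exactly the one used in the paper's proof of Theorem~\ref{th:lowdegAGME}: the vertices of the paper's $G_n$ are the multi-indices in $[n]^k$ with edges at Hamming distance one, which is precisely $K_n^{\Box k}$, and your degree computation $\delta_{\max}(G_n)=k(n-1)=k(|G_n|^{1/k}-1)\in O(|G_n|^{\alpha})$ matches the paper's. The problem is the second half. Your argument for AGME is not a proof: the inductive ``lifting'' step --- from GME on each layer and each transverse clique to GME of the whole product --- is exactly the crux, and you leave it as a plan (``I would build a GME witness\dots''), explicitly flagging the control of the threshold $p^{\star}_k$ as ``the principal obstacle'' without resolving it. This is a genuine gap, not a routine verification. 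Gluing GME states supported on overlapping vertex sets into global GME is quantitatively delicate even in the simplest case: the paper's Proposition~\ref{th:lowlambda}, which glues just \emph{two} copies of $K_n$ along a single bridge edge, already requires Lemma~\ref{lemmaBSAto0} together with best-separable-approximation estimates to rule out biseparable decompositions whose weight hides in cross-cluster bipartitions. Nothing in your sketch addresses the analogous bookkeeping for $k$ nested layers, and the appeal to the wreath-product symmetry $S_n\wr S_k$ is only an aspiration; no invariant-subspace reduction is exhibited. A further soft spot: the line decomposition gives that $\sigma_d(G_n,p)$ is a tensor product of copies of $\sigma_d(K_n,p)$, each GME \emph{on its own $n$ parties}, but a tensor product of GME states on overlapping party sets is not automatically GME on the union, so the factorisation by itself buys nothing without the fusion argument you have not supplied.

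The fix is short and is what the paper does: bypass witnesses and induction entirely and verify the hypothesis of Proposition~\ref{prop:sufficientASGME} directly. For any two vertices $I,J$ of the Hamming graph one exhibits $n-1$ edge-disjoint paths of length at most $k+1$ (one ``greedy'' path correcting coordinates in order, plus $n-2$ detours through a fresh value of the first disagreeing coordinate), and since $n-1=|G_n|^{1/k}-1\in\omega(k\log n)=\omega(\log|G_n|)$ while the length bound $k+1$ is a universal constant, Proposition~\ref{prop:sufficientASGME} yields AGME with a single uniform threshold $p_0<1$ and no inductive loss to control. You already have all the structural understanding of $K_n^{\Box k}$ needed to write down these paths; I would recommend replacing the induction-on-$k$ plan with that argument.
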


\

Notice that the degree is directly related to the local dimension of the network state. Thus, Theorem~\ref{thm:2} proves that AGME requires that this quantity grows as the network gets bigger (and, in fact, it has to grow sufficiently fast). On the other hand, Theorem~\ref{thm:1} shows that if it grows fast enough, then AGME is guaranteed and Theorem~\ref{thm:3} shows that this latter condition cannot be improved. Theorem~\ref{thm:4} gives instances of asymptotically robust GME networks with the lowest overhead in terms of local dimension that we have been able to find and together with Theorem~\ref{thm:2} leaves only a relatively small gap for improvements. Moreover, Theorems~\ref{thm:3} and
\ref{thm:4} prove that the AGME and ABS behaviors cannot be fully characterized only in terms of the degree. In addition to this, we relate these notions to the connectivity of the graph sequence as measured by the vertex and edge connectivities, obtaining this same conclusion. In fact, when these quantities grow sufficiently fast they also guarantee AGME but give a weaker condition to that of Theorem~\ref{thm:1} (however, this serves as an important technical ingredient in the proofs of our main results). On the contrary, they can never give a sufficient condition for ABS. This is because we can construct minimally connected graph sequences which are nevertheless AGME (see Proposition~\ref{th:lowlambda} and Fig.~\ref{fig:mancuerna}). Thus, our analysis indicates that the AGME and ABS properties represent an alternative notion of connectivity in graphs that can be related to these and other relevant graph parameters, but that seems to be ultimately independent of them. We therefore think that this problem provides a fruitful mathematical interplay between entanglement theory and graph theory that deserves further study.

The article is organized as follows: in Section~\ref{sec2} we present basic material on multipartite quantum systems and graph theory needed later. In the following section we introduce in Definition~\ref{def:asymptotic} the notions of AGME and ABS and provide different characterizations and results in relation to these concepts.
In Section~\ref{sec3} we present a sufficient condition for AGME that reduces the problem to graph theoretic notions. In the next section we study the relation of AGME with vertex and edge connectivity and establish some results that we use later in the proof of our main results. In Section~\ref{secdeg} we obtain the sufficient conditions for AGME and ABS in terms of the degree growth introduced before as Theorems~\ref{thm:1} and \ref{thm:2} and in Section~\ref{subsecdeg3} we provide the explicit constructions that lead to Theorems~\ref{thm:3} and \ref{thm:4} above. We conclude with a discussion on our results and possible future research directions.

\section{Basic definitions and properties of quantum states and networks}\label{sec2}

In this section we review the notions of quantum theory and graph theory that will be used throughout the article. In our analysis we use as well standard asymptotic notation. Given two functions $f\colon\mathbb N\rightarrow\mathbb{R}$ and $g\colon\mathbb N\rightarrow[0,\infty)$ (such that $g(n)>0$ for large enough $n$) we write $f(n)\in O(g(n))$ if there exists a real constant $c>0$ and an integer $n_0\geq 1$ such that $|f(n)|\leq cg(n)$ for every $n\geq n_0$. We say that $f(n)\in \Omega(g(n))$ if $g(n)\in O(f(n))$. Equivalently, $f(n)\in O(g(n))$ (resp.\ $f(n)\in \Omega(g(n))$) if and only if (iff) $\limsup_{n\to\infty}(|f(n)|/g(n))<\infty$ (resp.\ $\liminf_{n\to\infty}(f(n)/g(n))>0$).
We write $f(n)\in o(g(n))$ if for every real constant $c>0$ there exists an integer $n_0\geq 1$ such that $|f(n)|\leq cg(n)$ for every $n\geq n_0$ and $f(n)\in \omega(g(n))$ if $g(n)\in o(f(n))$. Equivalently, $f(n)\in o(g(n))$ (resp.\ $f(n)\in \omega(g(n))$) iff $\lim_{n\to\infty}(f(n)/g(n))=0$ (resp.\ $\lim_{n\to\infty}(f(n)/g(n))=\infty$).

\subsection{Multipartite quantum systems}\label{sec21}

In this subsection we introduce the basic definitions and results on quantum states that will be needed in the article. Some standard references are \cite{Nielsen,Watrous}.

We will consider here finite-dimensional quantum systems. The (complex) Hilbert spaces $H$ will have dimension $d\in\{2,3,\dots \}$ and we denote by
$\{|i\rangle\}_{i=1}^d$ an orthonormal basis of $H$. Given $H$, quantum states are characterized by density matrices, that is, by linear operators $\rho\colon H\to H$ such that $\rho$ is positive semidefinite (from now on, $\rho\geq 0$) and $\tr\rho=1$.
The set of all density matrices will be denoted by $D(H)\subset B(H)$, where $B(H)$ is the set of all bounded linear operators on $H$. Quantum channels $\Lambda\colon D(H)\to D(H')$ are given by completely positive and trace preserving linear maps. Given $\rho,\sigma\in D(H)$, their \textit{fidelity} is defined by
\begin{equation}\label{fidelity}
F(\rho,\sigma)=\tr^2\sqrt{\sqrt{\rho}\sigma\sqrt{\rho}}.
\end{equation}

For bipartite quantum systems, the associated Hilbert space is $H=H_1\otimes H_2$. A state with density matrix $\rho\in D(H)$ is \textit{separable} if it can be written as
\begin{equation}\label{def:separable}
\rho=\sum_{j}p_j\sigma_j\otimes\tau_j,
\end{equation}
where for all $j$, $p_j\geq0$ with $\sum_jp_j=1$, and $\sigma_j\in D(H_1)$, $\tau_j\in D(H_2)$. The set of \textit{separable density matrices} in $D(H)$ is denoted by $S(H)$. Density matrices that are not separable are known as \textit{entangled}. It is well known that the set of separable states is closed under LOCC quantum channels (see \cite{locc} for a precise definition), i.e.\ $S(H)$ is LOCC-stable. We will often consider the following particular state $\rho(p,d)\in D(H_1\otimes H_2)$ with $\dim H_1=\dim H_2=d\in\mathbb{N}$ ($d\geq2$), which is known as the \textit{isotropic state} \cite{isotropic} and is given by
\begin{equation}\label{isotropicd}
\rho(p,d)=p\phi^+_d+(1-p)\none_{d^2},
\end{equation}
where $\one_{d^2}$ is the identity operator on $H=H_1\otimes H_2$ and
$\none_{d^2}=\one_{d^2}/d^2$ denotes the normalized identity operator.
 Moreover,
\begin{equation}\label{maxent}
|\phi^+_d\rangle=\frac{1}{\sqrt{d}}
 \sum_{i=1}^d|ii\rangle\in H_1\otimes H_2
 \quadtext{and}
 \phi^+_d=|\phi^+_d\rangle\langle\phi^+_d|\in D(H_1\otimes H_2)
\end{equation}
is the rank~$1$ projection corresponding to the $d$-dimensional \textit{maximally entangled state}. Here and in the following, we will use the short-hand $|ij\rangle=|i\rangle\otimes|j\rangle$. The parameter $p\in[0,1]$ in Eq.~(\ref{isotropicd}) is often referred to as
\textit{visibility} and it might be understood as a measure of the quality with which a maximally entangled state is prepared when our set-up introduces white noise. Notice that the \textit{completely depolarizing quantum channel} in $D(H_1\otimes H_2)$ given by
$\Lambda(X)=\tr(X)\none_{d^2}$, is an LOCC channel and that the set of LOCC channels is convex \cite{locc}. Thus, whenever $p'\leq p$ there always exists an LOCC channel $\Lambda$ such that $\Lambda(\rho(p,d))=\rho(p',d)$. Since $S(H)$ is a LOCC-stable subset of the state space, $\rho(1,d)$ is entangled and $\rho(0,d)$ is separable for every dimension $d$, we conclude that there is a \textit{critical visibility} $\overline{p}$ separating, for each isotropic state, entanglement from separability. Actually, it is shown in \cite{isotropic} that $\rho(p,d)$ is entangled iff
\begin{equation}\label{isotropicthreshold}
p>\frac{1}{d+1}.
\end{equation}

Any state $\rho\in D(H_1\otimes H_2)$ has a convex decomposition as
\begin{equation}\label{bsa1}
\rho=\lambda\sigma+(1-\lambda)\tau\;,
\end{equation}
where $\sigma\in S(H_1\otimes H_2)$ is separable, $\tau\in D(H_1\otimes H_2)$ and $\lambda\in[0,1]$. The notion of \textit{best separable approximation} (see \cite{BSA}) is defined as
\begin{equation}\label{bsa2}
    \mathrm{BSA}(\rho)=\max\left\{\lambda\in[0,1]\mid \rho=\lambda\sigma+(1-\lambda)\tau, \sigma\in S(H), \tau\in D(H)\right\}.
\end{equation}
The quantity $E_{\mathrm{BSA}}(\rho)=1-\mathrm{BSA}(\rho)$ is an \textit{entanglement measure} and it cannot increase by LOCC transformations (see \cite{pleniovirmani}).
In particular, results in \cite{BSAisotropic} show that for isotropic states given in Eq.~(\ref{isotropicd})
\begin{equation}\label{bsaisotropic}
\mathrm{BSA}(\rho(p,d))=\frac{d+1}{d}(1-p)\;,
\quadtext{for} p\geq 1/(d+1)\;.
\end{equation}

For \textit{multipartite quantum systems} with $N$ parties, we denote
$[N]=\{1,2,\ldots,N\}$ and the corresponding Hilbert space is $H=\bigotimes_{i=1}^N H_i$.
\begin{definition}
A density matrix $\chi\in D(H)$ on $H=\bigotimes_{i=1}^N H_i$ is said to be separable in the nontrivial
bipartition $M|\overline{M}$, (where $\emptyset\not=M\subsetneq[N]$ and
$\overline{M}$ is the complement of $M$ in $[N]$),
if it has a convex decomposition of the form
\[
 \chi=\sum_{j}p_j\sigma_j\otimes\tau_j\,,
 \quadtext{with} p_j\geq0 \quadtext{and}\sum_jp_j=1\;,
\]
and where $\sigma_j\in D\left(\bigotimes_{i\in M}H_i\right)$ and
$\tau_j\in D\left(\bigotimes_{i\notin M}H_i\right)$ for all $j$.
The set of all density matrices with this property is denoted by $S_M(H)$. The set of \textbf{biseparable density matrices (BS)}
is the convex hull of the union of these sets
\begin{equation}\label{def:bs}
\mathrm{BS}(H)=\mathrm{conv}\left\{\bigcup_{\emptyset\neq M\subsetneq[N]}S_M(H)\right\}\,.
\end{equation}
A state that is not biseparable is called \textbf{genuine multipartite entangled (GME)}.
\end{definition}
Note that a density matrix $\rho\in D(H)$ is biseparable iff
\begin{equation}\label{def:biseparable}
\rho=\sum_{\emptyset\neq M\subsetneq[N]}p_M\chi_M,
\end{equation}
where for any $M$ we have $p_M\geq0$ with $\sum_Mp_M=1$ and $\chi_M\in S_M(H)$. Notice that it suffices to consider in Eqs.~(\ref{def:bs}) and (\ref{def:biseparable}) only subsets $M$ that give rise to different bipartitions, but to simplify the notation we do not impose this constraint explicitly. It follows from the LOCC-stability of the set of separable states that the set of biseparable states is LOCC-stable as well.
In the proof of the main result in Section~\ref{subsecdeg1} it will be convenient to introduce the particular notion of \textit{$1$-biseparable states}, i.e., biseparable states where the convex decomposition in
Eq.~(\ref{def:biseparable}) runs over all subsets $M$ with cardinality one, i.e., $|M|=1$. The set of states which are separable in the bipartition
specified by $M=\{j\}$ are denoted simply by $S_j(H)$ and, hence, the set of 1-biseparable states is $\mathrm{conv}\{\bigcup_{j\in[N]}S_j(H)\}$.

Given then an $N$-party Hilbert space $H=\bigotimes_{i=1}^N H_i$ and
a non-empty subset $M\subsetneq[N]$, the \textit{partial trace} with respect to $\bigotimes_{i\in M}H_i$ (or simply with respect to $M$) is defined to be the linear operator
\begin{equation}
\tr_{(\bigotimes_{i\in M}H_i)}\colon B(H)\to
B\big(\otimes_{i\notin M}H_i\big)
\end{equation}
such that
\begin{equation}
\tr_{\left(\bigotimes_{i\in M}H_i\right)}\left(\bigotimes_{i=1}^NA_i\right)=\tr\left(\bigotimes_{i\in M}A_i\right)\bigotimes_{i\notin M}A_i,
\end{equation}
for any $A_i\in B(H_i)$.

\subsection{Quantum networks and graph theory}\label{sec22}

Since graph theoretical concepts play an important role in our analysis we begin recalling some basic notions and results on graphs needed later
(see e.g., \cite{diestel,bondy}).

\subsubsection{Graph theoretic aspects}\label{subsec:GT}
In this article, a graph is a finite and simple graph (no loops nor multiple edges are allowed) given by a pair $G = (V, E)$ consisting of two disjoint finite sets
$V$, the set of vertices or nodes, and $E$ the set of edges or links.
The elements of $E$ are unordered pairs $\{u, v\}$ of vertices $u, v \in V$ which we will also denote by
$e,f\dots$.
If $e=\{u,v\} \in E$, we say that $u$ is adjacent to $v$ (or that $u$ is a
neighbour of $v$ and we denote the vertex neighborhood of $v$ by $N(v)$); moreover, we say that $e$ is adjacent to $u$ (and to $v$).
The cardinality of the vertex set is the order of the graph which we denote by $|G|=|V|$. Sometimes it will be convenient to numerate the vertices and identify $V=[|G|]$.
A path is a simple graph whose vertices can be arranged in a linear sequence so that two vertices are adjacent if they are consecutive in the sequence and are nonadjacent otherwise. A graph $G$ is connected if any pair of vertices can be linked by a path in $G$. A maximal connected subgraph of $G$
is called a connected component of $G$.
The distance $d_G(u,v)$ between two vertices $u,v\in V$ of a connected graph is the minimal length of the paths in $G$ joining $u$ with $v$. The diameter of the graph is then defined by
\begin{equation}
 \mathrm{diam}(G):=\max_{u,v\in V}d_G (u,v)\,.
\end{equation}

Given two subsets of vertices $V_1,V_2\subset V$, we denote the set of $V_1$-$V_2$ edges by
\begin{equation}
E(V_1,V_2):=\{ \{v_1,v_2\} \mid v_i \in V_i \;,\; i=1,2 \}
 \quadtext{and}
 E(V_1):=E(V_1,V_1);.
\end{equation}
We also denote, for simplicity, all edges adjacent to a vertex $v\in V$ as
\begin{equation}
 E(v):=E\big(\{v\},V\big)\quadtext{and define the degree of $v$ by} \delta(v):=|E(v)|\;.
\end{equation}
The minimal and maximal degree of $G$ are given respectively by
\begin{equation}\label{maxmindegree}
\delta_{\mathrm{min}}(G)=\min_{v\in V}\delta(v)
\quadtext{and}
\delta_{\mathrm{max}}(G)=\max_{v\in V}\delta(v).
\end{equation}
A subgraph $G'$ of a graph $G$ is a graph formed from a subset of the vertices and of edges of $G$. A subgraph is called \textit{spanning} if
$V'=V$, i.e., it contains all vertices from $G$.
Edge deletion is a natural procedure to obtain a subgraph from a given graph. An induced subgraph $G'$ of $G$ is a subgraph that includes all the edges with adjacent vertices in $V'$. Given the vertex subset
$V'\subset V$ the subgraph induced by $V'$ is the subgraph $G'=(V',E')$ with
$E'=E(V')$. Vertex deletion of $G$ specifies an induced subgraph.

In this article, the graph connectivity of the graph associated to a network state will play an important role. There are two basic notions of connectivity for a graph $G$:
the vertex-connectivity $\kappa(G)$ and the edge-connectivity $\lambda(G)$
(see e.g. \cite{diestel,bondy}).
Recall that an edge-cut of $G$ (and similarly for a vertex-cut) is a subset of edges that, when removed, the remaining graph has at least two connected components. If a single edge provides and edge-cut of the graph we call it
bridge-edge. The edge-connectivity of a
graph is the size of a smallest edge-cut. Formally, for any graph $G$ such that $|G|>1$ we define
\begin{align}
\kappa(G) &=\min\left\{|W| \Bigm\vert  G'=\left(V\setminus W,E \setminus
\left(\bigcup_{v\in W}E(v)\right)\right)\textrm{ is disconnected or } |G'|=1
\right\}
\\
\lambda(G) &=\min\left\{|F| \mid G'=(V,E\backslash F)\textrm{ is disconnected}\right \}.
\end{align}
It is a well-known fact that $\delta_{\mathrm{min}}(G)\geq\lambda(G)\geq\kappa(G)$. Since large vertex-connectivity implies large
edge-connectivity, we will focus in this article on the latter notion. The complete graph of order $n$, $K_n=(V_n,E_n)$, can de defined by setting $|V_n|=n$ and $E_n=\{V'_n\subset V_n:|V'_n|=2\}$ (i.e.\ each pair of vertices is connected by an edge) and a tree is a graph in which every pair of distinct vertices is connected by exactly one path. Intuitively, the complete graph should be the most connected graph and trees should be poorly connected. Indeed, the maximal value of the edge-connectivity for simple graphs of order $n$ is attained only by the complete graph with $\lambda(K_n)=n-1$, while trees provide examples of connected graphs with minimal edge-connectivity, i.e.\ $\lambda(G)=1$. Graphs such that $\delta_{\mathrm{min}}(G)=\lambda(G)$ are usually referred to as
maximally edge-connected. It is known
(see \cite{chartrand}) that if
\begin{equation}\label{maxedgeconnected}
\delta_{\mathrm{min}}(G)\geq\left\lfloor\frac{|G|}{2}\right\rfloor
\end{equation}
holds, then $G$ is maximally edge-connected. Connectivity has also a local formulation.
Given two distinct vertices $u,v\in V$, denote by $\lambda_G(u,v)$ the maximal number of edge-disjoint paths (i.e., paths not sharing any edge) that connect $u$ and $v$. Menger's theorem (see e.g. \cite[Theorem~3.3.1]{diestel}) provides a dual characterization for the connectivities. In the case of the edge-connectivity, it states that
 \begin{equation}\label{menger}
 \lambda(G)=\min_{u\not= v\in V}\lambda_G(u,v).
 \end{equation}

\subsubsection{Noisy quantum networks}\label{subsed:noisy}
As mentioned before, quantum network states are characterized by a graph. Each vertex represents a party, while each edge represents that the corresponding parties share a bipartite state. In our model, we consider for simplicity that all edges stand for an isotropic state of the same given dimension $d$ and visibility $p$. Nevertheless, our techniques could be used in principle for a much more general class of models. Thus, given a graph $G=(V,E)$, a dimension $d\in\mathbb{N}$ ($d\geq2$) and a visibility parameter $p\in[0,1]$, the \textit{isotropic network state} is a $|G|$-partite density matrix on $H=\bigotimes_{v\in V}H_v=\bigotimes_{i=1}^{|G|}H_i$ (recall that, as we have mentioned in the beginning of Sec.~\ref{subsec:GT}, we will often identify $V$ with $[|G|]$) given by
\begin{equation}\label{networkGdp}
\sigma_d(G,p)= \bigotimes_{e\in E}\rho_e(p,d),
\end{equation}
where the subscript $e=\{u,v\}$ denotes that the isotropic state is shared by the
parties $u$ and $v$ (see Fig.~\ref{fig:split}). The analysis of the local structure of the Hilbert spaces can be refined by noting that for each edge $e=\{u,v\}$ the Hilbert space of the isotropic state
$\rho_e(p,d)\in D(H_e)$ can be split as
$H_e=H_{ve}\otimes H_{ue}$ and, for each
$v\in V$, we have that
\begin{equation}\label{eq:ie}
H_v=\bigotimes_{e\in E(v)}H_{ve}
\quadtext{and}
\dim H_v=d^{\delta(v)}\;.
\end{equation}
By the handshaking lemma we then have that
$
 \dim(H)=d^{\left(\sum_{i=1}^N \delta(v_i)\right)}=d^{2|E|}\;.
$
\vspace{-0.2cm}

\begin{figure}[h]
 \centering
 \begin{tikzpicture}[scale=0.70]
    \node[draw, circle, inner sep=1pt, minimum size=4pt, fill=black, label=below:$u$] (u) at (-2,-2) {};

    \node[draw, circle, inner sep=1pt, minimum size=4pt, fill=black, label=below:$w$] (w) at (2,-2) {};

    \node[draw, circle, inner sep=2pt, minimum size=13pt, fill=white, label=below:$v$] (v) at (0,0) {};

    \node[draw, dashed, dash pattern=on 2pt off 2pt, circle, inner sep=1pt, minimum size=5pt, fill=white] (v1) at (-0.15, 0) {};
    \node[draw, dashed, dash pattern=on 1.5pt off 1.5pt, circle, inner sep=1pt, minimum size=5pt, fill=white] (v2) at (0.15, 0) {};

    \draw (u) -- node[midway, above left] {$e_1$} (v1);
    \draw (w) -- node[midway, above right] {$e_2$} (v2);
\end{tikzpicture}
\caption{Isotropic network state of three parties specified by a path graph. The vertices $u,v,w$ specify the parties and the edges $e_1$ and $e_2$ that the corresponding pair of parties share a bipartite isotropic state. Notice that $u,v$ represented in black hold a single qudit, while party $v$ holds two qudits coming from different pairs of isotropic states. We represent this by the two white-dashed circles inside a white vertex representing $v$.}
\label{fig:split}
\end{figure}

We will often use later that that if $G'$ is the subgraph of
$G$ corresponding to deleting all edges in $F\subset E$ and all vertices $v$
such that $E(v)\subset F$, it then follows that
\begin{equation}\label{edgedeletion}
    \tr_{\left(\bigotimes_{e\in F}H_e\right)}\left(\sigma_d(G,p)\right)=\sigma_d(G',p)\;.
\end{equation}

Note that if the graph $G=G^{(1)}\sqcup G^{(2)}$ has two (disjoint) connected components then, for every $p$ and $d$, $\sigma_d(G,p)$ is trivally BS as it is separable in a bipartition specified by the vertices of one of the connected components. If we study this question for a connected graph $G$ as a function of the noise parameter $p$ and for a given $d$ we always have some robustness for both GME and BS. If for some party $i$ it holds that $\rho_e\in S(H_e)$ for all $e\in E(i)$, then $\bigotimes_{e\in E}\rho_e\in S_i(H)$ and it is therefore biseparable. Thus, it follows from Eq.~(\ref{isotropicthreshold}) that $\sigma_d(G,p)$ is biseparable if $p\leq1/(d+1)$. On the other hand, it can be easily shown that, under the assumption that the graph is connected, $\sigma_d(G,p)$ is always GME if $p=1$ (see \cite{ASGME}). Since the set of biseparable states is closed (in the norm topology),
we then have that for every given $d$ and $G$, there exists $\epsilon>0$ such that the GME property stays true for $p>1-\epsilon$.

We conclude this section recalling some important properties of isotropic network states that will be needed later.

\begin{proposition}\label{prop:facts}
Let $G$ be a graph and consider a fixed dimension $d$. Then
\begin{itemize}
 \item[(i)] \label{fact1}
 there exists $\epsilon>0$ such that $\sigma_d(G,p)$ is GME if $p>1-\epsilon$;
 \item[(ii)] \label{fact2}
 if $\sigma_d(G,p)$ is BS, then $\sigma_d(G,p')$ is BS for all $p'\leq p$.
\end{itemize}
\end{proposition}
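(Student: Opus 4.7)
The plan is to prove (i) as an openness/continuity statement and (ii) as a direct consequence of LOCC-stability. For (i), the first step is to note that $\mathrm{BS}(H)$ is closed in $D(H)$: each $S_M(H)$ is closed (it is the set of separable states of a bipartite system, which in finite dimensions is well known to be closed), the union over the finitely many nontrivial bipartitions $M$ is compact, and its convex hull $\mathrm{BS}(H)=\conv\bigl\{\bigcup_M S_M(H)\bigr\}$ is therefore also compact. Hence the set of GME states is open. Since the map $p\mapsto\sigma_d(G,p)=\bigotimes_{e\in E}\rho_e(p,d)$ is continuous in $p$ (affine on each tensor factor) and $\sigma_d(G,1)$ is GME whenever $G$ is connected (as recalled right before the statement, citing \cite{ASGME}), there is an open neighborhood of $p=1$ on which $\sigma_d(G,p)$ remains outside $\mathrm{BS}(H)$, yielding an $\epsilon>0$ as claimed.

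For (ii), the plan is to build an explicit LOCC channel that maps $\sigma_d(G,p)$ to $\sigma_d(G,p')$ for every $p'\leq p$, and then invoke LOCC-stability of $\mathrm{BS}(H)$. The text already records that on each edge-Hilbert space $H_e=H_{ve}\otimes H_{ue}$ there exists an LOCC channel $\Lambda_e$ satisfying $\Lambda_e(\rho_e(p,d))=\rho_e(p',d)$: indeed, $\Lambda_e$ can be realized as a convex combination (with weight $p'/p$, trivial if $p=0$) of the identity and the completely depolarizing channel, both of which are LOCC, and the set of LOCC channels is convex. Acting only on the two parties $u,v$ adjacent to $e$, each $\Lambda_e$ is a local LOCC operation in the full $|G|$-party system. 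Composing the channels $\Lambda_e$ independently over all edges yields a multipartite LOCC channel $\Lambda$ which, by the tensor-product structure of the isotropic network state in Eq.~(\ref{networkGdp}), satisfies $\Lambda(\sigma_d(G,p))=\sigma_d(G,p')$. Since $\mathrm{BS}(H)$ is LOCC-stable, the hypothesis $\sigma_d(G,p)\in\mathrm{BS}(H)$ immediately implies $\sigma_d(G,p')\in\mathrm{BS}(H)$.

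Neither step involves a serious technical obstacle. In (i) one only needs to remain attentive to the implicit assumption that $G$ is connected, since otherwise $\sigma_d(G,p)$ is biseparable for every $p$ and no $\epsilon>0$ exists. In (ii) the mild subtlety is that the single-edge channels $\Lambda_e$ must be interpreted as local channels in the multipartite setting with all remaining parties performing the identity; once this is accepted, the product form of $\sigma_d(G,p)$ along the edges reduces the whole argument to essentially a one-line computation.
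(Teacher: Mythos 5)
Your proof is correct and follows essentially the same route as the paper: part (i) is obtained from the closedness of $\mathrm{BS}(H)$ together with the GME of $\sigma_d(G,1)$ for connected $G$, and part (ii) from the edge-wise LOCC degradation $\rho(p,d)\mapsto\rho(p',d)$ combined with LOCC-stability of the biseparable set. Your added details (compactness of the convex hull, the explicit convex combination of the identity and the depolarizing channel with weight $p'/p$, and the caveat that $G$ must be connected for (i)) are all accurate elaborations of the paper's two-line argument.
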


Part (i) follows from the fact that $\sigma_d(G,1)$ is GME for all $d$ for every connected graph $G$ and that the set of biseparable states is closed.
Part (ii) shows a hereditary property of biseparability for isotropic states
in relation with the visibility parameter. It
holds because $\rho(p,d)$ can be transformed by LOCC into $\rho(p',d)$ for all
$p'\leq p$ and because the set of BS states is LOCC-stable. Taking into account the above results, given any graph $G$ and a fixed dimension $d$,
we introduce the critical visibility for isotropic network states (see Fig.~\ref{fig:criticalvis})
\begin{equation}\label{eq:pbar}
\overline{p_d(G)}:=\max\{p\mid \sigma_d(G,p)\in \mathrm{BS}(H)\}
\quadtext{and one has}\overline{p_d(G)}\in[1/(d+1),1)\;.
\end{equation}
The lower bound of the interval is a consequence of Eq.~(\ref{isotropicthreshold}) and the upper bound follows from the fact that $\sigma_d(G,1)$ is GME. In particular, given a graph sequence $\cG=(G_n)_n$ and a dimension $d$, the corresponding sequence
$\left(\overline{p_d(G_n)}\right)_n$ is bounded.

\begin{figure}[h]
\centering
\begin{tikzpicture}[scale=0.7]
    \draw[thick] (0,0) -- (7.5,0);
    \draw[ultra thick] (7.5,0) -- (10,0);

    \draw[thick] (0,0.2) -- (0,-0.2);   
    \draw[thick] (10,0.2) -- (10,-0.2); 
    \draw[thick] (3.5,0.2) -- (3.5,-0.2); 
    \draw[thick] (7.5,0.2) -- (7.5,-0.2); 

    \node[below] at (0,-0.2) {0};
    \node[below] at (10,-0.2) {1};

    \node[above] at (1.75,0.2) {\small BS};
    \node[above] at (5.25,0.2) {\small BS};
    \node[above] at (8.75,0.2) {\small {\bf GME}};

    \node[below] at (3.5,-0.2) {$\frac{1}{d+1}$};
    \node[below] at (7.5,-0.2) {$\overline{p(G)}$};
\end{tikzpicture}
 \caption{Critical visibility dividing the isotropic network state given by the graph $G$ into BS and GME regions as a function of $p\in[0,1]$. More specifically, the state is biseparable when $p\leq\overline{p_d(G)}$ (and GME otherwise) and we know, in addition, that $\overline{p_d(G)}\in[1/(d+1),1)$.}
\label{fig:criticalvis}
\end{figure}


\section{Asymptotic BS and GME}\label{sec:asymptotic}

In the preceding subsection we already considered certain robustness properties of the network state $\sigma_d(G,p)$ for a given graph $G$ and dimension $d$
as a function of the visibility $p$. In this article, however, we are interested in a different notion of robustness where the parameters $d$ and $p$ are fixed but the graph $G$ is allowed to change.
This notion captures the more practical situation in which the current state of technology provides us with devices that prepare isotropic states in a certain dimension with a certain visibility, which will always be smaller than one. The question we are interested in is which network patterns we should aim at in order to have a GME state under a promise in the visibility. If the number of parties is fixed, given the preceding observations and the fact that the number of simple connected graphs of a given order is always finite, we know that if the experimentalists work hard enough to raise the visibility above a certain threshold we can always produce a GME state of this number of parties in this way. However, the technological applications of quantum networks place no bound on their size and we might want to build them bigger with time. Thus, given $d$, is there a value  $p_{0}$ of the visibility, such that there always exists an isotropic network state of any number of parties which is GME if $p\geq p_{0}$? If so, which network configurations have this property? Notice then that the study of asymptotic robustness of GME boils down to the specification of a graph sequence $\cG=(G_n)_{n\in\N}$ that describes the network as the size grows. For simplicity (and to avoid repetitions in definitions, theorems etc.) we will simply call a \textit{graph sequence} $\cG$ a sequence of \textit{finite, simple, connected} graphs satisfying that $\lim_n|G_n|=\infty$.

\begin{definition}\label{def:asymptotic}
 Consider a graph sequence $\cG=(G_n)_{n\in\N}$.
\begin{itemize}
 \item[(i)]
 $\cG$ is \textbf{asymptotically genuine multipartite entangled (AGME)} in a given dimension $d$ if there exists $p_0\in(1/(d+1),1)$ such that for all $p\geq p_0$ and for all $n\in\N$ the isotropic network state $\sigma_d(G_n,p)$ is GME. If the graph sequence $\cG$ is AGME for any given $d\geq2$, we say simply that $\cG$ is AGME.

 \item[(ii)]
$\cG$ is \textbf{asymptotically biseparable (ABS)} in a given dimension $d$ if for all $p\in(1/(d+1),1)$ there exists $n_0\in\mathbb{N}$ such that for all $n\geq n_0$ the state $\sigma_d(G_n,p)$ is BS. If $\cG$ is ABS for any given $d\geq2$, we say that $\cG$ is ABS.
\end{itemize}
\end{definition}

It follows immediately from these definitions that the asymptotic notions of BS and GME are stable under taking subsequences and we will often make use of this fact. Notice that, given a graph sequence $\cG=(G_n)_{n\in\N}$, any subsequence can be specified by a strictly increasing function $s\colon\N\to\N$ and it will be denoted by $\cG_s=(G_{s(n)})_{n\in\N}$.

ABS graph sequences come with a clear practical limitation. Independently of how evolved technology is, in practice it will always hold that $p<1$. Thus, no matter how hard experimentalists work to raise this threshold, there will always be an upper bound to the number of parties that can be prepared in a GME state in this way. On the other hand, if a graph sequence is AGME, even though $p_0$ might be very close to 1, GME is asymptotically robust: if the devices are sufficiently improved so as to fulfill that $p \geq p_0$, then following the connection pattern given by $\cG$ we can build a network that is GME for any system size. Remarkably, it is not only unclear a priori how to verify these properties for a given graph sequence, but there is no immediate reason in principle to ascertain what the general picture here is. Do some graph sequences have the AGME property while others do not or does one of the two extreme cases hold in which all graph sequences are AGME or all graph sequences are ABS? The answer to this question is the main result of \cite{ASGME}: it turns out that a sequence of complete graphs of increasing order is AGME, while any sequence of tree graphs of increasing order is ABS. From the theorist point of view, this is the most interesting scenario: AGME is possible, but not trivially possible. Thus, despite answering the above question, \cite{ASGME} leaves completely open which graph sequences are robust outside of the two extreme cases mentioned above. In this work we present a much more in-depth analysis of which graph sequences lead to the AGME or the ABS properties. For this purpose we present next some equivalent characterizations that stem from the above definitions.

\begin{proposition}\label{characterizationAGME}
Let $\cG=(G_n)_n$ be a graph sequence and denote by
$\overline{p_d(G_n)}\in\left[1/(d+1),1\right)$ the
corresponding sequence of threshold visibilities. Then, the following conditions are
equivalent:
\begin{itemize}
\item[(a)] $\cG$ is AGME in a given dimension $d$;
\item[(b)] there exists $p_0\in(1/(d+1),1)$ such that for any $n\in\N$ the state
$\sigma_d(G_n,p_0)$ is GME;
\item[(c)] there exist $ p_0\in(1/(d+1),1)$ and $ n_0\in\mathbb{N}$ such that
for all $n\geq n_0$ the state $\sigma_d(G_n,p_0)$ is GME;
\item[(d)] the sequence of threshold visibilities satisfies
$\sup_{n}\overline{p_d(G_n)}<1\;;$
\item[(e)] the sequence threshold visibilities satisfies
$\limsup_{n\to\infty}\overline{p_d(G_n)}<1\;.$
\end{itemize}
\end{proposition}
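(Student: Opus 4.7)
The plan is to prove the cyclic chain $(a) \Rightarrow (b) \Rightarrow (c) \Rightarrow (e) \Rightarrow (d) \Rightarrow (a)$. The whole argument rests on the definitions plus the hereditary property of Proposition~\ref{prop:facts}(ii) which, combined with Eq.~(\ref{eq:pbar}), states that $\sigma_d(G_n,p)$ is GME iff $p > \overline{p_d(G_n)}$.

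The implications $(a)\Rightarrow(b)$ (take $p=p_0$) and $(b)\Rightarrow(c)$ (take $n_0=1$) are immediate from Definition~\ref{def:asymptotic}. For $(c)\Rightarrow(e)$, the GME of $\sigma_d(G_n,p_0)$ for all $n\geq n_0$ translates, via the characterisation above, to $\overline{p_d(G_n)}<p_0$ for all such $n$, whence $\limsup_{n}\overline{p_d(G_n)}\leq p_0<1$. For $(d)\Rightarrow(a)$, set $P:=\sup_n\overline{p_d(G_n)}$; by hypothesis $P<1$ and by Eq.~(\ref{eq:pbar}) $P\geq 1/(d+1)$, so any $p_0\in(P,1)\subset(1/(d+1),1)$ witnesses AGME in dimension $d$ because $p\geq p_0>P\geq\overline{p_d(G_n)}$ for every $n$ and every $p\geq p_0$, which forces $\sigma_d(G_n,p)$ to be GME.

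The only step calling for a (tiny) extra observation, and therefore the closest thing to an obstacle, is $(e)\Rightarrow(d)$: one has to upgrade a $\limsup$ bound to a genuine $\sup$ bound. My plan is to pick $L\in(\limsup_n\overline{p_d(G_n)},1)$ and an index $n_0$ with $\overline{p_d(G_n)}\leq L$ for $n\geq n_0$; by Eq.~(\ref{eq:pbar}) each of the finitely many remaining values $\overline{p_d(G_1)},\ldots,\overline{p_d(G_{n_0-1})}$ is strictly smaller than $1$, so their maximum $M$ also satisfies $M<1$ and therefore $\sup_n\overline{p_d(G_n)}\leq\max(L,M)<1$. The strict inequality $\overline{p_d(G_n)}<1$ at each fixed $n$ (a direct consequence of Proposition~\ref{prop:facts}(i)) is exactly what powers this passage; the remainder of the proof is a direct unfolding of Definition~\ref{def:asymptotic} together with Eq.~(\ref{eq:pbar}), and I do not anticipate any substantive difficulty.
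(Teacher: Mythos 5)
Your proof is correct and uses essentially the same ingredients as the paper's: the hereditary property of Proposition~\ref{prop:facts}(ii) to identify GME with $p>\overline{p_d(G_n)}$, the finite-maximum trick over the initial indices $n<n_0$ (justified by $\overline{p_d(G_n)}<1$ for each fixed $n$), and the standard $\limsup$ manipulation. The only difference is organizational — you arrange the implications into a single cycle $(a)\Rightarrow(b)\Rightarrow(c)\Rightarrow(e)\Rightarrow(d)\Rightarrow(a)$, whereas the paper proves several sub-chains — which is a matter of presentation, not substance.
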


\begin{proof} The chain of implications (a) $\Rightarrow$ (b) $\Rightarrow$ (c) is immediate. To show that (c) $\Rightarrow$ (d), recall that the assumption implies that there exists an $ n_0\in\mathbb{N}$ such that $\overline{p_d(G_n)}<p_0<1$ for all $ n\geq n_0$ and define
\begin{equation}
p_1:=\max_{n<n_0}\overline{p_d(G_n)}\;.
\end{equation}
From Eq.~(\ref{eq:pbar}) we have  that $\overline{p_d(G_n)}\in[1/(d+1),1)$ for every $n$, which implies $p_1<1$ and
$\overline{p_d(G_n)}\leq\max\{p_0,p_1\}<1$ as claimed.

The implication (d) $\Rightarrow$ (e) follows immediately from the fact that for every sequence its $\limsup$ is bounded above by its supremum.

Next, we show that (e) $\Rightarrow$ (c). From the definition of the
$\limsup$ we have that
\begin{equation}
\inf_n\sup_{m\geq n}\overline{p_d(G_{m})}=p_0<1.
\end{equation}
Putting $p_1=(p_0+1)/2$ we have $p_0<p_1<1$
and, therefore, from the definition of infimum there exists a $n_0\in\mathbb{N}$ such that
\begin{equation}
\sup_{n\geq n_0}\overline{p_d(G_n)}<p_1.
\end{equation}
Hence, $\sigma_d(G_n,p_1)$ is GME for all $ n\geq n_0$ as claimed.

We conclude the proof by showing that (c) $\Rightarrow$ (b) $\Rightarrow$ (a).
The second implication is a direct consequence of
Proposition~\ref{prop:facts}~(ii). To show (c) $\Rightarrow$ (b) we reason as in the implication
(c) $\Rightarrow$ (d). Take now $p_0$ and $n_0$ to be defined as in the statement of (c) and let
\begin{equation}
p_1=\max_{n<n_0}\{\overline{p_d(G_n)}\}<1,\quad p_2=\frac{p_1+1}{2}<1.
\end{equation}
Therefore, for all $n$ we have
$\overline{p_d(G_n)}<\max\{p_0,p_2\}=p_3<1$ and so for all $n$
the state $\sigma_d(G_n,p_3)$ is GME.
\end{proof}

The next result gives a characterization of the ABS property for a sequence of graphs.

\begin{proposition}\label{characterizationABS}
The graph sequence $\cG=(G_n)_{n\in\N}$ is ABS in a given dimension $d$ iff \linebreak
$\lim_{n\to\infty}\overline{p_d(G_n)}=1.$
\end{proposition}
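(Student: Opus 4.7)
The plan is to use the elementary fact that, by Proposition~\ref{prop:facts}(ii) and the definition~(\ref{eq:pbar}) of $\overline{p_d(G_n)}$ as a maximum, the state $\sigma_d(G_n,p)$ is BS iff $p\leq\overline{p_d(G_n)}$. With this reformulation the ABS property translates into an elementary statement about the convergence of the sequence of threshold visibilities to $1$, and the equivalence is a direct consequence of unwinding the quantifiers.

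For the easy direction ($\Leftarrow$), assume $\overline{p_d(G_n)}\to 1$ and fix any $p\in(1/(d+1),1)$. Since $p<1$, there exists $n_0\in\mathbb{N}$ such that $\overline{p_d(G_n)}\geq p$ for all $n\geq n_0$; by the equivalence above this means $\sigma_d(G_n,p)\in\mathrm{BS}(H)$ for all $n\geq n_0$, so $\cG$ is ABS.

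For the converse ($\Rightarrow$), I would argue by contrapositive. Suppose that $\overline{p_d(G_n)}$ does not converge to $1$. Since by Eq.~(\ref{eq:pbar}) the sequence is contained in $[1/(d+1),1)$, non-convergence to the upper endpoint means $\liminf_{n\to\infty}\overline{p_d(G_n)}<1$: there exist $L<1$ and a strictly increasing function $s\colon\N\to\N$ such that $\overline{p_d(G_{s(n)})}\leq L$ for all $n$. Pick any $p$ with $\max\{L,1/(d+1)\}<p<1$. Then for every $n$ we have $p>\overline{p_d(G_{s(n)})}$, so $\sigma_d(G_{s(n)},p)$ is GME. In particular, no $n_0$ can guarantee BS for all $n\geq n_0$, which contradicts the ABS property applied to this particular $p$. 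Hence ABS forces $\lim_{n\to\infty}\overline{p_d(G_n)}=1$.

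The argument is essentially bookkeeping, so there is no real obstacle; the only point that must be handled cleanly is the passage from ``does not converge to $1$'' to ``$\liminf<1$'', which uses that the threshold visibilities are already bounded above by $1$ thanks to Eq.~(\ref{eq:pbar}) (so that failure of convergence to $1$ is the same as the existence of a subsequence bounded away from $1$).
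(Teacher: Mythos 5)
Your proof is correct and follows essentially the same route as the paper's: both arguments reduce the claim to the observation that, by Proposition~\ref{prop:facts}~(ii) and the definition of $\overline{p_d(G_n)}$ as a maximum, $\sigma_d(G_n,p)$ is BS iff $p\leq\overline{p_d(G_n)}$, after which the equivalence is pure quantifier bookkeeping using that the thresholds lie in $[1/(d+1),1)$. Your write-up merely makes explicit (via the contrapositive and the passage from non-convergence to $\liminf<1$) what the paper compresses into a single sentence.
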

\begin{proof}
This is immediate given that $\overline{p_d(G_n)}\in[1/(d+1),1)$ for all $n$.
In fact, the sequence $\cG$ being ABS implies that
for any $\epsilon>0$ there exists $n_0\in\mathbb{N}$ such that for
all $n\geq n_0$ the state $\sigma_d(G_n,1-\epsilon)$ is BS.
By the hereditary property of biseparability stated in Proposition~\ref{prop:facts}~(ii), we conclude that being ABS is equivalent
to the following condition: for any $\epsilon>0$ there exists $n_0\in\mathbb{N}$ such that for all $n\geq n_0$ we have
$|\overline{p_d(G_n)}-1|\leq\epsilon$, which shows $\lim_{n\to\infty}\overline{p_d(G_n)}=1$ as claimed.
\end{proof}

\begin{remark}\label{rem:dico}
The notions of GME and BS are a dichotomy, i.e.\ a multipartite quantum state is either GME or BS. On the other hand, the results presented in this subsection
imply that if a sequence of graphs $\cG$ is ABS (resp. AGME), then it cannot be AGME (resp. ABS) but, in general, the reverse implication is not true. This proves that
the dichotomy GME/BS present at each stage disappears asymptotically. To see this consider the examples of graph sequences already presented in \cite{ASGME}: recall that
the sequence of complete graphs $(K_{n+1})_{n}$ is AGME while any sequence of trees $(T_{n+1})_n$ (where $T_n$ is a tree of order $n$) is ABS. It follows then that the graph sequence $\cG=(G_n)_{n\in\N}$ defined by $G_n=K_{n+1}$ if $n$ is odd and $G_n=T_{n+1}$ if $n$ is even is neither AGME nor ABS.
\end{remark}

The following characterizations will be useful in our proofs when we want to negate the AGME or ABS properties. Moreover, as a corollary, we obtain that all graph sequences that are neither AGME nor ABS are essentially of the artificial form mentioned in the preceding remark.

\begin{proposition}\label{prop:notAGMEnotABS}
Consider a sequence $\cG=(G_n)_{n\in\N}$. Then,
\begin{itemize}
\item[(i)] $\cG$ is not AGME in dimension $d$ iff $\cG$ has a subsequence which is ABS in dimension $d$.
\item[(ii)] $\cG$ is not ABS in dimension $d$ iff $\cG$ has a subsequence which is AGME in dimension $d$.
\end{itemize}
\end{proposition}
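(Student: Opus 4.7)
The plan is to reduce both equivalences to elementary statements about the bounded real sequence $\bigl(\overline{p_d(G_n)}\bigr)_n \subset [1/(d+1),1)$, invoking the characterizations already established in Proposition~\ref{characterizationAGME} and Proposition~\ref{characterizationABS}. Specifically, those results translate AGME into the condition $\limsup_n \overline{p_d(G_n)} < 1$ and ABS into the condition $\lim_n \overline{p_d(G_n)} = 1$, so every statement in the proposition can be rephrased as a relation between the accumulation points of this bounded sequence and those of its subsequences. The key structural observation I will use repeatedly is that a subsequence $\cG_s=(G_{s(n)})_n$ inherits the sequence of threshold visibilities $\bigl(\overline{p_d(G_{s(n)})}\bigr)_n$, so AGME/ABS of the subsequence is governed by the $\sup$ (resp.\ limit) of this subsequence of reals.

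For part (i), I will handle the two implications separately. If $\cG$ is not AGME, then by the equivalence (a)$\Leftrightarrow$(e) of Proposition~\ref{characterizationAGME} we must have $\limsup_n \overline{p_d(G_n)} = 1$. By the definition of $\limsup$, this yields a strictly increasing $s\colon\N\to\N$ along which $\overline{p_d(G_{s(n)})}\to 1$, and Proposition~\ref{characterizationABS} then gives that $\cG_s$ is ABS. Conversely, if some subsequence $\cG_s$ is ABS then $\overline{p_d(G_{s(n)})}\to 1$, which forces $\limsup_n \overline{p_d(G_n)} = 1$ (using the bound $\overline{p_d(G_n)}<1$), so by the same equivalence $\cG$ is not AGME.

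For part (ii), I will argue analogously but with $\liminf$ in place of $\limsup$. If $\cG$ is not ABS, then by Proposition~\ref{characterizationABS} the sequence $\overline{p_d(G_n)}$ does not converge to $1$, which (together with the bound $\overline{p_d(G_n)}<1$) forces $\liminf_n \overline{p_d(G_n)} = c < 1$. Choosing any $c'\in(c,1)$, there must be infinitely many indices at which $\overline{p_d(G_n)}\leq c'$, and extracting these indices yields a subsequence whose threshold visibilities are uniformly bounded by $c'<1$; by condition (d) of Proposition~\ref{characterizationAGME} this subsequence is AGME. Conversely, if $\cG_s$ is AGME then $\sup_n \overline{p_d(G_{s(n)})} \leq q < 1$ for some $q$, so infinitely many terms of the full sequence satisfy $\overline{p_d(G_n)}\leq q$, preventing $\lim_n \overline{p_d(G_n)}=1$ and hence, by Proposition~\ref{characterizationABS}, ruling out ABS for $\cG$.

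I do not expect a genuine obstacle here: the whole argument is a routine manipulation of $\limsup$ and $\liminf$ of a bounded real sequence, and the nontrivial work has already been done in the preceding two propositions. The only mild point of care is to remember that the critical visibilities lie strictly below $1$ for every finite $n$, which is exactly what allows the equivalence between ``not AGME'' and ``$\limsup = 1$'' (and likewise for the ABS side) to remain clean without extra case analysis.
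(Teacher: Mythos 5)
Your proposal is correct and follows essentially the same route as the paper: both parts are reduced, via Propositions~\ref{characterizationAGME} and~\ref{characterizationABS}, to elementary statements about the bounded sequence of critical visibilities $\bigl(\overline{p_d(G_n)}\bigr)_n\subset[1/(d+1),1)$. The only cosmetic difference is in part (ii), where you work with the $\liminf$ directly and extract a subsequence uniformly bounded away from $1$, whereas the paper splits into the cases $\limsup<1$ and ``$\limsup=1$ but non-convergent''; the two arguments are interchangeable.
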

\begin{proof}
(i) For simplicity we denote
$\overline{p_n}:=\overline{p_d(G_n)}\in\left[{1}/{(d+1)},1\right)$. Then, by
Proposition~\ref{characterizationAGME}~(e), we have that $\cG$ is not AGME iff $\limsup_{n\to\infty}\overline{p_n}=1$
and this condition is equivalent to the existence of a subsequence
$\left(\overline{p_{s(n)}}\right)_n$ such that $\lim_n\overline{p_{s(n)}}=1$. Therefore, by Proposition~\ref{characterizationABS}, the result is proven.

(ii) To show the second characterization, note that, by Proposition~\ref{characterizationABS}, $\cG$ is not ABS iff
$\lim\overline{p_n}\not=1$ and this condition holds iff
\begin{equation}
 \quadtext{either}
 \limsup_{n\to\infty}\overline{p_n}<1 \quadtext{or}
 \limsup_{n\to\infty}\overline{p_n}=1\;,\;\mathrm{but}\;
 \left(\overline{p_n}\right)_n\;\;\mathrm{is~not~convergent}\,.
\end{equation}
In both cases, Proposition~\ref{characterizationAGME}~(e) implies that either
$\cG$ is already AGME or there exists a subsequence $\left(\overline{p_{s(n)}}
\right)_n$ with $\lim_n\overline{p_{s(n)}}<1$ and the corresponding subsequence of graphs $\cG_s=(G_{s(n)})_n$ is AGME.
\end{proof}

\begin{corollary}
A graph sequence $\cG=(G_n)_n$ is neither AGME nor ABS in dimension $d$ iff it contains both a subsequence which is ABS in dimension $d$ and a subsequence which is AGME in dimension $d$.
\end{corollary}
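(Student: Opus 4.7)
The plan is to derive this corollary directly as a logical combination of the two parts of Proposition~\ref{prop:notAGMEnotABS}, since each equivalence there converts one of the two negations in the hypothesis into the existence of a subsequence with the opposite property. No new quantum information or graph theoretic input is needed; the corollary is essentially the conjunction of both parts of that proposition, read in both directions simultaneously for the same dimension $d$.

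For the forward direction, I would assume that $\cG$ is neither AGME nor ABS in dimension $d$. From the hypothesis that $\cG$ is not AGME, part~(i) of Proposition~\ref{prop:notAGMEnotABS} yields a subsequence $\cG_{s_1}=(G_{s_1(n)})_n$ which is ABS in dimension $d$. From the hypothesis that $\cG$ is not ABS, part~(ii) of the same proposition yields a subsequence $\cG_{s_2}=(G_{s_2(n)})_n$ which is AGME in dimension $d$. This gives both required subsequences and proves one implication.

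For the converse, I would suppose that $\cG$ contains a subsequence which is ABS in dimension $d$ and another subsequence which is AGME in dimension $d$. By the reverse implication in part~(i) of Proposition~\ref{prop:notAGMEnotABS}, the existence of an ABS subsequence forces $\cG$ to fail AGME; by the reverse implication in part~(ii), the existence of an AGME subsequence forces $\cG$ to fail ABS. Consequently $\cG$ is neither AGME nor ABS.

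There is no genuine obstacle here, since all the analytic content—namely translating the failure of AGME (resp.\ ABS) into a statement about $\limsup_n\overline{p_d(G_n)}=1$ (resp.\ non-convergence to $1$) and then extracting an appropriate subsequence via Propositions~\ref{characterizationAGME} and~\ref{characterizationABS}—has been absorbed into Proposition~\ref{prop:notAGMEnotABS}. The only thing to observe is that its two parts refer to the same ambient sequence $\cG$ in the same dimension $d$, so they may be invoked in parallel without conflict, which is precisely the situation encountered in the illustrative example of Remark~\ref{rem:dico}.
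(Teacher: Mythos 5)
Your proposal is correct and coincides with the paper's intent: the corollary is stated there without proof precisely because it is the immediate conjunction of the two biconditionals in Proposition~\ref{prop:notAGMEnotABS}, applied in both directions for the same dimension $d$, exactly as you argue.
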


\begin{remark}
We conclude the section with a few comments on the monotonic behavior of the dimension $d$ for these asymptotic notions. If the graph sequence $\cG=(G_n)_n$ is ABS in dimension $d$, then it is also ABS in any dimension $d'<d$. This follows from
$\overline{p_{d'}(G)}\geq\overline{p_d(G)}$ for all $d'\leq d$
since local projections transform $\rho(p,d)$ into $\rho(p,d')$ and, hence, biseparability of the former implies biseparabilty of the latter.
Moreover, if $\cG$ is AGME in dimension $d$, then it is also AGME in any dimension $d'>d$.
This is a consequence of the above and Proposition~\ref{prop:notAGMEnotABS}~(i). Thus, in particular, if a graph sequence is AGME when $d=2$, then it is AGME
(recall Definition~\ref{def:asymptotic}~(i)).
\end{remark}

\section{A sufficient condition for AGME}\label{sec3}

In this section we prove a graph theoretic condition that
guarantees that a sequence of graphs is AGME (see Proposition~\ref{prop:sufficientASGME}). This will be a very useful tool in order to establish our main results in the sections that follow.

Recall that in the multipartite scenario the Hilbert space is given by $H=\bigotimes_{i=1}^NH_i$ and for any $i,j\in[N]$ ($i\neq j$) let
\begin{equation}\label{loccij}
\Lambda_{ij}\colon D\left(\bigotimes_{k=1}^{N}H_{k}\right)\rightarrow D(H_{i}\otimes H_{j})
\end{equation}
be an arbitrary quantum channel that maps $N$-partite states to bipartite states shared between parties $i$ and $j$. We denote
a collection of such channels for each pair of parties simply by
$\mathbf{\Lambda}=\{\Lambda_{ij}\}_{i<j}$. If in addition $\phi_{ij}^{+}\in D(H_i\otimes H_j)$ is the 2-dimensional maximally entangled state shared by parties $i$ and $j$, we define the average maximally-entangled-state fidelity of $\mathbf{\Lambda}$ acting on $\rho\in D(H)$ by
\begin{equation}\label{fidelitylambda}
F_\mathbf{\Lambda}(\rho):=\frac{2}{N(N-1)}\sum_{i<j}
F\left(\Lambda_{ij}(\rho),\phi_{ij}^{+}\right).
\end{equation}
The following sufficient condition for biseparability was established in \cite[Lemma 1]{ASGME}.
\begin{lemma}\label{lemmabsfidelity}
If $\rho\in \mathrm{BS}(H)$ (i.e.\ it is an $N$-partite BS state) and $\mathbf{\Lambda}$ is any collection of LOCC channels of the form given in
Eq.~(\ref{loccij}), then
\begin{equation}\label{eq:fidelityBSlemma}
F_\mathbf{\Lambda}(\rho)\leq1-\frac{1}{N}\;.
\end{equation}
\end{lemma}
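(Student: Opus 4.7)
My plan is to exploit the convex decomposition of a biseparable state into terms separable across specific bipartitions, and combine this with two ingredients: (i) a pure-state fidelity is linear in the other argument, so $F(\Lambda_{ij}(\rho),\phi_{ij}^+)=\langle\phi_{ij}^+|\Lambda_{ij}(\rho)|\phi_{ij}^+\rangle$ is linear in $\rho$; and (ii) the well-known bound that any bipartite state separable across $i|j$ has fidelity at most $1/2$ with the $2$-dimensional maximally entangled state $\phi_{ij}^+$ (since LOCC preserves separability and the separable fidelity bound with $\phi^+_d$ is $1/d$).

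I would first write $\rho=\sum_M p_M\chi_M$ with $\chi_M\in S_M(H)$, $\emptyset\neq M\subsetneq[N]$, and use linearity to obtain
\begin{equation*}
F_\mathbf{\Lambda}(\rho)=\sum_M p_M\,\frac{2}{N(N-1)}\sum_{i<j}F\bigl(\Lambda_{ij}(\chi_M),\phi_{ij}^+\bigr).
\end{equation*}
Now I would split the pairs $(i,j)$ according to whether they straddle the cut $M|\overline{M}$ or lie on the same side. For a straddling pair, $\Lambda_{ij}(\chi_M)$ is separable across $i|j$ (since $\Lambda_{ij}$ is LOCC and $\chi_M$ is separable across $M|\overline{M}$), so the fidelity is at most $1/2$; for pairs on the same side, I use the trivial bound $1$. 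The number of straddling pairs is $|M|(N-|M|)$ and the number of same-side pairs is $\binom{|M|}{2}+\binom{N-|M|}{2}$.

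Setting $k=|M|$, a short calculation gives
\begin{equation*}
\binom{k}{2}+\binom{N-k}{2}+\tfrac{1}{2}k(N-k)=\tfrac{1}{2}\bigl(N(N-1)-k(N-k)\bigr),
\end{equation*}
so that
\begin{equation*}
\frac{2}{N(N-1)}\sum_{i<j}F\bigl(\Lambda_{ij}(\chi_M),\phi_{ij}^+\bigr)\leq 1-\frac{k(N-k)}{N(N-1)}.
\end{equation*}
The function $k\mapsto k(N-k)$ on $\{1,\dots,N-1\}$ is minimized at the endpoints with value $N-1$, giving a per-$M$ bound of $1-1/N$, uniform in $M$. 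Averaging against $\{p_M\}$ then yields the claimed inequality.

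I do not expect a serious obstacle: the only mild subtlety is recognizing that fidelity with a pure state is linear in the mixed argument (so ordinary convex combinations pass through cleanly), and then the combinatorial minimization $\min_{1\leq k\leq N-1}k(N-k)=N-1$ provides the tight constant $1/N$, attained when the biseparable decomposition uses only $1$-vs-$(N{-}1)$ cuts. Everything else is bookkeeping with pair counts.
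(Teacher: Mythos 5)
Your proof is correct and is essentially the argument behind the cited result: the paper itself does not reprove this lemma but imports it from \cite[Lemma 1]{ASGME}, whose proof proceeds exactly as you do --- decompose the biseparable state, use linearity of the fidelity with a pure target, bound crossing pairs by $1/2$ via LOCC-preservation of separability across the cut $M|\overline{M}$, and minimize $k(N-k)$ over $k\in\{1,\dots,N-1\}$. No gaps; your observation that the bound is saturated by $1$-vs-$(N-1)$ cuts is also consistent with how the lemma is used later in the paper.
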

This condition was used in \cite{ASGME} to prove that the sequence of
complete graphs $(K_{n+1})_n$ is AGME. In particular, it is shown that there exists a collection of LOCC protocols based on sequential teleportation and entanglement distillation with the property that for every given $d$ there exists a $p_0<1$ such that for all $p\geq p_0$ we have
\begin{equation}\label{lemma2}
F_\mathbf{\Lambda}(\sigma_d(K_{n+1},p))\geq1-\epsilon_n
\end{equation}
with $\epsilon_n\to 0$ exponentially fast as $n\to\infty$, hence contradicting Eq.\ (\ref{eq:fidelityBSlemma}) for $n$ large enough. Here, we put these ideas in a much more general context, which makes it possible to exploit
Lemma~\ref{lemmabsfidelity} so as to prove AGME not only for $(K_{n+1})_n$ but for arbitrary graph sequences $(G_n)_n$ fulfilling some purely graph-theoretic condition.

\begin{proposition}\label{prop:sufficientASGME}
Let $\cG=(G_n)_n$ be a graph sequence with the property that for every pair of vertices in $G_n$ there exist $f(n)$ edge-disjoint paths that connect them with length at most $C$, where $C$ is a universal constant and $f(n)\in\omega (\log |G_n|)$. Then,
$\cG$ is AGME.
\end{proposition}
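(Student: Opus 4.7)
The plan is to invoke Lemma~\ref{lemmabsfidelity} contrapositively: for every $d\ge 2$ I will exhibit a fixed visibility $p<1$ and a collection $\mathbf{\Lambda}=\{\Lambda_{ij}\}_{i<j}$ of LOCC channels of the form (\ref{loccij}) satisfying
\begin{equation}
F_{\mathbf{\Lambda}}(\sigma_d(G_n,p)) \;>\; 1-\frac{1}{|G_n|}
\end{equation}
for all sufficiently large $n$, which forces $\sigma_d(G_n,p)$ to be GME and, by Proposition~\ref{characterizationAGME}(c), delivers AGME. Crucially, each $\Lambda_{ij}$ acts on its own independent copy of $\sigma_d(G_n,p)$, so protocols for different pairs never compete for the same edge-isotropic resource; edge-disjointness matters only \emph{within} the protocol for a single pair $(i,j)$.

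Each $\Lambda_{ij}$ follows a ``swap then distill'' design. Using the hypothesis, I first fix $f(n)$ edge-disjoint $i$-$j$ paths of length at most $C$. Along each path, successive entanglement swapping (Bell measurement at each intermediate vertex with the standard correction applied at $i$ or $j$) converts a chain of $\ell\le C$ edge-isotropic states of visibility $p$ into a single isotropic state $\rho(p^{\ell},d)$ on $H_i\otimes H_j$, because applied to an isotropic input the swap preserves the isotropic form and multiplies visibilities. Edge-disjointness guarantees that the $f(n)$ paths consume pairwise disjoint groups of edge resources, so the outputs come out as a genuine tensor product of $f(n)$ isotropic states, each of visibility at least $p^C$. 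I then apply coordinated local projections at $i$ and $j$ onto a fixed two-dimensional subspace of each copy, reducing each to a $2\otimes2$ isotropic state whose visibility is bounded below by a constant $q_0=q_0(p,C,d)$ that can be placed strictly above the distillation threshold $1/3$ by taking $p$ close enough to $1$. A standard bipartite distillation protocol (hashing, or recurrence followed by hashing) applied to the $f(n)$ qubit copies then produces a single two-qubit state whose fidelity with $\phi^+_{ij}$ is at least $1-Ae^{-bf(n)}$, with constants $A,b>0$ depending only on $p,C,d$.

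Substituting in the definition of $F_{\mathbf{\Lambda}}$ gives the uniform bound
\begin{equation}
F_{\mathbf{\Lambda}}(\sigma_d(G_n,p)) \;\ge\; 1-Ae^{-bf(n)},
\end{equation}
so the task reduces to verifying $Ae^{-bf(n)}<1/|G_n|$ eventually. Taking logarithms, this is $bf(n)-\log|G_n|\to\infty$, which is precisely the hypothesis $f(n)\in\omega(\log|G_n|)$. Thus biseparability of $\sigma_d(G_n,p)$ would contradict Lemma~\ref{lemmabsfidelity} for large $n$, and AGME follows.

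The main obstacle, and what dictates the shape of the hypothesis, is securing an exponential distillation rate whose constants are \emph{uniform in $n$}. Two features make this possible: the bounded path length $C$ keeps the post-swap visibility uniformly above the distillation threshold regardless of network size, and the projection to a qubit subspace lets a single fixed two-qubit distillation protocol do the work for every $n$, yielding $n$-independent constants $A,b$. The precise matching of this exponential rate against the $1/|G_n|$ target in Lemma~\ref{lemmabsfidelity} is exactly what makes $\omega(\log|G_n|)$ the natural growth condition on $f(n)$.
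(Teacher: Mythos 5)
Your proposal is correct and follows essentially the same route as the paper: edge-disjoint bounded-length paths are converted by swapping/teleportation into $f(n)$ parallel isotropic states of constant visibility for each pair, these are distilled into a near-perfect Bell pair with error $e^{-\Omega(f(n))}$, and the hypothesis $f(n)\in\omega(\log|G_n|)$ beats the $1-1/|G_n|$ threshold of Lemma~\ref{lemmabsfidelity}. The only cosmetic difference is that you project to qubits and invoke recurrence-plus-hashing, whereas the paper distills the $d$-dimensional isotropic states directly via the Hamada/Devetak--Winter exponential fidelity bound; both yield the required $n$-independent constants.
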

\begin{proof}
Let us fix the dimension $d$. As it is explained in \cite[Lemma 2]{ASGME}, the results of \cite{hamada,devetakwinter} guarantee that there exists a value $q=q(d)\in (0,1)$ such that for every $n\geq 1$ and every $p\geq q$,
we can find an LOCC channel
\begin{equation}\Lambda_n(p)\colon D\left(\C^{d^n}\otimes \C^{d^n}\right)
\rightarrow D(\C^2\otimes \C^2)
\end{equation}
satisfying
\begin{equation}\label{eq:hamada}
F\big(\Lambda_n(p)(\rho(p,d)^{\otimes n}),\phi^+\big)\geq 1-\epsilon_n\;,
\quadtext{with}
\epsilon_n\leq2(n+1)^{2(d^2-1)}d^{-Rn}\;,
\end{equation}
where $R=R(p,d)>0$ for every $p\geq q$ and $d$.
Now, let $e_1=\{i,k\}$ and
$e_2=\{k,j\}$ be two edges that connect vertices $i$ and $j$ in a path of length 2. Then, by standard teleportation, there is an LOCC channel that transforms $\rho_{e_1}(p,d)\otimes(\phi^+_d)_{e_2}$ into $\rho_{ij}(p,d)$.
However, in our case the edge $e_2$ does not support a perfect maximally entangled state but an isotropic state with visibility $p$, which must then be used as a resource for noisy teleportation. A simple calculation (cf.\ again Lemma 2 in \cite{ASGME}) shows that executing the standard teleportation protocol with this noisy resource state transforms $\rho_{e_1}(p,d)\otimes\rho_{e_2}(p,d)$ into $\rho_{ij}(p^2,d)$. Suppose then that parties $i$ and $j$ in $G$ are connected by a path of length $L$ through the edges in $F\subset E$. By sequentially teleporting the isotropic state corresponding to the first edge in the path at the end of the process parties $i$ and $j$ end up sharing a noisier version of the isotropic state, i.e.\ there is an LOCC protocol that transforms $\bigotimes_{e\in F}\rho_e(p,d)$ into $\rho_{ij}(p^{2^{L-1}},d)$. Thus, if there are $f(n)$ edge-disjoint paths in $G_n$ connecting an arbitrarily chosen pair $i,j$, then, by sequential teleportation as above and discarding all edges that do not belong to any of these paths, there is an LOCC channel which transform the state  $\sigma_d(G_n,p)$ to
\begin{equation}
    \bigotimes_{k=1}^{f(n)}\rho_{ij}\left(p^{2^{L_k(n)-1}},d\right),
\end{equation}
where $L_k(n)$ is the length of each edge-disjoint path. However, as pointed out in Section~\ref{sec21}, the visibility of an isotropic state can always be decreased by further LOCC processing. Therefore, there is an LOCC channel $\Lambda^n_{ij}$ such that
\begin{equation}
    \Lambda^n_{ij}(\sigma_d(G_n,p))=\rho_{ij}(p^{2^{C-1}},d)^{\otimes f(n)}\;,
\end{equation}
where $C$ is the universal constant such that $L(n)=\max_kL_k(n)\leq C$ for every $n$ required in the statement of the proposition. Moreover, if we let $p_0=q^{1/2^{C-1}}\in (0,1)$, for every $p\geq p_0$, we can replace the previous LOCC channel $\Lambda^n_{ij}$ by another one $\Lambda^n_{ij}(p)$ such that
\begin{equation}
    \Lambda^n_{ij}(p)(\sigma_d(G_n,p))=\rho_{ij}(p_0^{2^{C-1}},d)^{\otimes f(n)}.
\end{equation}

Hence, we have that for every $p\geq p_0$ and for every  $n\geq 1$ there exists an LOCC channel $\tilde{\Lambda}^{i,j}_n(p)=\Lambda_n(p)\circ \Lambda^n_{ij}(p)$ such that
\begin{align}
F\left(\tilde{\Lambda}^{i,j}_n(p)(\sigma_d(G_n,p)) ,\phi^+\right)=F\left(\Lambda_n(p)(\rho_{ij}(q,d)^{\otimes f(n)}) ,\phi^+\right)\geq 1-\epsilon_n,
\end{align}where $\epsilon_n\leq2(f(n)+1)^{2(d^2-1)}d^{-R(q,d)f(n)}$.

Note also that, by our assumptions on the function $f(n)$, there exists $n_0\in\mathbb N$ such that for every $n\geq n_0$, we have that $|G_n|\epsilon_n<1$. Therefore, for every $p\geq p_0$, every  $n\geq n_0$ and every pair $(i,j)$ with $i<j$ we can find an LOCC channel $\tilde{\Lambda}^{i,j}_n(p)$ such that
\begin{align}
F\left(\tilde{\Lambda}^{i,j}_n(p)(\sigma_d(G_n,p)) ,\phi^+\right)> 1-\frac{1}{|G_n|}.
\end{align}

According to Lemma~\ref{lemmabsfidelity}, we conclude that $\sigma_d(G_n,p)$ is GME for every $p\geq p_0$ and $n\geq n_0$. So, item c) in
Proposition~\ref{characterizationAGME} tells us that $\cG$ is AGME.
\end{proof}

\begin{remark}\label{Remark:sufficientASGME}
In fact, the assumption made in the previous proposition on the function $f$ guarantees that if $\epsilon_n\leq2(f(n)+1)^{2(d^2-1)}d^{-R(q,d)f(n)}$, then $\lim_n |G_n|\epsilon_n=0$. Therefore, it can be deduced from the previous proof that for every fixed $d\geq 2$, there exists a value $p_0=p_0(d)\in (0,1)$ with the following property: for all $p\geq p_0$ and $n\geq 1$ there exists a collection of LOCC protocols
$\mathbf{\Lambda}^n(p)=\{\Lambda_{ij}^n(p)\}_{i<j}$ satisfying that
\begin{equation}
F_\mathbf{\mathbf{\Lambda}^n(p)}(\sigma_d(G_n,p)):=\frac{2}{|G_n|(|G_n|-1)}\sum_{i<j}
F\left(\Lambda_{ij}^n(p)(\sigma_d(G_n,p)),\phi_{ij}^{+}\right)> 1-\epsilon_n,
\end{equation}with $\lim_n |G_n|\epsilon_n=0$.
\end{remark}

It is worth pointing out that the sufficient condition for AGME of Proposition~\ref{prop:sufficientASGME} is not necessary. Later on, we will provide constructions of graph sequences that do not obey it, which we show anyway to be AGME. Interestingly, Proposition~\ref{prop:sufficientASGME} not only makes it possible to prove AGME for graph sequences meeting its premises. We will later see how to apply it for other graph sequences for which this is not the case but which can be partitioned in a certain way into subgraphs with the desired property. In order to achieve this, we will make use of the next lemma, which shows that the graph sequences fulfilling the conditions of
Proposition~\ref{prop:sufficientASGME} satisfy in fact something much stronger than AGME.

\begin{lemma}\label{lemmaBSAto0}
Let $\cG=(G_n)_n$ be a graph sequence that satisfies the hypotheses of
Proposition~\ref{prop:sufficientASGME}. Then, for every given $d$, there exists a value $p_0<1$ such that if $p\geq p_0$, for any convex decomposition
\begin{equation}\label{eqlemmaBSAto0}
\sigma_d(G_n,p)=q_n\rho_n+(1-q_n)\tau_n\;,
\end{equation}
where $\tau_n$ and $\rho_n$ are $|G_n|$-partite quantum states\footnote{
We do not write the dependence on $p$ (in both the coefficient $q_n$ and the states $\rho_n$ and $\tau_n$) to simplify the notation.} and the latter is moreover biseparable, it holds that $\lim_{n\to\infty} q_n=0$.
\end{lemma}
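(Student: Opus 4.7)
The plan is to exploit the quantitative strengthening of the AGME argument given in Remark~\ref{Remark:sufficientASGME}. Under the hypotheses of Proposition~\ref{prop:sufficientASGME}, for every fixed $d$ there exists $p_0<1$ and, for each $p\geq p_0$ and each $n$, a collection of LOCC channels $\mathbf{\Lambda}^n(p)=\{\Lambda^n_{ij}(p)\}_{i<j}$ such that
\begin{equation*}
F_{\mathbf{\Lambda}^n(p)}(\sigma_d(G_n,p))>1-\epsilon_n,
\qquad \lim_{n\to\infty}|G_n|\epsilon_n=0.
\end{equation*}
I intend to compare this lower bound with the universal upper bound on the same functional afforded by Lemma~\ref{lemmabsfidelity} whenever the state is biseparable, and then use the linearity of the functional to transfer the contradiction onto the coefficient $q_n$.

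The key observation is that the quantity $F_{\mathbf{\Lambda}^n(p)}(\cdot)$ is \emph{affine} in its argument, because each target $\phi^+_{ij}$ is a pure state and hence $F(\Lambda^n_{ij}(p)(\rho),\phi^+_{ij})=\langle\phi^+_{ij}|\Lambda^n_{ij}(p)(\rho)|\phi^+_{ij}\rangle$ is linear in $\rho$, and the average $\frac{2}{|G_n|(|G_n|-1)}\sum_{i<j}$ preserves affinity. Applying this to the convex decomposition~(\ref{eqlemmaBSAto0}) yields
\begin{equation*}
F_{\mathbf{\Lambda}^n(p)}(\sigma_d(G_n,p))=q_nF_{\mathbf{\Lambda}^n(p)}(\rho_n)+(1-q_n)F_{\mathbf{\Lambda}^n(p)}(\tau_n).
\end{equation*}
Since $\rho_n$ is biseparable, Lemma~\ref{lemmabsfidelity} gives $F_{\mathbf{\Lambda}^n(p)}(\rho_n)\leq 1-1/|G_n|$, while the trivial bound $F_{\mathbf{\Lambda}^n(p)}(\tau_n)\leq 1$ holds for any state.

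Combining these three bounds one obtains
\begin{equation*}
1-\epsilon_n<q_n\left(1-\tfrac{1}{|G_n|}\right)+(1-q_n)=1-\tfrac{q_n}{|G_n|},
\end{equation*}
i.e.\ $q_n<|G_n|\epsilon_n$, and the conclusion $q_n\to 0$ follows directly from Remark~\ref{Remark:sufficientASGME}. I do not expect any serious obstacle here: the entire proof is essentially a one-line manipulation once the affinity of $F_{\mathbf{\Lambda}^n(p)}$ is recognized. The only point that requires some care is verifying that the collection of LOCC protocols delivered by Remark~\ref{Remark:sufficientASGME} can be used \emph{simultaneously} on all three states in the decomposition, but this is automatic since the channels $\Lambda^n_{ij}(p)$ do not depend on the state they act on.
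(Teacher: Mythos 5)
Your proposal is correct and follows essentially the same route as the paper's own proof: both invoke Remark~\ref{Remark:sufficientASGME} for the lower bound $F_{\mathbf{\Lambda}^n(p)}(\sigma_d(G_n,p))>1-\epsilon_n$ with $|G_n|\epsilon_n\to0$, use the affinity of $F_{\mathbf{\Lambda}^n(p)}$ together with Lemma~\ref{lemmabsfidelity} on the biseparable part and the trivial bound on $\tau_n$, and conclude $q_n<|G_n|\epsilon_n$. No gaps.
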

\begin{proof}
According to Remark~\ref{Remark:sufficientASGME}, for every fixed $d\geq 2$, there exists a value $p_0\in (0,1)$ such that for all $p\geq p_0$ and $n\geq 1$ there exists a collection of LOCC protocols $\mathbf{\Lambda}^n(p)=\{\Lambda_{ij}^n(p)\}_{i<j}$ satisfying that
\begin{equation}\label{fidelitylambda_remark}
F_{\mathbf{\Lambda}^n(p)}(\sigma_d(G_n,p)):=\frac{2}{|G_n|(|G_n|-1)}\sum_{i<j}
F\Big(\Lambda_{ij}^n(p)(\sigma_d(G_n,p)),\phi_{ij}^{+}\Big)> 1-\epsilon_n,
\end{equation}
with $\lim_n |G_n|\epsilon_n=0$

Now, the linearity of
$F(\cdot,\cdot)$ (in both arguments) when one argument is a pure state implies that for any given set of channels $\mathbf{\Lambda}$, $F_\mathbf{\Lambda}(\rho)$ is linear. Thus, if $\mathbf{\Lambda}^n(p)$ is the collection of channels specified above, by Eq.\ (\ref{eqlemmaBSAto0}) we have that
\begin{equation}
1- \epsilon_n< F_\mathbf{\mathbf{\Lambda}^n(p)}(\sigma_d(G_n,p))\leq1-\frac{q_n}{|G_n|},
\end{equation}
where we have used that $F_\mathbf{\mathbf{\Lambda}^n(p)}(\tau_n)\leq1$ and, by Lemma~\ref{lemmabsfidelity}, $F_\mathbf{\mathbf{\Lambda}^n(p)}(\rho_n)\leq1-1/|G_n|$.

The previous chain of inequalities implies that for $p\geq p_0$, we have
$q_n< |G_n|\epsilon_n$, from which the result follows.
\end{proof}

\section{Edge-connectivity and AGME}\label{seccon}

The intuition that the more connected a graph sequence is, the more likely it is to be AGME can be formalized in terms of edge deletion. Let $G$ be a connected graph and $G'$ a connected spanning subgraph of $G$ (recall Section~\ref{subsec:GT}).
Since $E'=E\setminus F$ for some suitable subset of edges $F\subset E$ we have (recall Eq.~(\ref{edgedeletion}))
\begin{equation}
\sigma_d(G',p)=\tr_{\left(\bigotimes_{e\in F}H_e\right)}\sigma_d(G,p).
\end{equation}
The key observation here is that edge deletion preserves biseparability. In fact, from
the definition of biseparability, it is clear that partial traces that preserve the number of parties must map BS states into BS states. Therefore, if $\sigma_d(G,p)$ is BS, then $\sigma_d(G',p)$ must be BS as well and the critical visibilities must satisfy that $\overline{p_d(G)}\leq\overline{p_d(G')}$. Similarly, if $\cG'=(G'_n)_n$ is a sequence of connected spanning subgraphs of $\cG=(G_n)_n$ by the results of Subsection~\ref{sec22} it follows that if $\cG'$ is AGME then so must be $\cG$, and that if $\cG$ is ABS then the same holds for $\cG'$. This observation gives a natural perspective to the results of \cite{ASGME}.
Since every graph can be obtained by edge deletion from $K_n$ for some $n$, if an AGME graph sequence $\cG=(G_n)_n$ with $|G_n|=n+1$ exists, $(K_{n+1})_n$ must be such a sequence. Analogously, given that every graph contains a spanning tree and trees become disconnected by the removal of any edge, there has to be a tree sequence that is ABS if such a sequence exists.

In this section we study how the asymptotic growth of the edge-connectivity
of the underlying network intertwines with the AGME and ABS properties to see if the above intuition can be made more general in rigorous quantitative terms. Proposition~\ref{th:lambda} shows that indeed large edge-connectivity leads to AGME. However, this illuminates only partially our problem.
In the proof of Proposition~\ref{th:lowlambda} (see also Fig.~\ref{fig:mancuerna})
we construct a graph sequence with minimal edge connectivity that is AGME and, consequently, the condition of Proposition~\ref{th:lambda} cannot be necessary. Furthermore, this example shows that no asymptotic upper bound on the edge-connectivity can serve as a sufficient condition for the ABS property. In addition to this, the results of this section are used later on in order to derive stronger conditions in terms of the degree growth.

\subsection{Large edge-connectivity implies AGME}

We begin recalling
the following estimate on the diameter of a graph in terms of its order and minimal degree
(see \cite[Theorem~1]{erdos}).\footnote{The mentioned theorem in \cite{erdos}
proves the first estimate; we include the second trivial estimate since it is enough for our purposes and simplifies expressions that will appear later.}
\begin{theorem}\label{thm:erdos}
For a connected graph $G$ with $\delta_{\mathrm{min}}(G)\geq 2$ one has
\begin{equation}\label{ineqdiam}
\mathrm{diam}(G)\leq\frac{3|G|}{\delta_{\mathrm{min}}(G)+1}-1
\leq \frac{3|G|}{\delta_{\mathrm{min}}(G)}.
\end{equation}
\end{theorem}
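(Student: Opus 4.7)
The plan is to reproduce the standard packing argument of Erdős--Pach--Pollack--Tuza along a diametral geodesic, since this is a classical result that the authors are merely citing.

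First I would fix a diametral pair $u,v\in V$ and a shortest path $u=v_0,v_1,\ldots,v_D=v$ of length $D=\mathrm{diam}(G)$. For each $i$, let $N[v_i]:=\{v_i\}\cup N(v_i)$ denote the closed neighborhood, which has cardinality at least $\delta_{\mathrm{min}}(G)+1$ by hypothesis. The key observation is that whenever $|i-j|\geq 3$ the closed neighborhoods $N[v_i]$ and $N[v_j]$ are disjoint: a common element $w$ would produce a walk $v_i\,w\,v_j$ of length at most $2$, contradicting $d_G(v_i,v_j)=|i-j|\geq 3$, which holds because the chosen path is a geodesic. Consequently the $\lfloor D/3\rfloor+1$ vertices $v_0,v_3,v_6,\ldots,v_{3\lfloor D/3\rfloor}$ have pairwise disjoint closed neighborhoods, so summing their cardinalities yields
\begin{equation*}
(\lfloor D/3\rfloor+1)(\delta_{\mathrm{min}}(G)+1)\leq |G|.
\end{equation*}
Combined with the elementary estimate $D\leq 3\lfloor D/3\rfloor+2$, this rearranges to the first inequality $\mathrm{diam}(G)\leq 3|G|/(\delta_{\mathrm{min}}(G)+1)-1$.

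The second inequality is an algebraic triviality: the difference
\begin{equation*}
\frac{3|G|}{\delta_{\mathrm{min}}(G)+1}-\frac{3|G|}{\delta_{\mathrm{min}}(G)}=-\frac{3|G|}{\delta_{\mathrm{min}}(G)(\delta_{\mathrm{min}}(G)+1)}
\end{equation*}
is nonpositive, so subtracting $1$ from the first term keeps it below $3|G|/\delta_{\mathrm{min}}(G)$.

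The only mildly delicate step is the floor-function bookkeeping in the final rearrangement of the first inequality; conceptually the entire argument reduces to the disjointness-of-closed-neighborhoods observation along the geodesic, and I do not foresee any real obstacle. The hypothesis $\delta_{\mathrm{min}}(G)\geq 2$ is used only to make the bound nontrivially informative (for $\delta_{\mathrm{min}}=1$ a path already saturates $D=|G|-1$), but the packing argument itself goes through as long as the graph is connected.
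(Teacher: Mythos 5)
Your argument is correct and complete. Note that the paper itself does not prove this statement: it imports the first inequality directly from Erd\H{o}s--Pach--Pollack--Tuza \cite{erdos} (Theorem~1 there) and only remarks in a footnote that the second inequality is a trivial weakening. So you are supplying a proof where the authors supply a citation. Your packing argument is sound: along a diametral geodesic $v_0,\dots,v_D$ the closed neighborhoods of $v_0,v_3,\dots,v_{3\lfloor D/3\rfloor}$ are pairwise disjoint (since $d_G(v_i,v_j)=|i-j|\geq 3$ rules out a common neighbor), each has size at least $\delta_{\mathrm{min}}(G)+1$, and the bookkeeping $\lfloor D/3\rfloor+1\geq (D+1)/3$ turns $(\lfloor D/3\rfloor+1)(\delta_{\mathrm{min}}(G)+1)\leq |G|$ into exactly $D\leq 3|G|/(\delta_{\mathrm{min}}(G)+1)-1$, matching the cited bound rather than a weaker constant. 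The second inequality is handled correctly, and your observation that $\delta_{\mathrm{min}}(G)\geq 2$ is not really needed for the packing step is accurate (it is needed only so that dividing by $\delta_{\mathrm{min}}(G)$ in the second inequality is meaningful and the bound is informative). The original Erd\H{o}s--Pach--Pollack--Tuza proof uses a somewhat more delicate layering argument because they also prove a companion bound on the radius, but for the diameter estimate as stated your more elementary route is entirely adequate for the way the theorem is used in this paper.
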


\begin{proposition}\label{th:lambda}
If $\cG=(G_n)_{n\in\N}$ is a graph sequence such that $\lambda(G_n)\in\Omega(|G_n|)$,
then $\cG$ is AGME.
\end{proposition}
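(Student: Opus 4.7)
The plan is to verify the hypothesis of Proposition~\ref{prop:sufficientASGME} for $\cG$. Using the general inequality $\lambda(G)\leq\delta_{\mathrm{min}}(G)$, the assumption $\lambda(G_n)\in\Omega(|G_n|)$ gives $\delta_{\mathrm{min}}(G_n)\geq c|G_n|$ for some constant $c>0$ and all large enough $n$. By Theorem~\ref{thm:erdos} this at once implies $\mathrm{diam}(G_n)\leq 3/c$, so the diameter of $G_n$ is bounded by a constant independent of $n$.

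The second available ingredient is Menger's theorem (Eq.~(\ref{menger})), which guarantees that between any two distinct vertices $u,v$ of $G_n$ there are $\lambda(G_n)\in\Omega(|G_n|)$ edge-disjoint paths. The key remaining step---and the main obstacle---is to show that $f(n)\in\omega(\log|G_n|)$ of these paths can in fact be chosen to have length bounded by a universal (i.e., $n$-independent) constant $C$; Menger's theorem by itself provides no control on path length. I would accomplish this by exploiting the large common neighbourhood of $u$ and $v$: an elementary inclusion--exclusion estimate yields $|N(u)\cap N(v)|\geq 2\delta_{\mathrm{min}}(G_n)-|G_n|$, which is $\Omega(|G_n|)$ whenever $c>1/2$; this already produces $\Omega(|G_n|)$ edge-disjoint paths of length $2$ through distinct common neighbours. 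For $c\in(0,1/2]$ the naive common-neighbour count fails, and the argument has to be iterated via a Hall-type bipartite matching argument on the subgraph spanned between $N(u)$ and $N(v)$ (and, if necessary, farther iterated neighbourhoods) to produce $\Omega(|G_n|)$ edge-disjoint paths of length bounded by some constant $C=C(c)$ depending only on $c$, hence only on the sequence $\cG$ and not on $n$.

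Once this combinatorial step is established, the hypothesis of Proposition~\ref{prop:sufficientASGME} is satisfied with $f(n)\in\Omega(|G_n|)\subset\omega(\log|G_n|)$ and universal constant $C=C(c)$, and we conclude that $\cG$ is AGME.
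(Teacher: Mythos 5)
Your overall strategy --- verifying the hypothesis of Proposition~\ref{prop:sufficientASGME} --- is exactly the paper's, and your first paragraph (the bound $\mathrm{diam}(G_n)\leq 3/c$ via $\delta_{\mathrm{min}}(G_n)\geq\lambda(G_n)\geq c|G_n|$ and Theorem~\ref{thm:erdos}) is a correct first step, as is the common-neighbour count for $c>1/2$. However, the heart of the proof is precisely the step you leave as a sketch, and the sketch as written does not close the gap. For $c\in(0,1/2]$ your proposed argument feeds in only the minimum-degree bound $\delta_{\mathrm{min}}(G_n)\geq c|G_n|$ (through neighbourhoods and iterated neighbourhoods); but a minimum degree of order $c|G_n|$ with $c\leq 1/2$ is compatible with $\lambda(G_n)=1$ --- take two copies of $K_n$ joined by a single bridge, as in Fig.~\ref{fig:mancuerna} --- in which case there is only \emph{one} edge-disjoint $u$--$v$ path for $u,v$ in different cliques, and no Hall-type matching between $N(u)$, $N(v)$ or their iterated neighbourhoods can produce more. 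So any correct completion must inject the edge-connectivity hypothesis into the combinatorial construction itself, not merely into the degree bound and into Menger's theorem (which, as you rightly note, gives no length control). You do not indicate how this would be done, and it is not at all obvious that the matching argument can be salvaged.

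The paper closes this gap by a different device: it repeatedly extracts a shortest $u$--$v$ path and deletes its edges. Each deletion removes at most $\mathrm{diam}$ many edges and lowers $\delta_{\mathrm{min}}$ by at most $2$, so after $k\leq\lfloor c^2|G_n|/6\rfloor-1$ rounds the total number of deleted edges is still strictly below $\lambda(G_n)$ (hence the remaining graph is connected --- this is exactly where $\lambda(G_n)\in\Omega(|G_n|)$, and not just the degree bound, is used) and, by Theorem~\ref{thm:erdos}, the diameter of the remaining graph is still at most $9/(3c-c^2)$. This yields $\lfloor c^2|G_n|/6\rfloor\in\Omega(|G_n|)$ edge-disjoint $u$--$v$ paths of uniformly bounded length, after which Proposition~\ref{prop:sufficientASGME} applies. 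You would need an argument of this kind, or some other explicit use of the edge-connectivity in the path construction, to make your middle step rigorous.
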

\begin{proof}
By hypothesis, we know that there exists a constant $c>0$ and a natural number $n_0$ such that for every $n\geq n_0$ we have $\lambda(G_n)\geq c|G_n|$ and we can assume that $c\in (0,1)$. Now, given $n\geq n_0$ and $u, v\in V$ with $u\neq v$, we will show that there are at least $\lfloor c^2|G_n|/6\rfloor$ edge-disjoint paths between the vertices $u$ and $v$  with length at most $9/(3c-c^2)$. In this case, we can apply Proposition~\ref{prop:sufficientASGME} to conclude the proof.

To show the estimates on the edge-disjoint paths announced in the previous paragraph, let us fix $n\geq n_0$ and consider the following sequence of graphs defined recursively and obtained from $\cG$ by successive edge deletions: $G_{n,0}:=G_n$ and for the next step consider a shortest path between $u$
and $v$ (which exists since $G_n$ is connected). The graph $G_{n,1}$ is constructed from $G_{n,0}$ by erasing all edges corresponding to this shortest path. Iterating this process until there are no paths connecting the pair of vertices we obtain the family $(G_{n,k})_k$. We are going to show that $G_{n,k}$ is connected and $\textrm{diam}(G_{n,k})\leq 9/(3c-c^2)$ for every $k=0,\ldots , \lfloor c^2|G_n|/6\rfloor-1$. To do so, we will use mathematical induction (on a finite set). More precisely, we will show:
\begin{itemize}
\item[1.] $G_{n,0}$ is connected and $\textrm{diam}(G_{n,0})\leq 9/(3c-c^2)$.
\item[2.] If $G_{n,k}$ is connected, $\textrm{diam}(G_{n,k})\leq 9/(3c-c^2)$ and $k\leq \lfloor c^2|G_n|/6\rfloor-2$, then $G_{n,k+1}$ is connected and $\textrm{diam}(G_{n,k+1})\leq 9/(3c-c^2)$.
\end{itemize}

First, $G_{n,0}$ is connected by definition and, according to
Theorem~\ref{thm:erdos} and recalling that \linebreak $\delta_{\mathrm{min}}(G_n)\geq \lambda(G_n)\geq c|G_n|$, we have
\begin{equation}
\textrm{diam}(G_{n,0})=\textrm{diam}(G_{n})\leq \frac{3|G_n|}{\delta_{\mathrm{min}}(G_n)}\leq \frac{3}{c}\leq \frac{9}{3c-c^2}.
\end{equation}

Let us now assume that $G_{n,k}$ is connected, $\textrm{diam}(G_{n,k})\leq 9/(3c-c^2)$ and $k\leq \lfloor c^2|G_n|/6\rfloor-2$. The connectivity of $G_{n,k}$ guarantees the existence of $G_{n,k+1}$. Note also that our assumption on the diameter implies that $\textrm{diam}(G_{n,s})\leq 9/(3c-c^2)$ for every $s=0,1,\ldots, k$. Hence, the total number of edges deleted from $G_n$ in order to build $G_{n,k+1}$ is not larger than
\begin{equation}
\frac{9(k+1)}{(3-c)c}\leq \frac{9\left(c^2|G_n|/6-1\right)}{(3-c)c}\leq  \frac{\frac{3}{2} c|G_n|}{(3-c)}\leq \frac{3}{4}c|G_n|<\lambda(G_n),
\end{equation}where we have used that $c\in (0,1)$. This implies that $G_{n,k+1}$ is connected.

To obtain an upper bound on $\textrm{diam}(G_{n,k+1})$, first note that
\begin{equation}
\delta_{\mathrm{min}}(G_{n,k+1})\geq\delta_{\mathrm{min}}(G_{n,k})-2\geq\ldots\geq \delta_{\mathrm{min}}(G_{n})-2(k+1)\geq c|G_n|-2(k+1)\,,
\end{equation}
where we have used that deleting a path decreases the degree of a vertex at most by $2$.

The fact, that $k\leq c^2|G_n|/6-2$, guarantees that $\delta_{\mathrm{min}}(G_{n,k+1})\geq  c|G_n|-2(k+1)\geq 2$, so we can apply Theorem~\ref{thm:erdos} to obtain

\begin{equation}
\textrm{diam}(G_{n,k+1})\leq\frac{3|G_{n,k+1}|}{\delta_{\mathrm{min}}(G_{n,k+1})}=\frac{3|G_{n}|}{\delta_{\mathrm{min}}(G_{n,k+1})}\leq\frac{3|G_n|}{c|G_n|-2(k+1)}.
\end{equation}

Moreover, our upper bound for $k$ implies $c|G_n|-2(k+1)\geq c|G_n|-c^2|G_n|/3$, from where one immediately obtains the claimed upper bound
\begin{equation}
\textrm{diam}(G_{n,k+1})\leq \frac{9}{3c-c^2}
\end{equation}
and the proof is concluded.
\end{proof}

Recall from Subsection~\ref{subsec:GT} that the vertex and edge connectivity of any graph $G$ satisfy that  $\lambda(G)\geq \kappa(G)$. Hence, we obtain directly the following result.
\begin{corollary}\label{th:kappa}
If $\cG=(G_n)_{n\in\N}$ is a graph sequence such that
$\kappa(G_n)\in\Omega(|G_n|)$, then $\cG$ is AGME.
\end{corollary}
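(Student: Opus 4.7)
The plan is very short, because this is an immediate corollary of the preceding Proposition~\ref{th:lambda} together with a standard graph-theoretic inequality. First I would invoke the well-known chain $\delta_{\mathrm{min}}(G)\geq\lambda(G)\geq\kappa(G)$, which is recalled in Subsection~\ref{subsec:GT}. Applied termwise to the sequence $\cG=(G_n)_{n\in\N}$, this yields $\lambda(G_n)\geq\kappa(G_n)$ for every $n$.

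Next I would translate the hypothesis into the corresponding statement for edge-connectivity: by definition of $\Omega(\cdot)$, there exist $c>0$ and $n_0\in\N$ such that $\kappa(G_n)\geq c\,|G_n|$ for all $n\geq n_0$. Combined with the inequality above, this gives $\lambda(G_n)\geq c\,|G_n|$ for all $n\geq n_0$, i.e.\ $\lambda(G_n)\in\Omega(|G_n|)$. Finally, Proposition~\ref{th:lambda} applies directly and delivers the AGME conclusion.

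There is no real obstacle here; the content of the corollary lies entirely in Proposition~\ref{th:lambda}, and the role of this statement is just to record that the stronger connectivity notion $\kappa$ suffices (at the price of only upper-bounding the more informative quantity $\lambda$). The only thing to be careful about is that the inequality $\lambda\geq\kappa$ goes in the direction needed for the implication: a lower bound on $\kappa$ automatically gives the same lower bound on $\lambda$, so no subtlety arises.
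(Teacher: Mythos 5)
Your argument is correct and coincides with the paper's own derivation: the corollary is obtained there exactly by invoking the inequality $\lambda(G)\geq\kappa(G)$ from Subsection~\ref{subsec:GT} and then applying Proposition~\ref{th:lambda}. Nothing further is needed.
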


\subsection{AGME with minimal connectivity}

We present next an explicit construction of a minimally connected graph sequence displaying AGME.

\begin{figure}[h]
 \centering
\begin{tikzpicture}
    \node[draw, circle, inner sep=1pt, minimum size=4pt, fill=black] (a) at (0, 0) {};  
    \draw[black] (0, 0) ellipse (0.5cm and 0.75cm);  
    \node at (0, 0.3) {$K_n$};  

    \node[draw, circle, inner sep=1pt, minimum size=4pt, fill=black] (b) at (4, 0) {};  
    \draw[black] (4, 0) ellipse (0.5cm and 0.75cm);  
    \node at (4, 0.3) {$K_n$};  

    \draw (a) -- (b);

    \node at (2, -0.2) {$e$};
\end{tikzpicture}
 \caption{Graph $G_n$ of a minimally edge connected AGME graph sequence $\cG=(G_n)_n$. Circles represent clusters, which in this case are completely connected.}
\label{fig:mancuerna}
\end{figure}

\begin{proposition}\label{th:lowlambda}
There exists an AGME graph sequence $\cG=(G_n)_{n\in\N}$ such that
$\lambda(G_n)=\kappa(G_n)=1$ for all $n$.
\end{proposition}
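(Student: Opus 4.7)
The construction is the ``dumbbell'' graph sequence: take $G_n$ to be the disjoint union of two copies $K_n^{(1)},K_n^{(2)}$ of the complete graph $K_n$ joined by a single bridge edge $e_n=(u_n,v_n)$ with $u_n\in V(K_n^{(1)})$ and $v_n\in V(K_n^{(2)})$, as depicted in Fig.~\ref{fig:mancuerna}. Then $|G_n|=2n$ and removing either $e_n$ or one of its endpoints disconnects $G_n$, so $\lambda(G_n)=\kappa(G_n)=1$. To prove AGME, fix $d\geq 2$ and choose $p_0\in(\max\{1/(d+1),p_0^{*}(d)\},1)$, where $p_0^{*}(d)<1$ is the threshold supplied by Lemma~\ref{lemmaBSAto0} applied to the complete-graph sequence $(K_n)_n$ (which satisfies the hypotheses of Proposition~\ref{prop:sufficientASGME} via the $n-1$ length-$2$ edge-disjoint paths between any pair of vertices). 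The plan is to show $\sigma_d(G_n,p_0)$ is GME for all sufficiently large $n$; Proposition~\ref{characterizationAGME}(c) then delivers AGME in dimension $d$, and since $d$ was arbitrary, $\cG=(G_n)_n$ is AGME.

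Suppose for contradiction that $\sigma_d(G_n,p_0)$ is BS for infinitely many $n$, pass to that subsequence, and for each such $n$ fix a biseparable decomposition $\sigma_d(G_n,p_0)=\sum_M p_M\chi_M$. Classify the bipartitions $M|\bar M$ of $V(G_n)$ into four disjoint types: Type $D$ aggregates $M\in\{V(K_n^{(1)}),V(K_n^{(2)})\}$ (the single trivial bipartition between the two blocks); Type $A$ collects bipartitions that cut both $K_n^{(1)}$ and $K_n^{(2)}$; and Types $B$ and $C$ collect bipartitions that cut only $K_n^{(1)}$, respectively only $K_n^{(2)}$. Let $p_A,p_B,p_C,p_D$ be the corresponding aggregate weights. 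Taking the partial trace of $\sigma_d(G_n,p_0)$ over $V(K_n^{(2)})$ together with the bridge factor $H_{u_ne_n}$ of party $u_n$'s Hilbert space, the product structure of the isotropic network state identifies the reduced state with $\sigma_d(K_n^{(1)},p_0)$, while each $\chi_M$ maps to a state on $V(K_n^{(1)})$ separable across the residual bipartition $(M\cap V(K_n^{(1)}))|(\bar M\cap V(K_n^{(1)}))$, which is non-trivial precisely when $M$ cuts $K_n^{(1)}$, i.e.\ for Types $A$ and $B$. This yields the convex decomposition
\begin{equation*}
\sigma_d(K_n^{(1)},p_0)=(p_A+p_B)\,\rho_n^{(1)}+(1-p_A-p_B)\,\tau_n^{(1)},
\end{equation*}
with $\rho_n^{(1)}$ biseparable on the $n$ parties of $V(K_n^{(1)})$. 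Lemma~\ref{lemmaBSAto0} applied to $(K_n)_n$ at visibility $p_0\geq p_0^{*}(d)$ therefore forces $p_A+p_B\to 0$, and the symmetric partial trace over $V(K_n^{(1)})\cup H_{v_ne_n}$ gives $p_A+p_C\to 0$. Hence $p_D\to 1$ as $n\to\infty$ along the subsequence.

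On the other hand, tracing $\sigma_d(G_n,p_0)$ down to the two bridge qudits $H_{u_ne_n}\otimes H_{v_ne_n}$ produces $\rho_{e_n}(p_0,d)$, and under this partial trace every Type $D$ contribution, being separable across $V(K_n^{(1)})|V(K_n^{(2)})$, maps to a state separable across $u_n|v_n$. Consequently $\rho_{e_n}(p_0,d)$ admits a convex decomposition whose $u_n|v_n$-separable component carries weight at least $p_D$, so Eq.~(\ref{bsaisotropic}) gives
\begin{equation*}
p_D\leq\mathrm{BSA}(\rho_{e_n}(p_0,d))=\frac{(d+1)(1-p_0)}{d}<1,
\end{equation*}
since $p_0>1/(d+1)$; this bound is incompatible with $p_D\to 1$ for large $n$, yielding the desired contradiction. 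The chief technical subtlety is the first partial trace, which involves the sub-Hilbert space $H_{u_ne_n}$ of a single party $u_n$: one must verify that tracing over this factor still sends each $\chi_M$ to a state separable across the residual bipartition of $V(K_n^{(1)})$. This holds because separability across any fixed bipartition $M|\bar M$ is stable under partial trace over subsystems contained entirely in the Hilbert space associated to $M$ or to $\bar M$, so that the aggregated Types $A,B$ contribution is genuinely biseparable on $V(K_n^{(1)})$, as required to invoke Lemma~\ref{lemmaBSAto0}.
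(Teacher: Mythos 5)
Your proposal is correct and follows essentially the same route as the paper's proof: the same dumbbell construction, the same four-way classification of bipartitions (your types $B,C,D,A$ correspond to the paper's $I_1^n,I_2^n,I_3^n,I_4^n$), the same two partial traces onto $\sigma_d(K_n,p_0)$ combined with Lemma~\ref{lemmaBSAto0}, and the same final trace onto the bridge edge bounded via $\mathrm{BSA}(\rho(p_0,d))$ from Eq.~(\ref{bsaisotropic}). The subtlety you flag about tracing out the sub-factor $H_{u_ne_n}$ is real and your resolution is the correct one (the paper handles it implicitly through Eq.~(\ref{edgedeletion}) and the stability of $S_{M}$ under partial traces confined to one side of the cut).
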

\begin{proof}
To prove the statement we will construct a concrete graph sequence $\cG=(G_n)_{n\in\N}$ that is AGME and has minimal edge-connectivity (see Fig.~\ref{fig:mancuerna}).
For each $n\in\N$ the graph $G_n=(V_n,E_n)$ has order $|G_n|=2n$ and we partition the vertex set into two clusters of order $n$ by
$V_n=V_n^{(1)}\sqcup V_n^{(2)}$ which we also numerate
as $V_n^{(1)}=[n]$ and $V_n^{(2)}=\overline{[n]}=\{n+1,\dots,2n\}$. All vertices
in each cluster $V_n^{(i)}$, $i=1,2$, are pairwise connected (i.e., the corresponding induced subgraphs are complete graphs $K_n$). Moreover, the two clusters are connected by a unique bridge-edge $\{n,n+1\}\in E_n$. Thus, the set of edges of the graph can be partitioned as
$E_n=E_n^{(1)}\sqcup E_n^{(2)}\sqcup E_n^{(3)}$ with
\begin{equation}
    E_n^{(1)}=\{\{i,j\}\mid 1\leq i<j\leq n\}\,,\, E_n^{(2)}=\{\{i,j\} \mid n+1\leq i<j\leq 2n\}\,,\, E_n^{(3)}=\{\{n,n+1\}\}.
\end{equation}
Notice that, by construction, $\lambda(G_n)=\kappa(G_n)=1$ for any $n$,
since $G_n$ has one bridge-edge.

Next, we will show that $\cG$ is AGME by contradiction. Assume then that $\cG$ is not AGME
in a given dimension $d$.
Then, Proposition~\ref{prop:notAGMEnotABS}~(i) implies that there is a subsequence, $\cG_s=(G_{s(n)})_n$, which is ABS in dimension $d$. Let $p_0\in (0,1)$ be the value guaranteed by Lemma~\ref{lemmaBSAto0} when applied to the complete graph  $K_{s(n)}$ (which obviously fulfills the hypothesis of Proposition~\ref{prop:sufficientASGME} since it has $s(n)-1$ edge-disjoint paths with length $2$). The ABS nature of the subsequence $\cG_s$ implies that for every $p>\max\{p_0, 1/(d+1)\}$ there exists $n_0=n_0(d,p)$ such that for all $n\geq n_0$ we have
\begin{equation}\label{eqprop1}
\sigma_d(G_{s(n)},p)
    =\sum_{\emptyset\neq M\subsetneq[2n]}q_M^{(n)}\chi_M^{(n)}
    \quadtext{with} q_M^{(n)}\geq0\quadtext{and}
    \sum_Mq_M^{(n)}=1\;,
\end{equation}and where $\chi_M^{(n)}$ is an $2n$-partite quantum state that is separable in the bipartition $M|\overline{M}$ (note that these objects depend on the choice of $d$ and $p$ but we omit for simplicity to write them explicitly).

In the following it will be useful to divide the subsets $M$ over which the above sum runs into four different classes $\{I_i^n\}_{i=1}^4$. The first one,
$I_1^n$, will contain all subsets such that either $M\subsetneq[n]$ or $\overline{M}\subsetneq[n]$ (thus, it represents all bipartitions that only split the first half of the parties). Analogously, $I_2^n$ considers the bipartitions that only split the second half of the parties: it will contain all subsets such that either $M\subsetneq\overline{[n]}$ or $\overline{M}\subsetneq\overline{[n]}$ and $I_3^n$ corresponds to a single bipartition $I_3^n=\{[n],\overline{[n]}\}$. Lastly, $I_4^n$ contains all remaining subsets (that is, those corresponding to bipartitions that split both the first and the second cluster of the parties). Notice now that
\begin{equation}
    \tr_{\left(\bigotimes_{e\in \left(E_n^{(2)}\sqcup E_n^{(3)}\right)}H_e\right)}\sigma_d(G_{s(n)},p)=\sigma_d(K_{s(n)},p).
\end{equation}
Applying this partial trace to the right-hand-side of Eq.~(\ref{eqprop1}), we obtain a decomposition as the one given in Eq.~(\ref{eqlemmaBSAto0})
\begin{equation}
\sigma_d(K_{s(n)},p))=q_n\rho_n+(1-q_n)\tau_n
\quadtext{with}
q_n=\sum_{M\in I_1^n,I_4^n}q_M^{(n)}\,.
\end{equation}
Now, since we have chosen $p>p_0$, Lemma~\ref{lemmaBSAto0} implies that
$\lim_{n\to\infty}\sum_{M\in I_1^n,I_4^n}q_M^{(n)}=0$.

Similarly, taking the partial trace with respect to the subspace
$\bigotimes_{e\in E_n^{(1)}\sqcup E_n^{(3)}}H_e$ we obtain now
$\lim_{n\to\infty}\sum_{M\in I_2^n,I_4^n}q_M^{(n)}=0$. In particular, we obtain
\begin{equation}\label{eq:I}
 \lim_{n\to\infty}
  \left(
  \sum_{M\in I_1^n}q_M^{(n)}+\sum_{M\in I_2^n}q_M^{(n)}+\sum_{M\in I_4^n}q_M^{(n)}
  \right)=0\;.
\end{equation}

Next, if we take now the partial trace with respect to $\bigotimes_{e\in E_n^{(1)}\cup E_n^{(2)}}H_e$ in both sides of Eq.~(\ref{eqprop1}) we obtain
\begin{equation}
\rho(p,d)=q_{[n]}^{(n)}\tilde{\rho}_n+\left(1-q_{[n]}^{(n)}\right)\tilde{\tau}_n\;,
\end{equation}
where $\tilde\tau_n$ and $\tilde\rho_n$ are 2-qudit quantum states and the latter is moreover separable (note that, without loss of generality, we can assume for all $n$ that $q_{\overline{[n]}}^{(n)}=0$.)
Then, Eqs.~(\ref{bsa2}) and (\ref{bsaisotropic}) imply that for all $n\geq n_0$ we have
\begin{equation}\label{eq:M0}
q_{[n]}^{(n)}\leq \mathrm{BSA}(\rho(p,d))=\frac{d+1}{d}(1-p)
<1,
\end{equation}where in the last inequality we have used that $p>1/(d+1)$.

Finally, since $I_1^n\sqcup I_2^n\sqcup I_3^n\sqcup I_4^n$ gives the power set of $[2n]$ (excluding $\emptyset$ and $[2n]$) we must have on the one hand the normalization
\begin{equation}\label{eq:contra}
 1=\sum_M q^{(n)}_M
  = q_{[n]}^{(n)}+\sum_{M\in I_1^n}q_M^{(n)}
    +\sum_{M\in I_2^n}q_M^{(n)}+\sum_{M\in I_4^n}q_M^{(n)}\,
\end{equation}for every $n$ but, on the other hand, Eq.~(\ref{eq:I}) and the inequality in (\ref{eq:M0})
show that the right hand side of Eq.~(\ref{eq:contra}) is strictly smaller than $1$
for $n$ large enough, leading to a contradiction.
\end{proof}

The construction of the sequence of graphs in the preceding proof can be easily generalized to other sequences of connected graphs with minimal edge-connectivity. In addition to this, notice that, due to Menger's theorem (cf. Section~\ref{subsec:GT}), this graph sequence does not satisfy the hypothesis of
Proposition~\ref{prop:sufficientASGME}. Thus, this shows that the latter condition is sufficient for AGME but not necessary.

\section{Conditions based on the degree growth}\label{secdeg}

In the previous section we have seen that there is indeed a precise mathematical relation between connectivity and AGME. However, connectivity alone cannot characterize AGME and it cannot say anything about the ABS property. This is because graph sequences with the smallest possible edge-connectivity can nevertheless be AGME. In this section we turn our attention to other graph parameters. A natural choice to consider is the sequence of maximal/minimal degrees associated to $\cG$ (cf.\ Eq.~(\ref{maxmindegree})).
These numbers are relevant graph-theoretic parameters that contain information related to the connectivity and have also an important physical meaning in our context. They quantify the maximal/minimal local dimension of the underlying isotropic network state and are, therefore, directly related to the overall cost of preparing and manipulating the network.

As we will see in this section, the asymptotic behavior of the maximal/minimal degree sequences provide
sufficient conditions that guarantee either ABS (\textit{slow} degree growth, see Theorem~\ref{th:degABS}) or
AGME (\textit{fast} degree growth, see Theorem~\ref{th:degAGME}), the
latter condition giving a stronger result than Proposition~\ref{th:lambda}.

\subsection{Slow degree growth implies the ABS property}\label{subsecdeg1}

\begin{theorem}\label{th:degABS}
Any graph sequence $\cG=(G_n)_{n\in\N}$ for which the associated sequence of maximal degrees satisfies
$\delta_{\mathrm{max}}(G_n)\in o\left(\log|G_n|\right)$ is ABS.
\end{theorem}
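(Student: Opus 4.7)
The plan is to decompose each edge state via its best separable approximation, expand the tensor product, identify by a purely graph-combinatorial criterion the terms that are manifestly biseparable, and then use the degree hypothesis to control the residual weight through a concentration bound.

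Fix $d\geq 2$ and $p\in(1/(d+1),1)$. By Eq.~(\ref{bsaisotropic}), the BSA of the isotropic state is $q=\frac{d+1}{d}(1-p)\in(0,1)$, so we can write $\rho_e(p,d)=q\sigma_e+(1-q)\tau_e$ with $\sigma_e\in S(H_e)$. Expanding the product over the edges yields
\begin{equation*}
\sigma_d(G_n,p)=\sum_{S\subseteq E_n} q^{|E_n\setminus S|}(1-q)^{|S|}\rho_S,\qquad \rho_S=\bigotimes_{e\notin S}\sigma_e\otimes\bigotimes_{e\in S}\tau_e.
\end{equation*}
The key local observation is that whenever the subgraph $(V_n,S)$ has an isolated vertex $v$, the state $\rho_S$ lies in $S_v(H)$: every edge incident to $v$ belongs to $E_n\setminus S$ and hence carries a separable state $\sigma_e$, so $\rho_S$ factorizes across $\{v\}|V_n\setminus\{v\}$ and is therefore biseparable.

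Interpret the coefficients probabilistically, placing each edge independently into $S$ with probability $1-q$, and let $X_n$ denote the number of isolated vertices of the random subgraph. Then
\begin{equation*}
E[X_n]=\sum_{v\in V_n} q^{\delta(v)}\geq |G_n|\,q^{\delta_{\max}(G_n)}.
\end{equation*}
Since $\delta_{\max}(G_n)\in o(\log|G_n|)$ and $q$ is a fixed constant in $(0,1)$, one has $q^{\delta_{\max}(G_n)}\geq |G_n|^{-\varepsilon}$ for every $\varepsilon>0$ and all large $n$, so that $E[X_n]\to\infty$ at polynomial rate. A direct second-moment calculation, using that the events "$v$ isolated" and "$v'$ isolated" are independent for non-adjacent $v,v'$ and positively correlated only through a single shared edge when adjacent, together with $|E_n|\leq |G_n|\delta_{\max}(G_n)/2$, gives $\mathrm{Var}(X_n)/E[X_n]^2\to 0$. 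Chebyshev's inequality then produces $\epsilon_n:=P(X_n=0)\to 0$ and a decomposition
\begin{equation*}
\sigma_d(G_n,p)=(1-\epsilon_n)\,\rho^{\mathrm{BS}}_n+\epsilon_n\,\rho^{\mathrm{bad}}_n
\end{equation*}
in which $\rho^{\mathrm{BS}}_n$ is biseparable and $\rho^{\mathrm{bad}}_n$ is the normalized mixture over the configurations $S$ with no isolated vertex.

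The main obstacle is that biseparability is a strict membership property, so a vanishing residual does not by itself certify $\sigma_d(G_n,p)\in\mathrm{BS}(H)$; the hard step is to handle $\rho^{\mathrm{bad}}_n$ for each individual large $n$. To close the argument I would iterate: apply an analogous edge-local refinement to the conditional bad state at each stage to extract additional biseparable mass, and pass to the resulting infinite convex combination. Since the set of biseparable states is closed in the norm topology (as used in the discussion preceding Proposition~\ref{prop:facts}), any such limit lies in $\mathrm{BS}(H)$. The delicate technical content, and where I expect the bulk of the original proof's work to lie, is in verifying that the per-step contraction factor stays bounded away from $1$ uniformly across the iteration; an alternative route is to sharpen the combinatorial step by replacing "isolated vertex" with the weaker "$(V_n,S)$ disconnected" and showing that under $\delta_{\max}(G_n)\in o(\log|G_n|)$ the spanning-connected configurations can be manifestly absorbed into biseparable mass through a single refined decomposition.
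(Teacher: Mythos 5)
Your combinatorial starting point is the right one (it matches the paper's: the terms of the expansion in which some vertex sees only separable/noise factors on all its incident edges are $1$-biseparable), but the argument has a genuine gap exactly where you flag it, and the proposed fixes do not close it. Showing that the weight $\epsilon_n=P(X_n=0)$ of the non-biseparable configurations tends to $0$ only proves that $\sigma_d(G_n,p)$ is within distance $O(\epsilon_n)$ of the biseparable set; since biseparability is exact membership and each $\sigma_d(G_n,p)$ lives on a different Hilbert space, there is no fixed limit point to which closedness of $\mathrm{BS}(H)$ could be applied. The iteration you sketch does not obviously get off the ground: the conditional state $\rho^{\mathrm{bad}}_n$ is a mixture of products of $\sigma_e$'s and $\tau_e$'s over configurations with \emph{no} isolated vertex, and the residual factors $\tau_e$ from the BSA decomposition admit no further convex splitting (they sit at the entangled extreme), so there is no second-stage ``edge-local refinement'' to perform. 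Your alternative of weakening ``isolated vertex'' to ``disconnected'' still leaves the spanning-connected configurations (e.g.\ $S=E_n$) unaccounted for.

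The paper closes this gap with an exact recombination rather than a concentration bound. It expands in the $\phi^+/\none$ decomposition (not the BSA one), and then mixes each non-biseparable term $F_n\in B_n$, for each vertex $v$ with weight $1/|G_n|$, with the already-biseparable term of $A_n$ obtained by deleting all edges of $F_n$ at $v$, the latter taken with weight $(2^{\delta(v)}-1)^{-1}$. Because the two local factors at $v$ are $\bigotimes_{e\in F_n(v)}\phi^+_e$ and $\bigotimes_{e\in F_n(v)}\none_e$, the recombined local state is a genuine isotropic state in dimension $d^{|F_n(v)|}$, so its separability is decided exactly by Eq.~(\ref{isotropicthreshold}); the resulting condition $d^{|F_n(v)|}\left(p/(1-p)\right)^{|F_n(v)|}(2^{\delta(v)}-1)\leq|G_n|$ holds for all $v$ and all large $n$ precisely under $\delta_{\mathrm{max}}(G_n)\in o(\log|G_n|)$. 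This yields an exact $1$-biseparable decomposition with no residual term. Note that this step genuinely uses the $\phi^+/\none$ splitting: with your BSA splitting the separable part $\sigma_e$ is not proportional to the identity, so the recombined local state would not be isotropic and its separability threshold would not be available. Your probabilistic estimate is correct as far as it goes, but it is the absorption of the bad terms, not their small measure, that carries the proof.
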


\begin{proof}
For any given $d$, which is considered fixed in the following, we will simplify the notation of the isotropic
states in Eq.~(\ref{isotropicd}) putting $\phi^+$ and $\none$ instead of $\phi^+_d$ and $\none_{d^2}$, respectively, and use edge subindices to label the Hilbert spaces on which the operators act.

According to Definition~\ref{def:asymptotic}~(ii) we need to show that the isotropic network state
$\sigma_d(G_n,p)$ given in Eq.~(\ref{networkGdp}) is biseparable for all $p<1$ and for $n$ large enough.
Although it is not strictly necessary, we will show in order to simplify the reasoning that
$\sigma_d(G_n,p)$ is actually 1-biseparable. For this, rewrite first (using distributivity) the isotropic network state in terms of $2^{|E_n|}$ summands labeled by subsets of the edge set $E_n$ as
\begin{equation}\label{eq:summands}
\sigma_d(G_n,p)=\sum_{F_n\subseteq
E_n}p^{|F_n|}(1-p)^{|E_n|-|F_n|}\bigotimes_{e\in F_n}\phi^+_e\bigotimes_{e\notin
F_n}\none_e\;.
\end{equation}
We will split the preceding sum into three parts labeled by families of edge subsets
corresponding to different configurations of connectivity of vertices.
For this denote for any subset $F_n\subset E_n$
the edges in $F_n$ adjacent with a vertex $v\in V_n$ by
\begin{equation}
 F_n(v):=F_n\cap E_n(v)\;.
\end{equation}

\begin{itemize}
 \item We consider first all subsets of edges that label in the previous sum
\textit{non} 1-biseparable states:
\begin{equation}\label{bn}
 B_n=\{F_n\subseteq E_n \mid
 F_n(v)\neq\emptyset\quadtext{for any} v\in V_n\}.
\end{equation}

\item Next, we consider all subsets of edges labeling states that
are 1-biseparable \textit{exactly} at one vertex $v$:
\begin{equation}\label{an}
A_n=\{F_n'\subseteq E_n \mid \textrm{ there exists a unique }
      v\in V_n \textrm{ such that } F_n'(v)=\emptyset\}\;.
\end{equation}
In fact, if $F_n'\in A_n$ and for (exactly) one vertex $v\in V_n$
we have that $F_n'(v)=\emptyset$, then (see Subsection~\ref{sec21})
\begin{equation}
\bigotimes_{e\in {F'_n}}
\phi^+_e\bigotimes_{e\notin {F'_n}}\none_e\in S_v(H).
\end{equation}
Thus, mixtures of these terms for different
choices of $v$ give rise to
a 1-biseparable state.

\item Finally, we consider all remaining subsets of edges,
\begin{equation}\label{cn}
C_n= E_n\setminus \left(A_n\cup B_n\right)\;,
\end{equation}
which manifestly correspond to 1-biseparable terms.
\end{itemize}

Note that for any $F_n\in B_n$ and for any $v\in V_n$ there exists a
unique $F'_n\in A_n$ such that $F'_n(v)=\emptyset$ and $F_n(w)=F'_n(w)$ for all $w\neq v$ ($F'_n$ is obtained from $F_n$ by removing all edges adjacent to $v$).
The idea of the proof consists in recombining non-1-biseparable terms in $B_n$
with the 1-biseparable terms from $A_n$ (with appropriate weights) so
as to overall obtain a 1-biseparable state for $n$ sufficiently large. In this
process the summands labeled by $C_n$ will remain untouched as they already correspond to
1-biseparable states.
To establish the weights needed in the mixture notice that any $F_n\in B_n$ can be
combined with $|G_n|$ terms in $A_n$ with the preceding property (one for each choice of vertex). However, if we want to do so
for all terms in $B_n$, each term $F'_n\in A_n$ such that $F'_n(v)=\emptyset$
for a given $v\in V_n$ needs to be used for all terms $F_n\in B_n$ such that
$F_n(w)=F'_n(w)$ for all $w\neq v$ and note that there are $2^{\delta_n(v)}-1$
such terms (one for each possible choice of $F_n(v)$ excluding the empty
set).

Considering the vertex refined decomposition of the $B_n$ summands as
\begin{eqnarray}
\lefteqn{\sum_{F_n\in B_n}
p^{|F_n|}(1-p)^{|E_n|-|F_n|}\bigotimes_{e\in
F_n}\phi^+_e\bigotimes_{e\notin F_n}\none_e} \qquad\qquad\nonumber\\[3mm]
&=&
\sum_{F_n\in B_n}\sum_{v\in V_n}\frac{1}{|G_n|}\;\;
p^{|F_n|}(1-p)^{|E_n|-|F_n|}\bigotimes_{e\in
F_n}\phi^+_e\bigotimes_{e\notin F_n}\none_e\;
\end{eqnarray}
(and simmilarly for the $A_n$ summands with weights
$\left(2^{\delta_n(v)}-1\right)^{-1}$) we can rewrite Eq.~(\ref{eq:summands}) as
\begin{eqnarray}
\sigma_d(G_n,p)&=&
\sum_{F_n\in B_n}
p^{|F_n|}(1-p)^{|E_n|-|F_n|}\bigotimes_{e\in
F_n}\phi^+_e\bigotimes_{e\notin F_n}\none_e
+ \sum_{F_n'\in A_n}\Big(\cdots \Big)
+ \sum_{F_n\in C_n}\Big(\cdots \Big)
\nonumber \\[3mm]
&=& \label{regroup}
\sum_{F_n\in B_n}\sum_{v\in V_n}
p^{|F_n|}(1-p)^{|E_n|-|F_n|}
\chi_{(v)}^{(F_n)}
\bigotimes_{e\in F_n\setminus F_n(v)}\phi^+_e
\bigotimes_{e\notin F_n}\none_e \nonumber\\[2mm]
&& +\sum_{F_n\in
C_n}p^{|F_n|}(1-p)^{|E_n|-|F_n|}\bigotimes_{e\in
F_n}\phi^+_e\bigotimes_{e\notin F_n}\none_e,\label{eq:summands2}
\end{eqnarray}
where
\begin{equation}
\chi_{(v)}^{(F_n)}=
\frac{1}{|G_n|}\bigotimes_{e\in F_n(v)}\phi^+_e+\frac{p^{-|F_n(v)|}(1-p)^{|F_n(v)|}}{2^{\delta_n(v)}-1}\bigotimes_{e\in F_n(v)}\none_e
\end{equation}
is an unnormalized state for each choice of $F_n\in B_n$ and $v\in V_n$ realizing the combination of $B_n$ and $A_n$ summands at each vertex. In fact, up to proper normalization, $\chi_{(v)}^{(F_n)}$ is an isotropic state in $H_1\otimes H_2$ where $\dim H_1=\dim H_2=d^{F_n(v)}$.

Now, according to Eq.~(\ref{isotropicthreshold}), such state is separable if and only if
\begin{equation}
\frac{\frac{1}{|G_n|}}{\frac{1}{|G_n|}+\frac{p^{-|F_n(v)|}(1-p)^{|F_n(v)|}}{2^{\delta_n(v)}-1}}\leq \frac{1}{d^{|F_n(v)|}+1}
\end{equation}
or, equivalently,
\begin{equation}
d^{|F_n(v)|}\Big(\frac{p}{1-p}\Big)^{|F_n(v)|}(2^{\delta_n(v)}-1)\leq |G_n|.
\end{equation}

We want to show that for every $p\in (0,1)$, there exists $n_0\geq 1$ such that for every $n\geq n_0$ the previous state is separable for all $F_n$ and $v$. Actually, according to Proposition~\ref{prop:facts}, we can assume that $p>1/2$, so that $p/(1-p)>1$.

Hence, it suffices to show that  for every $p\in (1/2,1)$, there exists $n_0\geq 1$ such that for every $n\geq n_0$ we have
\begin{equation}
d^{\delta_\mathrm{max}(G_n)}\Big(\frac{p}{1-p}\Big)^{\delta_\mathrm{max}(G_n)}2^{\delta_\mathrm{max}(G_n)}\leq |G_n|,
\end{equation}
or, equivalently,
\begin{equation}
\Big(\log d+ \log \frac{p}{1-p}+\log 2\Big)
  \delta_\mathrm{max}(G_n)\leq \log |G_n|,
\end{equation}
which follows from our assumption on the growth of $\delta_\mathrm{max}(G_n)$.
\end{proof}

The preceding theorem gives a sufficient condition for $\cG=(G_n)_n$ to be ABS based on the slow degree
growth of $\delta_{\mathrm{max}}(G_n)$. Note, nevertheless, that this result \textit{does not} imply the result of \cite{ASGME} that all trees
are ABS. To see this it is enough to consider the star graph $S_n$ (one
central vertex and $n$ leaves) since $S_n$ is a tree but $\delta_{\mathrm{max}}(S_n)=n$.
Using a similar proof as in the preceding theorem one can obtain the following
stronger version.

\begin{theorem}\label{th:degABSv2}
Let $\cG=(G_n)_{n\in\N}$ be a graph sequence for which there exists a choice of subsets of vertices $V'_n\subset V_n$ with $\lim_{n\to\infty}|V'_n|=\infty$ and such that
$\max_{v\in V'_n}\delta(v)\in o(\log|V_n'|)$. Then the graph sequence $\cG$ is ABS.
\end{theorem}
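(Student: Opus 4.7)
My plan is to mimic the proof of Theorem~\ref{th:degABS}, replacing every occurrence of $V_n$ in the vertex-based bookkeeping by $V'_n$. For fixed $d$ and $p\in(0,1)$, I would start from the same expansion (\ref{eq:summands}) of $\sigma_d(G_n,p)$ as a sum over edge subsets $F_n\subseteq E_n$, and partition the indices into three classes relativized to $V'_n$: $B_n=\{F_n\mid F_n(v)\neq\emptyset\;\forall v\in V'_n\}$, $A_n=\{F_n\mid F_n(v)=\emptyset\text{ for a unique }v\in V'_n\}$, and $C_n$ the remainder. Any summand indexed by $A_n\cup C_n$ is already a product state separable in some bipartition $\{v\}|V_n\setminus\{v\}$ with $v\in V'_n$, so only the $B_n$ summands require work. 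This relativization is precisely what accommodates obstructing examples such as the star graph: the central hub can be excluded from $V'_n$ and thus carry arbitrarily large degree.

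The recombination step is carried out verbatim as in Eq.~(\ref{eq:summands2}). Each $A_n$ summand whose unique isolated vertex is $v\in V'_n$ is split into $2^{\delta(v)}-1$ equal fragments, one for every element of $B_n$ obtained by adjoining a non-empty $S\subseteq E_n(v)$; and each $B_n$ summand is split into $|V'_n|$ equal parts indexed by $v\in V'_n$. Matching fragments gives, for every pair $(F_n,v)$ with $F_n\in B_n$ and $v\in V'_n$, an unnormalized bipartite factor
\begin{equation}
\chi^{(F_n)}_{(v)}=\frac{1}{|V'_n|}\bigotimes_{e\in F_n(v)}\phi^+_e+\frac{p^{-|F_n(v)|}(1-p)^{|F_n(v)|}}{2^{\delta(v)}-1}\bigotimes_{e\in F_n(v)}\none_e,
\end{equation}
tensored with $\phi^+$ or $\none$ states on all remaining edges.

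Up to normalization, $\chi^{(F_n)}_{(v)}$ is an isotropic state in local dimension $d^{|F_n(v)|}$ on the bipartition separating $v$ from its $F_n(v)$-neighbours; tensoring with $\none$ on $E_n(v)\setminus F_n(v)$ and with product or maximally entangled states on the rest yields a state separable across $\{v\}|V_n\setminus\{v\}$ whenever $\chi^{(F_n)}_{(v)}$ itself is separable. By the isotropic threshold~(\ref{isotropicthreshold}), separability amounts to
\begin{equation}
d^{|F_n(v)|}\left(\frac{p}{1-p}\right)^{|F_n(v)|}\left(2^{\delta(v)}-1\right)\leq |V'_n|,
\end{equation}
which, after the WLOG reduction $p>1/2$ via Proposition~\ref{prop:facts}~(ii) and the crude bound $|F_n(v)|\leq\delta(v)\leq\max_{w\in V'_n}\delta(w)$, is implied by
\begin{equation}
\max_{w\in V'_n}\delta(w)\left(\log d+\log\frac{p}{1-p}+\log 2\right)\leq\log|V'_n|.
\end{equation}
The hypothesis $\max_{v\in V'_n}\delta(v)\in o(\log|V'_n|)$ together with $|V'_n|\to\infty$ forces this inequality for all $n$ large enough, so each $(F_n,v)$-summand is biseparable and so is $\sigma_d(G_n,p)$.

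The main obstacle, beyond routine combinatorics, is ensuring that the recombination identity of Eq.~(\ref{eq:summands2}) remains valid after shrinking the reference vertex set from $V_n$ to $V'_n$: concretely, one must verify that for every $F'_n\in A_n$ with isolated vertex $v\in V'_n$ and every non-empty $S\subseteq E_n(v)$, the subset $F'_n\cup S$ actually lies in the new $B_n$. This is immediate because $F'_n(w)\subseteq (F'_n\cup S)(w)$ for all $w\neq v$, so any non-empty $F'_n(w)$ (with $w\in V'_n\setminus\{v\}$) survives; crucially, no constraint is imposed on edges with endpoints in $V_n\setminus V'_n$, which is precisely the feature that makes this theorem strictly stronger than Theorem~\ref{th:degABS}.
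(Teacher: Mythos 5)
Your proposal is correct and follows essentially the same route as the paper's proof: the same edge-subset expansion of $\sigma_d(G_n,p)$, the same three-way partition of the index set relativized to $V'_n$, the same vertex-by-vertex recombination into unnormalized isotropic states with weight $1/|V'_n|$, and the same separability criterion $d^{|F_n(v)|}\left(p/(1-p)\right)^{|F_n(v)|}\left(2^{\delta(v)}-1\right)\leq|V'_n|$, concluded from $\max_{v\in V'_n}\delta(v)\in o(\log|V'_n|)$ after the reduction to $p>1/2$. The only divergence is that you relativize $B_n$ to $V'_n$ as well, whereas the paper keeps $B_n$ exactly as in Theorem~\ref{th:degABS} and only restricts $A_n$; your choice is the one under which the closure check you perform at the end (that $F'_n\cup S$ lands back in $B_n$ without any condition on vertices outside $V'_n$) goes through cleanly, so it is if anything the tidier bookkeeping.
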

\begin{proof}
The proof follows the same lines as in the previous theorem with the sets $B_n$ and $C_n$ defined in the same way as therein (cf.\ Eqs.~(\ref{bn}) and (\ref{cn})) but with $A_n$ now containing only all subsets of edges labeling states that
are 1-biseparable \textit{exactly} at one vertex $v\in V'_n$, i.e.\
\begin{equation}\label{anprima}
A_n=\{F_n'\subseteq E_n \mid \textrm{ there exists a unique }
      v\in V'_n \textrm{ such that } F_n'(v)=\emptyset\}\;.
\end{equation}
Following the same reasoning that leads to Eq.~(\ref{eq:summands2}), we can see that $\sigma_d(G_n,p)$ can also be written as
\begin{eqnarray}
\sigma_d(G_n,p)&=& \label{regroup}
\sum_{F_n\in B_n}\sum_{v\in V'_n}
p^{|F_n|}(1-p)^{|E_n|-|F_n|}
\chi_{(v)}^{(F_n)}
\bigotimes_{e\in F_n\setminus F_n(v)}\phi^+_e
\bigotimes_{e\notin F_n}\none_e \nonumber\\[2mm]
&& +\sum_{F_n\in
C_n}p^{|F_n|}(1-p)^{|E_n|-|F_n|}\bigotimes_{e\in
F_n}\phi^+_e\bigotimes_{e\notin F_n}\none_e,
\end{eqnarray}
where now
\begin{equation}
\chi_{(v)}^{(F_n)}=
\frac{1}{|V'_n|}\bigotimes_{e\in F_n(v)}\phi^+_e+\frac{p^{-|F_n(v)|}(1-p)^{|F_n(v)|}}{2^{\delta_n(v)}-1}\bigotimes_{e\in F_n(v)}\none_e
\end{equation}
is an unnormalized isotropic state for each choice of $F_n\in B_n$ and $v\in V'_n$, and thus separable iff
\begin{equation}
d^{|F_n(v)|}\Big(\frac{p}{1-p}\Big)^{|F_n(v)|}(2^{\delta_n(v)}-1)\leq |V'_n|.
\end{equation}
Then, the same argument that concludes the proof of Theorem~\ref{th:degABS} leads us now to the claim here.
\end{proof}

\begin{remark}
Notice that Theorem~\ref{th:degABSv2} is strong enough to imply the result of \cite{ASGME} that all sequences of tree graphs are ABS. In fact, given a graph sequence  $\cG=(G_n)_{n\in\N}$ in which $G_n$ is a tree for all $n$, if we choose
\begin{equation}
 V_n'=\{v\in V_n\mid \delta(v)\leq 2 \}\;
\end{equation}
we obtain a sequence of subsets of vertices that satisfies the hypotheses of Theorem~\ref{th:degABSv2} as it must hold for all $n$ that
\begin{equation}
|V'_n|\geq\frac{|G_n|+2}{2}.
\end{equation}
This last claim follows from the fact that for every tree $T$ it holds that $|W|\leq(|T|-2)/2$, where $W$ is the subset of vertices of $T$ with degree strictly larger than 2. Indeed, for any tree $T=(V,E)$ one has that $|E|=|T|-1$ and, therefore, by the handshaking lemma that
\begin{equation}
2|T|-2=2|E|=\sum_{v\in V}\delta(v)\geq3|W|+|T|-|W|,
\end{equation}
which leads to the above inequality.
\end{remark}

\subsection{Fast degree growth implies the AGME property}\label{subsecdeg2}

We address in this subsection conditions on the minimal degree growth of a graph sequence $\cG=(G_n)_n$ that guarantee
that $\cG$ is AGME. To make contact with Section~\ref{seccon} recall that $\delta_{\mathrm{min}}(G_n)\geq \lambda(G_n)$.
Note, in addition, that in the context of simple graphs one always has $\delta_{\mathrm{min}}(G_n)< |G_n|$ so that the conditions
$\delta_{\mathrm{min}}(G_n)\in\Omega(|G_n|)$ or
$\lambda(G_n)\in\Omega(|G_n|)$ mean that the minimal degree or edge-connectivity
have essentially a linear growth, i.e.
$\delta_{\mathrm{min}}(G_n)\in\Theta(|G_n|)$ or $\lambda(G_n)\in\Theta(|G_n|)$ respectively.

We will state first two technical lemmas related with the growth of the edge-connectivity.

\begin{lemma}\label{lemma1degAGME}
Let $\cG=(G_n)_n$ be a graph sequence such that $\lambda(G_n)\not\in\Omega(|G_n|)$. Then, either $\lambda(G_n)\in o(|G_n|)$ or there exists a proper subsequence
$\cG_s=(G_{s(n)})_n$ such that $\lambda(G_{s(n)})\in\Omega(|G_{s(n)}|)$.
\end{lemma}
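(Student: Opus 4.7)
The plan is purely real-analytic: translate the hypothesis and the two alternatives into statements about the sequence of ratios $a_n := \lambda(G_n)/|G_n| \in [0,1]$ and apply the standard relationship between $\liminf$, $\limsup$, and convergence.

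First, I would unpack the asymptotic notation as recalled in Section~\ref{sec2}. The hypothesis $\lambda(G_n)\not\in\Omega(|G_n|)$ is equivalent to $\liminf_{n\to\infty} a_n = 0$, while the first alternative in the conclusion, $\lambda(G_n)\in o(|G_n|)$, is equivalent to $\lim_{n\to\infty} a_n = 0$. The second alternative asks for a proper subsequence $\cG_s = (G_{s(n)})_n$ along which $\liminf_{n\to\infty} a_{s(n)} > 0$.

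Assuming that the first alternative fails, so that $a_n\not\to 0$, I would argue that $\limsup_{n\to\infty} a_n > 0$ (this uses only that $(a_n)_n$ is bounded in $[0,1]$, so all subsequential limits exist and at least one of them is strictly positive; otherwise every subsequential limit would be $0$, forcing convergence to $0$). By the definition of $\limsup$ as the supremum of subsequential limits, I can then pick a strictly increasing $s\colon\N\to\N$ such that
\begin{equation}
\lim_{n\to\infty} a_{s(n)} \;=\; \limsup_{n\to\infty} a_n \;=:\; L \;>\; 0.
\end{equation}
For this subsequence, $\liminf_{n\to\infty} a_{s(n)} = L > 0$, which by the equivalence recalled in Section~\ref{sec2} is exactly the statement $\lambda(G_{s(n)})\in\Omega(|G_{s(n)}|)$.

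Finally, I would verify that this subsequence is proper, i.e.\ that $s$ cannot be the identity. If it were, we would have $\liminf a_n = L > 0$, directly contradicting the hypothesis $\liminf a_n = 0$. Hence $\cG_s$ is a proper subsequence and the claim follows. There is no serious obstacle here; the entire content is the standard fact that a bounded real sequence either converges to $0$ or admits a subsequence bounded away from $0$, together with a careful translation between the Landau symbols and $\liminf$/$\limsup$.
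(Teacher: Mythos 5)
Your proof is correct and follows essentially the same route as the paper's: both reduce the claim to the bounded sequence of ratios $\lambda(G_n)/|G_n|\in[0,1]$ and observe that it either converges to $0$ (giving the $o(|G_n|)$ alternative) or admits a subsequence converging to a strictly positive limit (giving the $\Omega$ alternative). Your write-up is in fact slightly more careful than the paper's, since you explicitly justify why the extracted subsequence must be proper.
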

\begin{proof}
Note that $\lambda(G_n)\not\in\Omega(|G_n|)$ is equivalent to
\begin{equation}
\liminf_{n\to\infty}\frac{\lambda(G_n)}{|G_n|}=0.
\end{equation}
There are now two possibilities: either the limit of the sequence exists and
\begin{equation}
\lim_{n\to\infty}\frac{\lambda(G_n)}{|G_n|}=0\,,
\end{equation}
which means $\lambda(G_n)\in o(|G_n|)$, or there exists a subsequence
$\left(\frac{\lambda(G_{s(n)})}{|G_{s(n)}|}\right)_n$, satisfying
\begin{equation}
 \lim_{n\to\infty}\frac{\lambda(G_{s(n)})}{|G_{s(n)}|}\in (0,1]\;.
\end{equation}
Here, we have used that the sequence $(\lambda(G_n)/|G_n|)_n$ is bounded above by $1$.
Therefore we conclude that in this case
$\lambda(G_{s(n)})\in\Omega(|G_{s(n)}|)$.
\end{proof}

\begin{lemma}\label{lemma2degAGME}
 Let $\cG=(G_n)_{n\in\N}$ be a graph sequence such that
 $\lambda(G_n)\in o(|G_n|)$. If $\delta_{\mathrm{min}}\left({G}_n\right)\in\Omega(|G_n|)$ there exists a constant $c\in (0,1/2]$ such that
 \begin{equation}
 \delta_{\mathrm{min}}\left({G}_n\right)\geq c\left|{G}_n\right|
  +o\left(\left|{G}_n\right|\right)
  \end{equation}
and the two sequences of disjoint connected components
$\cG^{(1)}=\left(G_n^{(1)}\right)_n$ and $\cG^{(2)}=\left(G_n^{(2)}\right)_n$
obtained from $\cG$ by deleting for each $n$ a cut-edge set of cardinality
$\lambda(G_n)$ satisfy that $\delta_{\mathrm{min}}\left({G}_n^{(i)}\right)\in\Omega(|G_n^{(i)}|)$ and, in particular,
\begin{equation}
  \delta_{\mathrm{min}}\left({G}^{(i)}_n\right)\geq\frac{c}{1-c}\;\left|{G}^{(i)}_n\right|
  +o\left(\left|{G}^{(i)}_n\right|\right)\;,\quad i\in\{1,2\}.
\end{equation}
\end{lemma}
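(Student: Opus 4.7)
The plan is to proceed in two stages. In the first stage I would extract the constant $c\in(0,1/2]$ together with the first lower bound on $\delta_{\mathrm{min}}(G_n)$; in the second stage I would use this bound and the fact that the cut has only $o(|G_n|)$ edges to transfer the estimate to the two components.

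For the first stage, I would set $c:=\liminf_{n\to\infty}\delta_{\mathrm{min}}(G_n)/|G_n|$. The hypothesis $\delta_{\mathrm{min}}(G_n)\in\Omega(|G_n|)$ gives $c>0$. To see that $c\leq 1/2$, I would argue by contradiction: if $c>1/2$, then $\delta_{\mathrm{min}}(G_n)\geq\lfloor|G_n|/2\rfloor$ for infinitely many $n$, so by the criterion in Eq.~(\ref{maxedgeconnected}) those $G_n$ are maximally edge-connected and $\lambda(G_n)=\delta_{\mathrm{min}}(G_n)>|G_n|/2$ along this subsequence, contradicting $\lambda(G_n)\in o(|G_n|)$. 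A standard liminf argument then produces a sequence $\epsilon_n\to 0$ with $\delta_{\mathrm{min}}(G_n)\geq(c-\epsilon_n)|G_n|$ for large $n$, which is the claimed first inequality.

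For the second stage, I would fix $n$, choose a minimum edge-cut $F_n$ of size $\lambda(G_n)$, and recall the standard fact that removing a minimum edge-cut produces exactly two connected components $G_n^{(1)}$ and $G_n^{(2)}$ (otherwise one could isolate one component via a strictly smaller cut). Since every edge at a vertex $v\in V(G_n^{(i)})$ that is deleted must belong to $F_n$, one has $\delta_{G_n^{(i)}}(v)\geq\delta_{G_n}(v)-\lambda(G_n)$; combining with the first stage and $\lambda(G_n)\in o(|G_n|)$ yields
\begin{equation*}
\delta_{\mathrm{min}}(G_n^{(i)})\geq c|G_n|+o(|G_n|).
\end{equation*}
Using $\delta_{\mathrm{min}}(G_n^{(i)})+1\leq|G_n^{(i)}|$ I deduce $|G_n^{(i)}|/|G_n|\geq c+o(1)$ for each $i\in\{1,2\}$, and together with $|G_n|=|G_n^{(1)}|+|G_n^{(2)}|$ this forces $|G_n^{(i)}|/|G_n|\leq 1-c+o(1)$ as well. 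Inverting, $|G_n|/|G_n^{(i)}|\geq 1/(1-c)+o(1)$, and substituting this back into the previous display gives
\begin{equation*}
\delta_{\mathrm{min}}(G_n^{(i)})\geq \frac{c}{1-c}|G_n^{(i)}|+o(|G_n^{(i)}|),
\end{equation*}
as desired.

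The only subtle point is the bookkeeping between the two asymptotic scales $|G_n|$ and $|G_n^{(i)}|$: rewriting an $o(|G_n|)$ remainder as an $o(|G_n^{(i)}|)$ remainder requires that these two orders coincide, which is guaranteed precisely by $|G_n^{(i)}|\geq c|G_n|+o(|G_n|)$ with $c>0$. As a byproduct one gets $|G_n^{(i)}|\to\infty$, so the component-level asymptotic statements are meaningful.
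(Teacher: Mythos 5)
Your proposal is correct and follows essentially the same route as the paper: the same contradiction via Eq.~(\ref{maxedgeconnected}) to get $c\leq 1/2$, the same degree drop $\delta_{\mathrm{min}}(G_n^{(i)})\geq\delta_{\mathrm{min}}(G_n)-\lambda(G_n)$, the same use of $\delta_{\mathrm{min}}(G')<|G'|$ to lower-bound $|G_n^{(i)}|/|G_n|$ by $c+o(1)$, and the same complementarity-plus-inversion step yielding the factor $1/(1-c)$. The only (harmless) cosmetic difference is that you phrase the bookkeeping with explicit inequalities and $\delta_{\mathrm{min}}(G_n^{(i)})+1\leq|G_n^{(i)}|$, whereas the paper packages the same estimates into the ratios $c_n$ and $d_n^{(i)}$.
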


\begin{remark}
It is easy to see that, given two sequences $(x_n)_n$ and $(y_n)_n$ such that $\lim_n y_n=\infty$, then the condition $x_n\in\Omega (y_n)$ is equivalent to the existence of a universal constant $c>0$ and a sequence $(a_n)_n$ such that $a_n\in o(y_n)$ satisfying $x_n\geq cy_n+a_n$ for every $n$. In particular,
$x_n/y_n\geq c+ o(1)$. We have made this explicit in the previous lemma because the constants involved in the statement will be important later.
\end{remark}

\begin{proof}
The asymptotic assumption on $\delta_{\mathrm{min}}(G_n)$ is equivalent the fact that there is a positive constant $c$ such that
\begin{equation}\label{eq:cn}
c_n:=\frac{\delta_{\mathrm{min}}(G_n)}{|G_n|}\geq c+ o(1)\;.
\end{equation}
Moreover, note that we must have $0< c \leq 1/2$ since, otherwise, there exists a subsequence such that for all $n$ we have
$c_{s(n)} > 1/2$ and then, by Eq.~(\ref{maxedgeconnected}), $G_{s(n)}$
must be maximally edge-connected for all $n$ so that
$\lambda(G_{s(n)}) =\delta_{\mathrm{min}}(G_{s(n)})\in \Omega(|G_{s(n)}|)$ contradicting the hypothesis that $\lambda(G_n) \in o(|G_n|)$.

Consider now the graph sequences $\cG^{(i)}$, $i=1,2$, mentioned in the statement of the lemma and note that, by construction,
$\delta_{\mathrm{min}}\left(G_n^{(i)}\right)
\geq\delta_{\mathrm{min}}(G_n)-\lambda(G_n)$. We also consider the vertex relation
\begin{align}
 d_n^{(i)}:=\frac{|G_n^{(i)}|}{|G_n|}, \, \,  i=1,2\;,
 \end{align}
 which satisfies $d_n^{(1)}+d_n^{(2)}=1$. Moreover, since $\lambda(G_n)\in o(|G_n|)$, we have
\begin{equation}\label{eq:dn}
d_n^{(i)}\geq\frac{\delta_{\mathrm{min}}(G_n^{(i)})}{|G_n|}
         \geq  \frac{\delta_{\mathrm{min}}(G_n)}{|G_n|}-\frac{\lambda(G_n)}{|G_n|}\geq c+o(1)\,.
    \end{equation}

In addition, we can write for $d_n^{(1)}$ (and similarly for $d_n^{(2)}$)
 \begin{equation}\label{eq:1/dn}
 d_n^{(1)} =1- d_n^{(2)}\leq 1-c+o(1)\quadtext{and, hence,}
           \frac{1}{d_n^{(1)}}
           \geq \frac{1}{1-c}+o(1)\;.
 \end{equation}

Finally,  using (\ref{eq:dn}) and (\ref{eq:1/dn}) we have that, for $i=1,2$,
\begin{equation}
 \frac{\delta_{\mathrm{min}}(G_n^{(i)})}{|G_n^{(i)}|}=\frac{\delta_{\mathrm{min}}(G_n^{(i)})}{|G_n|}\frac{1}{d_n^{(i)}}\geq (c+o(1))\Big( \frac{1}{1-c}+o(1)\Big)=
 \frac{c}{1-c}+o(1),
\end{equation}
which concludes the proof.
\end{proof}

Note that, if $\lambda(G_n^{(i)})\in o\left(|G_n^{(i)}|\right)$ holds for some $i=1,2$, we could apply Lemma~\ref{lemma2degAGME} again on $\cG^{(i)}$ to obtain $\cG^{(i1)}=\left(G_n^{(i1)}\right)_n$ and $\cG^{(i2)}=\left(G_n^{(i2)}\right)_n$ such that, for $j=1,2$,
\begin{equation}
 \frac{\delta_{\mathrm{min}}(G_n^{(ij)})}{|G_n^{(ij)}|}\geq
 \frac{\frac{c}{1-c}}{1-\frac{c}{1-c}}+o(1)= \frac{c}{1-2c}+o(1).
\end{equation}
Iterating this process $k$ times assuming the previous asymptotic behavior of the edge connectivity holds each time and applying Lemma~\ref{lemma2degAGME} we obtain the constant
\begin{equation}\label{fc}
 f_c(k):=\frac{c}{1-kc}\;,
\end{equation}
which can be easily verified by induction. This motivates the following result on the behavior of the function $f_c(k)$ that will be used later. We omit the proof since it is straightforward (notice that $(0,1/2]=\cup_{j=2}^\infty \left(1/(j+1),1/j\right]$).

\begin{lemma}\label{lemma_f_c}
For $j=2,3,\dots$, if $c\in \left(\frac{1}{j+1}, \frac{1}{j}\right)$, then
 $f_c(k)\in \left(\frac{1}{j-k+1},\frac{1}{j-k}\right)$ for $1\leq k\leq j-1$ and
 if $c=\frac{1}{j}$, then $f_c(k)=\frac{1}{j-k}$ for every $k$.
\end{lemma}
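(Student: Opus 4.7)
The plan is to rewrite $f_c(k)$ in a form where both claims become transparent, then invoke the monotonicity of reciprocation. Specifically, I would start from
\begin{equation}
f_c(k)=\frac{c}{1-kc}=\frac{1}{\tfrac{1}{c}-k},
\end{equation}
which is valid as long as $1/c-k>0$; this positivity will be checked in each of the two cases.

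For the boundary case $c=1/j$, direct substitution gives $1/c=j$, hence $f_c(k)=1/(j-k)$, which is well-defined exactly for $k\leq j-1$, as needed. For the generic case $c\in(1/(j+1),1/j)$, I would use that $c\mapsto 1/c$ is a strictly decreasing bijection from $(1/(j+1),1/j)$ onto $(j,j+1)$, so $1/c\in(j,j+1)$ and consequently
\begin{equation}
\tfrac{1}{c}-k\in(j-k,\,j-k+1).
\end{equation}
For $1\leq k\leq j-1$ both endpoints of this interval are strictly positive, so applying the reciprocal (again strictly decreasing on $(0,\infty)$) yields
\begin{equation}
f_c(k)=\frac{1}{\tfrac{1}{c}-k}\in\left(\frac{1}{j-k+1},\frac{1}{j-k}\right),
\end{equation}
which is exactly the claimed bound. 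The induction on $k$ that the authors mention is not really needed in this form; the argument works uniformly in $k$. There is no genuine obstacle, only the bookkeeping check that the restriction $k\leq j-1$ is precisely what keeps the denominator $1/c-k$ positive so that both reciprocations are order-reversing bijections on the relevant intervals.
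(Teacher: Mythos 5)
Your argument is correct: rewriting $f_c(k)=1/(1/c-k)$ and tracking the two order-reversing reciprocations is exactly the kind of direct verification the paper has in mind when it omits the proof as ``straightforward,'' so there is nothing to compare against and nothing missing. One small remark: the induction the authors mention is for verifying the closed form $f_c(k)=c/(1-kc)$ of the $k$-fold iterate of $c\mapsto c/(1-c)$, not for this lemma, so your comment that it is ``not really needed'' is aimed at a step that was never part of the lemma's proof in the first place.
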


The following proposition shows how a certain growth condition on the minimal degree implies structural properties for the graph sequences and the edge-connectivity growth.
In fact, we show that if $\delta_{\mathrm{min}}(G_n)\in\Omega(|G_n|)$, then the sequence splits into a uniformly bounded number of induced subgraphs (clusters) $G_n^{(l)}$ where the edge-connectivity in each cluster $l$ is in $\Omega(|G_n^{(l)}|)$ (see Fig.~\ref{fig:globos}).

\begin{figure}[h]
\centering
\begin{tikzpicture}[scale=0.8]
    \node[draw, circle, inner sep=1pt, minimum size=4pt, fill=black] (a1) at (0, 1) {};
    \node[draw, circle, inner sep=1pt, minimum size=4pt, fill=black] (a2) at (0, -1) {};

    \node[draw, circle, inner sep=1pt, minimum size=4pt, fill=black] (b1) at (2.75, 1) {};
    \node[draw, circle, inner sep=1pt, minimum size=4pt, fill=black] (b5) at (3.25, 1) {};
    \node[draw, circle, inner sep=1pt, minimum size=4pt, fill=black] (b3) at (2.75, -1) {};
    \node[draw, circle, inner sep=1pt, minimum size=4pt, fill=black] (b4) at (3.25, -1) {};
    \node[draw, circle, inner sep=1pt, minimum size=4pt, fill=black] (b2) at (2.75, 0) {};

    \node[draw, circle, inner sep=1pt, minimum size=4pt, fill=black] (c1) at (6, 1) {};
    \node[draw, circle, inner sep=1pt, minimum size=4pt, fill=black] (c2) at (6, -1) {};

    \draw[black] (0, 0) ellipse (1.25cm and 1.5cm);
    \draw[black] (3, 0) ellipse (1.25cm and 1.5cm);
    \draw[black] (6, 0) ellipse (1.25cm and 1.5cm);

    \draw (a1) -- (b1);
    \draw (a1) -- (b2);
    \draw (a2) -- (b3);

    \draw (b5) -- (c1);
    \draw (b4) -- (c2);

    \node at (0, -2.5) {$G_n^{(1)}$};
    \node at (3, -2.5) {$G_n^{(2)}$};
    \node at (6, -2.5) {$G_n^{(3)}$};

\end{tikzpicture}
 \caption{Element $G_n$ of the graph sequence $\cG$ split into three clusters
 $G_n^{(i)}$, $i=1,2,3.$}
\label{fig:globos}
\end{figure}

\begin{proposition}\label{propdegAGME}
Let $\cG=(G_n)_{n\in\N}$ be a graph sequence such that
$\delta_{\mathrm{min}}(G_n)\in\Omega(|G_n|)$.
Then, there exists a constant $K$ depending only on $\cG$, a subsequence $\cG_s=(G_{s(n)})_{n\in\N}$ and a partition of the vertex sets
\begin{equation}
V_{s(n)}=\bigsqcup_{l=1}^{K}V_{s(n)}^{(l)}
\end{equation}
such that if
$\cG_s^{(l)}=\left(G^{(l)}_{s(n)}\right)_{n\in\N}$,
$l\in\{1,\dots, K\}$, is the sequence of subgraphs induced by the partition of vertices
$\left(V_{s(n)}^{(l)}\right)_{n\in\N}$, then
\begin{equation}
\lambda\left(G_{s(n)}^{(l)}\right)\in\Omega\left(|G_{s(n)}^{(l)}|\right)\,.
\end{equation}
\end{proposition}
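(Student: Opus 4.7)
The plan is to build a finite binary tree of induced-subgraph sequences whose leaves satisfy $\lambda\in\Omega(|\cdot|)$, by iteratively applying Lemmas~\ref{lemma1degAGME} and \ref{lemma2degAGME}. By hypothesis there exists $c>0$ with $\delta_{\mathrm{min}}(G_n)\geq c|G_n|+o(|G_n|)$, and I fix $j=\lceil 1/c\rceil$, so that $c\geq 1/j$. The root of the tree carries $\cG$ itself; each descendant carries an induced-subgraph sequence obtained through the construction below, which by Lemma~\ref{lemma2degAGME} will always be an induced subgraph sequence of $\cG$ (since the two components left after deletion of a minimum edge cut are induced on their vertex sets in the original graph).

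At every node I apply Lemma~\ref{lemma1degAGME} to its current sequence. If the edge connectivity is already in $\Omega(|\cdot|)$, or becomes so on some subsequence, the node is declared a leaf (extracting the subsequence if needed). Otherwise Lemma~\ref{lemma1degAGME} forces the edge connectivity to be in $o(|\cdot|)$, so Lemma~\ref{lemma2degAGME} applies and I split the current sequence into two children sequences induced on the vertex sets of the two components left by the removal of a minimum edge cut; the ``in particular'' clause of that lemma improves the asymptotic relative minimum degree from $c'$ at the parent to $c'/(1-c')$ at each child.

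Iterating the map $c'\mapsto c'/(1-c')$ along any root-to-node path yields exactly the sequence $f_c(k)=c/(1-kc)$ of Eq.~(\ref{fc}). By Lemma~\ref{lemma_f_c} and the choice $c\geq 1/j$, the value $f_c(j-1)$ is strictly larger than $1/2$, so after at most $j-1$ successive splits the asymptotic relative minimum degree of any surviving node exceeds $1/2$; condition~(\ref{maxedgeconnected}) then forces maximal edge connectivity for large $n$ and hence $\lambda=\delta_{\mathrm{min}}\in\Omega(|\cdot|)$ there. Consequently, every node at depth $j-1$ must be a leaf, the tree has depth at most $j-1$, and the number of leaves is bounded by $K:=2^{j-1}$, a constant depending only on $\cG$ through $c$.

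It remains to combine the (at most $K$) subsequence extractions from Lemma~\ref{lemma1degAGME} into a single strictly increasing map $s\colon\N\to\N$. I do this by a depth-first traversal that updates $s$ each time a new extraction is performed: since passing to a further subsequence preserves every $\Omega$-bound established earlier (both the relative-minimum-degree bounds carried along each path and the edge-connectivity bounds at already declared leaves), the final $s$ works simultaneously for all leaves. The partition $V_{s(n)}=\bigsqcup_{l=1}^{K}V^{(l)}_{s(n)}$ and the induced-subgraph sequences $\cG_s^{(l)}$ are then read off from the leaves, with $\lambda(G_{s(n)}^{(l)})\in\Omega(|G_{s(n)}^{(l)}|)$ for every $l$. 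The main obstacle I anticipate is precisely this bookkeeping: threading the subsequence refinements consistently and tracking the propagation of the $o(\cdot)$ error terms through the iterated application of Lemma~\ref{lemma2degAGME}; no new tool beyond the lemmas cited should be needed.
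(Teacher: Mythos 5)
Your proposal is correct and follows essentially the same route as the paper: the same binary tree of splittings driven by Lemmas~\ref{lemma1degAGME} and \ref{lemma2degAGME}, termination via the recursion $f_c(k)=c/(1-kc)$ together with the maximal-edge-connectivity criterion of Eq.~(\ref{maxedgeconnected}), and a final threading of the finitely many subsequence extractions. The only (inessential) deviation is that you absorb the boundary case $c=1/j$ by simply allowing one further level of depth where $f_c(j-1)\geq 1>1/2$, whereas the paper performs an explicit sign analysis of the $o(\cdot)$ error term at the level where the constant equals $1/2$; both yield a uniform bound on $K$.
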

\begin{proof}
The procedure we follow here is based on two basic operations depending on the asymptotic behavior of the edge-connectivity $\lambda$ mentioned in Lemma~\ref{lemma1degAGME}:
passing to subsequences and passing to sequences of subgraphs.

By assumption, we know that there exists a constant $c > 0$ such that
$\delta_{\mathrm{min}}(G_n)\geq c|G_n|+ o(|G_n|)$. Moreover, we can assume that $c\in (0,1/2]$, since we can reason as in the proof of Lemma~\ref{lemma2degAGME} to conclude that, if $c>1/2$, we actually have
$\lambda(G_n)=\delta_{\mathrm{min}}(G_n)=\Omega(|G_n|)$. In this case we finish the proof by taking $K=1$ and $s(n)=n$ for every $n$.

We illustrate next the first nontrivial step of the aforementioned procedure.
If there is a subsequence of graphs $\cG_s = (G_{s(n)})_n$ such that
$\lambda(G_{s(n)}) \in\Omega(|G_{s(n)}|)$, then we again finish the proof with $K = 1$ (note that this includes the trivial subsequence $s(n) = n$ in case the original graph sequence already satisfies $\lambda(G_n) \in\Omega(|G_n|)$). On the other hand, according to
the second possibility in Lemma~\ref{lemma1degAGME}, if $\lambda(G_n)\in o(|G_n|)$, we then consider the two (disjoint) sequences of connected components
\begin{equation}
\cG^{(i)}=\left(G^{(i)}_n\right)_n \;, \;i=1,2\;,
\end{equation}
obtained from $\cG$ by deleting for each $n$ an edge-cut of cardinality
$\lambda(G_n)$ as in Lemma~\ref{lemma2degAGME}. From this lemma, we also obtain that
for $i=1,2$
\begin{equation}\label{prop651}
 \delta_{\mathrm{min}}(G_n^{(i)})\geq f_c(1) |G_n^{(i)}|+o\left(|G_n^{(i)}|\right)\,,
\end{equation}
where the function $f_c(k)$ is defined in Eq.~(\ref{fc}).

In order to proceed with the next step, we look at the edge-connectivity of the sequence of subgraphs $\cG^{(1)}$, for which, according, again, to Lemma~\ref{lemma1degAGME}, there are two possibilities:
\begin{itemize}
   \item [a)]
   If $\lambda(G_{n}^{(1)})\in o(|G_{n}^{(1)}|)$, then one splits again $\cG^{(1)}$
   by deleting an edge-cut of cardinality $\lambda(G_{n}^{(1)})$ for each $n$ obtaining two new sequences of connected components $\cG^{(11)}$ and
   $\cG^{(12)}$, with associated constants $f_c(2)$,  and moves on to analyze $\cG^{(11)}$.
 \item [b)] If there exists a subsequence $\cG_{s_1}^{(1)}$ such that
 \begin{equation}
 \lambda\left(G_{s_1(n)}^{(1)}\right)\in\Omega \left(|G_{s_1(n)}^{(1)}|\right),
 \end{equation}
 then we conclude the analysis of this branch and restrict the subsequence of the second component to $\cG_{s_1}^{(2)}$, which we continue analyzing. Two cases may now appear:
 \begin{itemize}
  \item [a)] If $\lambda\left(G_{s_1(n)}^{(2)}\right)\in o\left(|G_{s_1(n)}^{(2)}|\right)$, then the splitting procedure continues along this branch.
  \item [b)] If there exists a subsequence $s_2$ within $s_1$ such that
  $\lambda\left(G_{s_2(n)}^{(2)}\right)\in\Omega \left(|G_{s_2(n)}^{(2)}|\right)$, then the procedure stops completely and one has the subsequence of induced subgraphs
  $\cG_{s_2}^{(1)}\sqcup \cG_{s_2}^{(2)}$ with $K=2$ and
  $\lambda\left(G_{s_2(n)}^{(i)}\right)\in\Omega \left(|G_{s_2(n)}^{(i)}|\right)$.
  Note that one has to restrict in the first branch the subsequence $\cG_{s_1}^{(1)}$ to $\cG_{s_2}^{(1)}$. Since the growth condition passes to subsequences we have that if
  $\lambda\left(G_{s_1(n)}^{(1)}\right)\in\Omega \left(|G_{s_1(n)}^{(1)}|\right)$, then
  $\lambda\left(G_{s_2(n)}^{(1)}\right)\in\Omega \left(|G_{s_2(n)}^{(1)}|\right)$ as well.
 \end{itemize}
\end{itemize}

In full generality, each step in the process is labeled by a multi-index $\vec{a}=(a_1,\dots,a_k)$ with $a_l\in\{1,2\}$ ($l=1,\dots,k$), which the procedure follows in lexicographical order. The length of the multi-index, i.e.\ $|\vec{a}|:=k$, counts the number of splittings that have occurred to reach this point and the different values of $a_l$ describe to which branch of the procedure the step corresponds to. Notice that the length of the multi-index varies through this process.

At the beginning of an arbitrary step labeled by $\vec{a}$, we start with a subsequence of induced subgraphs $\cG_{s_{\vec{b}}}^{\vec{a}}=(G_{s_{\vec{b}}(n)}^{\vec{a}})_n$, where $\vec{b}$ is the multi-index that precedes $\vec{a}$ in lexicographical order (we just set $s_{\emptyset}=id$ for $\vec{b}=\emptyset$ the preceding multi-index for the first step $\vec{a}=1$). By Lemma~\ref{lemma2degAGME} we have that
$\delta_{\mathrm{min}}(G_{s_{\vec{b}}(n)}^{\vec{a}}) \in\Omega(|G_{s_{\vec{b}}(n)}^{\vec{a}}|)$ and, again, we have two possibilities:
\begin{itemize}
    \item[a)] The subgraph subsequence $\cG^{\vec{a}}_{s_{\vec{b}}}$ is such that
\begin{equation}
\lambda\left(G_{s_{\vec{b}}(n)}^{\vec{a}}\right)\in o\left(|G_{s_{\vec{b}}(n)}^{\vec{a}}|\right)\,.
\end{equation}

\item[b)] There is a subsequence $s_{\vec{a}}$ within $s_{\vec{b}}$ such that the corresponding subgraph subsequence $\cG^{\vec{a}}_{s_{\vec{a}}}$ fulfills
\begin{equation}
\lambda\left(G_{s_{\vec{a}}(n)}^{\vec{a}}\right)
\in\Omega\left(|G_{s_{\vec{a}}(n)}^{\vec{a}}|\right).
\end{equation}
\end{itemize}
In case b) the procedure stops in this branch, all previous subgraph subsequences are
restricted to this one, i.e.\ we fix $s_{\vec{a}'}=s_{\vec{a}}$ for every multi-index
$\vec{a}'$ that precedes $\vec{a}$ in lexicographical order and we move on to consider
$\cG^{\vec{a}''}_{s_{\vec{a}}}$, where $\vec{a}''$ is the subsequent multi-index to
$\vec{a}$ in lexicographical order. In case a) we define $s_{\vec{a}}=s_{\vec{b}}$ and we create two new multi-indices $(\vec{a},1)$ and $(\vec{a},2)$ (where $(\vec{a},i):=(a_1,\dots,a_k,i)$ for $i=1,2$), which correspond to splitting $\cG_{s_{\vec{a}}}^{\vec{a}}$ into two connected subgraph components $\cG_{s_{\vec{a}}}^{(\vec{a},1)}$ and $\cG_{s_{\vec{a}}}^{(\vec{a},2)}$ deleting the corresponding cut-edge set. We move on then to analyze the subgraph subsequence corresponding to the subsequent multi-index to $\vec{a}$ in lexicographical order, i.e.\ $(\vec{a},1)$.

We see that the analysis of a given branch of the procedure concludes when the process reaches condition b) above.  In this case, after readjusting the previous subsequences if necessary, we conclude the analysis of this branch of the process. Hence, in order to finish the proof, we must show that this process necessarily ends in a finite number of steps. We consider two cases:
 \begin{itemize}
 \item
 If $c\in \left(\frac{1}{j+1}, \frac{1}{j}\right)$ for some $j\geq 2$, according to Lemma \ref{lemma_f_c}, we will have that $f_c(k)\in (0,\frac{1}{2})$ for every $1\leq k\leq j-2$ and $f_c(j-1)\in (1/2,1)$. Hence, we know that for every branch described by $\vec{a}$ such that $|\vec{a}|=j-1$, the corresponding subgraphs $(G_{s_{\vec{b}}(n)}^{\vec{a}})_n$ satisfy that their edge-connectivity equals their minimal degree. So that we are in the case b) above.
 \item  If $c=\frac{1}{j}$ for some $j\geq 2$, we know that $f_c(k)\in (0,1/2)$ for every $1\leq k\leq j-2$ and  $f_c(j-1)=1/2$.
 Hence, we know that for every branch described by $\vec{a}$ such that $|\vec{a}|=j-1$, the corresponding subgraphs $(G_{s_{\vec{b}}(n)}^{\vec{a}})_n$ satisfy

\begin{equation}\label{prop652}
 \delta_{\mathrm{min}}(G_{s_{\vec{b}}(n)}^{\vec{a}})
 \geq \frac{1}{2}|G_{s_{\vec{b}}(n)}^{\vec{a}}|+ a_{s_{\vec{b}}(n)}\;,
\end{equation}
 where $a_{s_{\vec{b}}(n)}\in o(|G_{s_{\vec{b}}(n)}^{\vec{a}}|)$. Now, if
 Eq.~(\ref{prop652}) can only hold with sequences $(a_{s_{\vec{b}}(n)})_n$ such that $a_{s_{\vec{b}}(n)}<0$ for $n$ large enough, then it might still happen that

 $\lambda\left(G_{s_{\vec{b}}(n)}^{\vec{a}}\right)\in o\left(|G_{s_{\vec{b}}(n)}^{\vec{a}}|\right)$ and another splitting is necessary. After this, we would have
 \begin{equation}
\delta_{\mathrm{min}}(G_{s_{\vec{a}}(n)}^{(\vec{a},i)})\geq |G_{s_{\vec{a}}(n)}^{(\vec{a},i)}|+ b^{(i)}_{s_{\vec{a}}(n)},\hspace{0.2 cm} i=1,2,
\end{equation}
where $b^{(i)}_{s_{\vec{a}}(n)}\in o(|G_{s_{\vec{a}}(n)}^{(\vec{a},i)}|)$ for $i=1,2$.

 Therefore, we know that for every branch described by $\vec{a}$ such that $|\vec{a}|=j$, the corresponding subgraphs $(G_{s_{\vec{b}}(n)}^{\vec{a}})_n$ satisfy that their edge-connectivity equals their minimal degree and we are again in the case b) above.

 If, on the other hand, the above assumption on the sequence $(a_{s_{\vec{b}}(n)})_n$ in Eq.\ (\ref{prop652}) does not hold, then there must exist a subsequence $s_{\vec{a}}$ within $s_{\vec{b}}$ for which $a_{s_{\vec{a}}(n)}\geq0$ for every $n$ and, hence,
\begin{equation}
 \delta_{\mathrm{min}}(G_{s_{\vec{a}}(n)}^{\vec{a}})
         \geq\frac{1}{2}|G_{s_{\vec{a}}(n)}^{\vec{a}}|
\end{equation}
and we are, again, in case b).
\end{itemize}
Finally, the maximum number of splittings in the procedure described
in this proof is $j$ and ,therefore, we have that  $K\leq 2^j$, which only depends on $c$.
\end{proof}

\begin{proposition}\label{pro:degAGME}
If the graph sequence $\cG=(G_n)_n$ fulfills that
$\delta_{\mathrm{min}}(G_n)\in\Omega(|G_n|)$,
then $\cG$ is not ABS.
\end{proposition}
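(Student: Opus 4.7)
The plan is to argue by contradiction. Suppose $\cG$ is ABS. Proposition~\ref{propdegAGME} furnishes a subsequence $\cG_s$ and, for each $n$, a partition $V_{s(n)}=\bigsqcup_{l=1}^{K}V_{s(n)}^{(l)}$ into a bounded number of clusters whose induced subgraph sequences $\cG_s^{(l)}$ all satisfy $\lambda(G_{s(n)}^{(l)})\in\Omega(|G_{s(n)}^{(l)}|)$. Tracking the constants $f_c(k)$ from Lemma~\ref{lemma2degAGME} in the iterative splitting procedure shows moreover that each cluster has $|G_{s(n)}^{(l)}|\in\Omega(|G_{s(n)}|)$, so the cluster orders tend to infinity as well. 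As in the proof of Proposition~\ref{th:lambda}, the asymptotic edge-connectivity condition implies the bounded-length edge-disjoint-paths hypothesis of Proposition~\ref{prop:sufficientASGME}, so Lemma~\ref{lemmaBSAto0} is available for each cluster sequence $\cG_s^{(l)}$. Pick a visibility $p<1$ large enough to simultaneously exceed all the critical values $p_0^{(l)}$ provided by that lemma (possible since $K$ is finite). The ABS assumption on $\cG$ then yields, for every sufficiently large $n$, a biseparable decomposition
\[
\sigma_d(G_{s(n)},p)=\sum_{\emptyset\neq M\subsetneq V_{s(n)}}q_M^{(n)}\chi_M^{(n)},\qquad \chi_M^{(n)}\in S_M(H).
\]

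For each cluster index $l$, take the partial trace of the above identity over every edge-qudit associated with edges outside $E(V_{s(n)}^{(l)})$ (i.e.\ all cross edges and all edges internal to the other clusters). The left-hand side becomes $\sigma_d(G_{s(n)}^{(l)},p)$ tensored with trivial factors on the parties outside $V_{s(n)}^{(l)}$. On the right-hand side, since partial traces preserve separability in a fixed bipartition, each term $\chi_M^{(n)}$ for which $M$ \emph{splits} $V_{s(n)}^{(l)}$ (both $M\cap V_{s(n)}^{(l)}$ and $\overline{M}\cap V_{s(n)}^{(l)}$ nonempty) induces a biseparable state on the parties $V_{s(n)}^{(l)}$, while the remaining terms may be arbitrary density matrices. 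Lemma~\ref{lemmaBSAto0} applied to $\cG_s^{(l)}$ then forces $\sum_{M\,\text{splits}\,V_{s(n)}^{(l)}}q_M^{(n)}\to 0$ as $n\to\infty$. Summing over the finitely many clusters, the total weight carried by bipartitions that split at least one cluster tends to zero; equivalently, the weight concentrated on \emph{cluster-compatible} bipartitions (those placing each cluster entirely on one side of the cut) tends to $1$.

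There are at most $2^{K}-2$ cluster-compatible bipartitions, so by pigeonhole and passage to a further subsequence we may fix one such bipartition $M^*$ with $q_{M^*}^{(n)}\geq\eta:=1/(2^K-2)$ along that subsequence. Both $M^*$ and $\overline{M^*}$ are nonempty unions of clusters and $G_{s(n)}$ is connected, so there exists at least one cross edge $e^*=(u^*,v^*)$ with $u^*\in M^*$ and $v^*\in\overline{M^*}$ (by construction of the cluster partition, edges between distinct clusters are precisely the cut edges removed during the splitting procedure). Tracing out every edge-qudit of $\sigma_d(G_{s(n)},p)$ except those attached to $e^*$ reduces the left-hand side to the bipartite isotropic state $\rho(p,d)$ on $H_{u^*e^*}\otimes H_{v^*e^*}$, while on the right-hand side the $M^*$-term yields a state separable across $u^*\mid v^*$. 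We therefore obtain a convex decomposition of $\rho(p,d)$ into a separable state of weight at least $\eta$ plus a remainder density matrix, so Eq.~(\ref{bsaisotropic}) yields $\eta\leq (d+1)(1-p)/d$. Choosing $p$ sufficiently close to $1$ at the outset violates this inequality, producing the desired contradiction.

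The main obstacle is the careful combinatorial book-keeping needed to convert biseparability of the global network state into biseparability on each cluster subsystem via appropriate partial traces, together with the propagation of constants through the iterative splitting of Proposition~\ref{propdegAGME} so as to guarantee the linear lower bound on cluster sizes. Once these ingredients are in place, Lemma~\ref{lemmaBSAto0} and the BSA formula~(\ref{bsaisotropic}) for the isotropic state combine cleanly with a pigeonhole argument over the finitely many cluster-compatible bipartitions to close the argument.
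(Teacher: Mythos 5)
Your proposal is correct and follows essentially the same route as the paper: the cluster decomposition from Proposition~\ref{propdegAGME}, partial traces onto each cluster combined with Lemma~\ref{lemmaBSAto0} to kill the weight of cluster-splitting bipartitions, and the best-separable-approximation bound (\ref{bsaisotropic}) on the cluster-compatible bipartitions to contradict normalization. The only (cosmetic) difference is in the last step, where you pigeonhole onto a single cluster-compatible bipartition of weight at least roughly $1/(2^K-2)$ and bound it by $\mathrm{BSA}(\rho(p,d))$, whereas the paper bounds the sum of all such weights by $(2^{K-1}-1)\frac{d+1}{d}(1-p)$ directly; both require the same choice of $p$ close to $1$ depending only on $K$ and $d$.
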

\begin{proof}
Let us assume, for a contradiction, that $\cG$ is ABS. Then, so must be every subsequence and, in particular, the subsequence $\cG_s$ obtained from Proposition~\ref{propdegAGME}, which must exist due to our hypothesis on $\delta_{\mathrm{min}}(G_n)$. Hence, for any fixed $d$ and for any choice of $p<1$, our assumption is that $\sigma_d(G_{s(n)},p)$ is biseparable for $n$ large enough, i.e., in this case we have from Eq.~(\ref{def:biseparable})
\begin{equation}\label{eqth11}
\sigma_d(G_{s(n)},p)=\sum_{\emptyset\neq M\subsetneq V_{s(n)}}q_M^{(n)}\chi_M^{(n)},
\end{equation}
where $\chi_M^{(n)}$ is an $|G_{s(n)}|$-partite quantum state that is separable in the bipartition $M|\overline{M}$ and for each $M,n$ we have $q_M^{(n)}\geq0$ with normalization $\sum_Mq_M^{(n)}=1$ (note that to simplify notation we omitted in some cases the explicit dependence on $d$ and $p$).

In the following it will be convenient to distinguish two different kinds of bipartitions. If $K$ is the constant in the statement of Proposition~\ref{propdegAGME},  for each $k\in [K]$ we say that a nontrivial (i.e., a nonempty and
proper) $M\in I_k^n$ if $M\subset V_{s(n)}$ and there exist $u,v\in V_{s(n)}^{(k)}$ such that $u\in M$ and $v\notin M$; define then
\begin{equation}\label{eq:Ik}
I^n=\bigcup_{k\in[K]}I^n_k.
\end{equation}
Thus, Eq.~(\ref{eqth11}) can be rewritten as
\begin{equation}\label{eqth11bis}
\sigma_d(G_{s(n)},p)=\sum_{M\in I^n}q_M^{(n)}\chi_M^{(n)}+\sum_{M\notin I^n}q_M^{(n)}\chi_M^{(n)}.
\end{equation}
Note that $M\notin I^n$ means that $M$ either contains all or none of the vertices
$V_{s(n)}^{(k)}$ of any $k$-cluster with $k\in [K]$. In other words, $M\notin I^n$ if and only if there exists $J\subsetneq [K]$ such that $M=\cup_{j\in J}V_{s(n)}^j$.

Now, if for a given $k$ we write
\begin{equation}\label{eqth11bis_k}
\sigma_d(G_{s(n)},p)=\sum_{M\in I_k^n}q_M^{(n)}\chi_M^{(n)}+\sum_{M\notin I_k^n}q_M^{(n)}\chi_M^{(n)}
\end{equation}and we take the partial trace with respect to
$\bigotimes_{e\notin E_{s(n)}(V_{s(n)}^{(k)})}H_e$ on both sides of Eq.~(\ref{eqth11bis_k}), we obtain
\begin{equation}
\sigma_d(G_{s(n)}^{(k)},p)=\sum_{M\in I_k^n}q_M^{(n)}\rho_{k,M}^{(n)}+\sum_{M\notin I_k^n}q_M^{(n)}\tau_{k,M}^{(n)},
\end{equation}
where the states $\{\rho_{k,M}^{(n)}\}_n$ are all separable in the bipartition\footnote{Here and in the following, we will simplify the notation. Note that we should say that the states $\{\rho_{k,M}^{(n)}\}_n$ are all separable in the bipartition $M_k|\overline{M}_k$, where $M_k=M\cap V_{s(n)}^k$ and $\overline{M}_k=\overline{M}\cap V_{s(n)}^k$.} $M|\overline{M}$. Then, as shown in the proof of Proposition~\ref{th:lambda}, graphs with edge-connectivity with linear growth satisfy the premises of Proposition~\ref{prop:sufficientASGME} and, therefore, by
Lemma~\ref{lemmaBSAto0} there exists $p_0<1$ such that if $p\geq p_0$ it must then hold that
\begin{equation}
\sum_{M\in I_k^n}q_M^{(n)}\xrightarrow[n\to\infty]{} 0\;.
\end{equation}
Considering this reasoning for every $k$ we then obtain that if $p$ is sufficiently close to 1, we have
\begin{equation}\label{argu1}
\sum_{M\in I^n}q_M^{(n)}\leq\sum_{k=1}^{K}\sum_{M\in I^n_k}q_M^{(n)}\leq K\max_k\sum_{M\in I^n_k}q_M^{(n)} \xrightarrow[n\to\infty]{} 0\;.
\end{equation}

We focus now on the second summand in Eq.~(\ref{eqth11bis}). Notice that, our previous characterization of the elements $M\notin I^n$ implies that the complement of $I^n$ has cardinality $2^{K-1}-1$ (excluding the empty set and taking into account the redundancy of the bipartitions
$M|\overline{M}$ and $\overline{M}|M$).
Now, for each $S\notin I^n$ we take the partial trace with respect to $\bigotimes_{e\notin E_{s(n)}(S,\overline{S})}H_e$ in both sides of Eq.~(\ref{eqth11bis}) to obtain that
\begin{equation}
\bigotimes_{e\in E_{s(n)}(S,\overline{S})}\rho_{e}(p,d)=q_S^{(n)}\eta_S^{(n)}+\sum_{M\neq S}q_M^{(n)}\omega_{M,S}^{(n)},
\end{equation}
where $\eta_S^{(n)}$ is separable in the bipartition $S|\overline{S}$. Then, viewing the state above as a bipartite state in $S|\overline{S}$
we have the following estimates in terms of the best separable approximation
(cf.\ Eq.~(\ref{bsa2}))
\begin{align}
q_S^{(n)}&\leq \mathrm{BSA}\left(\bigotimes_{e\in E_{s(n)}(S,\overline{S})}\rho_{e}(p,d)\right)
=\mathrm{BSA}\left(\rho(p,d)^{\otimes |E_{s(n)}(S,\overline{S})|}\right)\nonumber\\&\leq \mathrm{BSA}(\rho(p,d))
=\frac{d+1}{d}(1-p),
\end{align}
where the first inequality follows from the definition of the best separable approximation (cf.~Eq.~(\ref{bsa2})), the second from the fact that $|E_{s(n)}(S,\overline{S})|\geq 1$ for all $S\notin I^n$ and the last equality is Eq.~(\ref{bsaisotropic}), which holds as long as $p\geq1/(d+1)$. Thus,
\begin{equation}
\sum_{M\notin I^n}q_M^{(n)}\leq\left(2^{K-1}-1\right)\frac{d+1}{d}(1-p).
\end{equation}
Hence, we can choose a fixed value of $p$ sufficiently close to 1 so that for $n$ large enough
\begin{equation}\label{argu2}
\sum_{M\notin I^n}q_M^{(n)}\leq \frac{1}{2}\,.
\end{equation}
Consequently, Eqs.~(\ref{argu1}) and (\ref{argu2}) imply that the normalization
condition $\sum_Mq_M^{(n)}=1$ can no longer hold for $n$ large enough. We have therefore reached a contradiction and Eq.~(\ref{eqth11}) cannot hold for any fixed $d$ for all $p<1$ for sufficiently large $n$.
\end{proof}

\begin{theorem}\label{th:degAGME}
Any graph sequence $\cG=(G_n)_{n\in\N}$ such that $\delta_{\mathrm{min}}(G_n)\in\Omega(|G_n|)$ is AGME.
\end{theorem}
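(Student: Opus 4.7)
The proof plan is essentially to combine Proposition~\ref{pro:degAGME} with the characterization of the AGME property given in Proposition~\ref{prop:notAGMEnotABS}~(i). The key observation is that Proposition~\ref{pro:degAGME} already gives the much harder half of the statement (namely that the hypothesis $\delta_{\mathrm{min}}(G_n)\in\Omega(|G_n|)$ rules out ABS behavior), and what remains is to promote ``not ABS'' to ``AGME'' by a subsequence argument.

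More precisely, fix a dimension $d\geq 2$ and assume, for contradiction, that $\cG$ is not AGME in dimension $d$. By Proposition~\ref{prop:notAGMEnotABS}~(i) there exists a strictly increasing function $s\colon\N\to\N$ such that the subsequence $\cG_s=(G_{s(n)})_{n\in\N}$ is ABS in dimension $d$. The plan is then to verify that the growth hypothesis $\delta_{\mathrm{min}}(G_n)\in\Omega(|G_n|)$ passes to any subsequence, and to invoke Proposition~\ref{pro:degAGME} on $\cG_s$ to derive a contradiction.

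The inheritance of the asymptotic lower bound under taking subsequences is immediate: if there exists $c>0$ and $n_0$ such that $\delta_{\mathrm{min}}(G_n)\geq c|G_n|$ for all $n\geq n_0$, then for any $n$ with $s(n)\geq n_0$ we also have $\delta_{\mathrm{min}}(G_{s(n)})\geq c|G_{s(n)}|$, so $\delta_{\mathrm{min}}(G_{s(n)})\in\Omega(|G_{s(n)}|)$. Hence Proposition~\ref{pro:degAGME} applies to $\cG_s$ and yields that $\cG_s$ is \emph{not} ABS, contradicting the choice of $s$. Therefore $\cG$ must be AGME in dimension $d$, and since $d$ was arbitrary, $\cG$ is AGME in the unqualified sense of Definition~\ref{def:asymptotic}~(i).

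I do not expect any technical obstacle here: all the substantive work (the cluster decomposition of Proposition~\ref{propdegAGME}, the best-separable-approximation bookkeeping, and the application of Lemma~\ref{lemmaBSAto0} in each cluster) was already carried out in the proof of Proposition~\ref{pro:degAGME}. The only subtlety worth flagging explicitly is why ``not ABS'' is not by itself enough to conclude AGME --- the sequence in Remark~\ref{rem:dico} shows that the two notions are not complementary --- and this is exactly the point that forces the subsequence argument via Proposition~\ref{prop:notAGMEnotABS}~(i). So the proof should be short, essentially a two-line contradiction argument invoking the two previously established propositions.
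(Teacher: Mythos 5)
Your proposal is correct and is essentially identical to the paper's own proof: the paper also assumes $\cG$ is not AGME, extracts an ABS subsequence via Proposition~\ref{prop:notAGMEnotABS}~(i), notes that the growth condition $\delta_{\mathrm{min}}(G_{s(n)})\in\Omega(|G_{s(n)}|)$ is inherited by the subsequence, and contradicts Proposition~\ref{pro:degAGME}. Your additional remarks on why ``not ABS'' alone does not suffice and on the uniformity in $d$ are accurate but do not change the argument.
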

\begin{proof}
Assume for a contradiction that $\cG$ is not AGME. Then, according to Proposition~\ref{prop:notAGMEnotABS}~(i), there  exists a subsequence $\cG_s$ that is ABS. But, this is not possible by Proposition~\ref{pro:degAGME}, since  this subsequence still fulfills the same growth condition $\delta_{\mathrm{min}}(G_{s(n)})=\Omega(|G_{s(n)}|)$.
\end{proof}

\begin{remark}
Notice that, given that $\delta_{\mathrm{min}}(G)\geq\lambda(G)$ for every graph $G$, Theorem~\ref{th:degAGME} is stronger than Proposition~\ref{th:lambda} (i.e.\ the statement in the proposition follows from the statement in the theorem). However, the proof of Theorem~\ref{th:degAGME} uses Proposition~\ref{th:lambda}, so the latter cannot be skipped. The same happens with Proposition~\ref{th:lowlambda}, but, in this case, our proof of Theorem~\ref{th:degAGME} is independent of the arguments therein. Thus, the proof of this proposition in Section~\ref{seccon} could have been postponed to this section as the graph sequence considered therein is a particular instance of those that satisfy the premise of Theorem~\ref{th:degAGME}. However, for expository reasons we decided to consider this particular case in Section~\ref{seccon} in order to make clear at that stage the limitations of the edge-connectivity in our problem.
\end{remark}

\section{Explicit constructions}\label{subsecdeg3}

In this section we present several explicit constructions of graph sequences that are either AGME or ABS in order to analyze the tightness of our main results in the previous section and to show the limitations of the degree growth to characterize these properties. In addition to this, we briefly discuss the possibility of using other graph parameters to address this problem, focusing on the particular case of the diameter.

\subsection{Sharpness of the main results}

The following theorem shows that the sufficient condition for ABS based on the degree growth in Theorem~\ref{th:degABS} is not necessary and that the analogous sufficient condition for AGME in Theorem~\ref{th:degAGME} cannot be improved.

\begin{figure}[h]
\centering
\begin{tikzpicture}[scale=0.9]
    \foreach \i/\j in {0/1, 1/2, 2/3} {
        \node[draw, circle, inner sep=1pt, minimum size=4pt, fill=black] (a\i) at (\i*4, 0) {};  
        \draw[black] (\i*4, 0) ellipse (1cm and 1.25cm);  

        \node at (\i*4, 0.5) {$K_n$};

        \node at (\i*4, -1.755) {$G_n^{(\j)}$};
    }

    \draw (a0) -- (a1);
    \draw (a1) -- (a2);

\end{tikzpicture}
 \caption{Element $G_n$ of the graph sequence $\cG$ split into three clusters whose induced subgraphs are complete graphs $K_n$ with a bridge edge joining them.}
\label{fig:knbridge}
\end{figure}

\begin{theorem}\label{th:highdegABS}
For any function $f\colon\mathbb{N}\to\mathbb{N}$ such that $f(n)\in o(n)$, there exists an ABS graph sequence $\mathcal{G}=(G_n)_{n\in\mathbb{N}}$ such that
$\delta_{\mathrm{min}}(G_n)\in\Omega(f(|G_n|))$.
\end{theorem}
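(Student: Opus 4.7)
The plan is to take $G_n$ to be the ``path of cliques'' suggested by Fig.~\ref{fig:knbridge}, namely $k_n$ vertex-disjoint copies of $K_{m_n}$ (denoted $C_1,\ldots,C_{k_n}$) joined consecutively by $k_n-1$ bridges $b_i$, where $b_i$ is a single edge between a designated vertex of $C_i$ and one of $C_{i+1}$. Given $f\in o(n)$, I would pick $m_n$ growing so that $m_n-1\geq f(|G_n|)$ (using a monotone majorant of $f$ if needed) together with $k_n\to\infty$; this is simultaneously feasible because $f$ is sublinear, and automatically yields $|G_n|=k_nm_n\to\infty$. The interior vertices of each clique have no incident bridge, so $\delta_{\mathrm{min}}(G_n)=m_n-1\in\Omega(f(|G_n|))$ as required.

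To prove ABS, the plan is to rerun the argument of Theorem~\ref{th:degABS} at the \emph{cluster level} rather than the vertex level, viewing each clique as a super-vertex of meta-degree at most $2$. Fix $d\geq 2$ and $p\in(1/(d+1),1)$, and expand only the bridge factors using $\rho_b(p,d)=p\phi^+_b+(1-p)\none_b$:
\begin{equation*}
\sigma_d(G_n,p)=\sum_{F\subseteq B_n}p^{|F|}(1-p)^{|B_n|-|F|}\,\bigotimes_{b\in F}\phi^+_b\bigotimes_{b\notin F}\none_b\otimes\bigotimes_{i=1}^{k_n}\sigma_d(K_{m_n},p)_{C_i},
\end{equation*}
with $B_n$ the bridge set. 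For each configuration $F$ set $F_i:=F\cap B_i$, with $B_i\subseteq B_n$ the bridges incident to $C_i$ (so $|B_i|\leq 2$), and call $F$ \emph{clear at} $C_i$ when $F_i=\emptyset$. Classify configurations into $B_n^\star,A_n^\star,C_n^\star$ (clear at no cluster, at exactly one, at two or more respectively). A clear-at-$C_i$ term is biseparable across $C_i|\overline{C_i}$: the bridges in $B_i$ carry $\none_b$, which is separable in that cut, while the cluster states $\sigma_d(K_{m_n},p)_{C_j}$ and the bridges not in $B_i$ are tensor factors contained entirely on one side.

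For $F\in B_n^\star$ I would apply the regrouping trick: split the weight of $F$ equally among the $k_n$ clusters and, at each $C_i$, combine this piece with weight borrowed from $F\setminus F_i\in A_n^\star\cup C_n^\star$, with the normalization factor $(2^{|B_i|}-1)^{-1}$ accounting for the bad $F$'s sharing the same $F\setminus F_i$. The resulting local cross-cut factor is an exact analogue of the expression in Theorem~\ref{th:degABS} with $|G_n|\to k_n$ and $\delta(v)\to|B_i|$:
\begin{equation*}
\chi_{(i)}^{(F)}=\frac{1}{k_n}\bigotimes_{b\in F_i}\phi^+_b+\frac{p^{-|F_i|}(1-p)^{|F_i|}}{2^{|B_i|}-1}\bigotimes_{b\in F_i}\none_b,
\end{equation*}
which (after normalization) is an isotropic state on $\mathbb{C}^{d^{|F_i|}}\otimes\mathbb{C}^{d^{|F_i|}}$ and therefore separable provided $d^{|F_i|}(p/(1-p))^{|F_i|}(2^{|B_i|}-1)\leq k_n$. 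Since $|F_i|\leq|B_i|\leq 2$, this is bounded by $3d^2(p/(1-p))^2$ uniformly in $n$, so the condition holds for $k_n$ sufficiently large.

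The hard part will be the combinatorial bookkeeping of the regrouping, showing that the weights of the configurations in $B_n^\star$ and $A_n^\star$ match up exactly as required despite the presence of the extra tensor factors $\bigotimes_i\sigma_d(K_{m_n},p)_{C_i}$. Those factors, however, only contribute opaque one-sided tensor summands across each cluster cut and therefore do not interfere with separability. Once the bookkeeping is in place, the bounded meta-degree of the cliques removes any dependence on $m_n$, and only $k_n\to\infty$ is needed, giving ABS uniformly in $d$ and $p<1$.
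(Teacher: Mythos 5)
Your construction is exactly the one the paper uses (a path of $k_n$ cliques $K_{m_n}$ joined by single bridge edges, with $k_n\to\infty$ obtained from a monotone regularization of $n/f(n)$), and the degree count $\delta_{\mathrm{min}}(G_n)=m_n-1\in\Omega(f(|G_n|))$ is fine. For the ABS part, however, you are re-deriving by hand something the paper gets for free: coarse-grain each clique $C_i$ into a single party; biseparability of the coarse-grained state implies biseparability of the original one, and in the coarse-grained picture $\sigma_d(G_n,p)$ is literally $\sigma_d(L_{k_n},p)\otimes\bigotimes_i\tau_i$, where $L_{k_n}$ is the path graph and each $\tau_i$ is local to one super-party. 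Since $\delta_{\mathrm{max}}(L_{k_n})=2$ and $k_n\to\infty$, Theorem~\ref{th:degABS} (or the tree result of \cite{ASGME}) applies directly. Your cluster-level rerun of the Theorem~\ref{th:degABS} combinatorics is this same argument unrolled, and your observation that the intra-clique factors sit entirely on one side of every cluster cut is precisely why the reduction is clean.

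The genuine gap is in the step you defer. With the weights as written, the amount borrowed from the term $F'=F\setminus F_i$ can exceed its total weight. Concretely, take $k_n=6$ and $F'=\{b_1,b_5\}$, which is clear at exactly the two adjacent clusters $C_3$ and $C_4$: via $C_3$ there are two sets $S\subseteq B_3$ with $F'\cup S\in B_n^\star$ (both must contain the shared bridge $b_3$ to keep $C_4$ covered), each taking $1/(2^{|B_3|}-1)=1/3$ of the $F'$ term, and symmetrically via $C_4$, for a total of $4/3>1$. The residual coefficient of the $F'$ term then goes negative and the decomposition is no longer a convex combination of biseparable states, so it certifies nothing. (Note also that $F\setminus F_i$ need not lie in $A_n^\star$ at all, so the matching is not a clean $B_n^\star$-to-$A_n^\star$ pairing.) This is fixable: removing $B_i$ can clear at most the three clusters $C_{i-1},C_i,C_{i+1}$, so dividing the borrowed weight additionally by the number of clusters cleared in $F\setminus F_i$ restores a sub-convex combination while only multiplying the separability threshold by a factor of at most $3$, which is still met once $k_n\geq 9d^2\bigl(p/(1-p)\bigr)^2$. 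But as stated the bookkeeping does not ``match up exactly,'' and the coarse-graining route avoids having to redo it at all.
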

\begin{proof}
Define for every $n\in\mathbb{N}$ the positive function $g(n)=n/f(n)$ and notice first that if $f(n)\in o(n)$, then
\begin{equation}\label{eq:highdegABS}
    \lim_{n\to\infty} g(n)=\lim_{n\to\infty}\frac{n}{f(n)}=\infty,
\end{equation}
that is, $g(n)\in\omega(1)$.

We will now construct explicitly the desired sequence of graphs $\cG=(G_n)_n$ given $g$ (and, thus, given $f$).
For this (see Fig.~\ref{fig:knbridge}), we consider a function $k\colon\mathbb{N}\to\mathbb{N}$ that will be determined later; for each $n\in\N$ the graph $G_n$ is of order
$|G_n|=nk(n)$ and we partition the vertex set into $k(n)$ clusters of
order $n$
\begin{equation}
V_n=\bigsqcup_{m=1}^{k(n)}V_n^{(m)}\,,
\end{equation}
where $|V_n^{(m)}|=n$, for $m\in [k(n)]$. The edge set is specified as follows: for each $m$ the vertices in $V_n^{(m)}$ are all pairwise connected, i.e.
the induced subgraphs of each cluster correspond to complete graphs $K_n$.
Moreover, for each $m\in \{ 1,2,\dots,k(n)-1\}$ we can numerate the vertices
so that there is a unique pair of vertices $v_n^{(m)}\in V_n^{(m)}$,
$v_n^{(m+1)}\in V_n^{(m+1)}$ with $\{v_n^{(m)},v_n^{(m+1)}\}\in E_n$,
i.e., neighbouring clusters have a unique bridge edge. By construction
we have that $\delta_{\mathrm{min}}(G_n)=n-1$. Moreover, we have
\begin{equation}
f(|G_n|)=f(nk(n))=\frac{nk(n)}{g(nk(n))}
\end{equation}
and we introduce the nondecreasing function
$\tilde{g}$ given by
\begin{equation}
\tilde{g}(n)=\inf_{k\geq n}g(k)
\end{equation}
which, by construction, satisfies $g(n)\geq \tilde{g}(n)$, $n\in\mathbb{N}$, and
has the same asymptotic behavior as $g$, i.e., $\tilde{g}(n)\in\omega(1)$.
Therefore, for any $n\in\N$ we have
\begin{equation}
\frac{g(nk(n))}{k(n)}\geq\frac{\tilde{g}(nk(n))}{k(n)}
                     \geq\frac{\tilde{g}(n)}{k(n)}\;,
\end{equation}
and, hence,
\begin{equation}
    \frac{\delta_{\mathrm{min}}(G_n)}{f(|G_n|)}\geq\frac{(n-1)\tilde{g}(n)}{nk(n)}\;.
\end{equation}
Thus, choosing $k(n)=\lceil\tilde{g}(n)\rceil\in\omega(1)$ we obtain that for $n$
large enough it holds that \linebreak $\delta_{\mathrm{min}}(G_n)/f(|G_n|)\geq 1/4$, and, consequently, $\delta_{\mathrm{min}}(G_n)\in\Omega(f(|G_n|))$.

To conclude the proof we still need to show that $\cG$ is ABS. To this end, notice that our original Hilbert space $H=\bigotimes_{i=1}^{|G_n|} H_i$ can be rewritten as $H\cong H'=\bigotimes_{m=1}^{k(n)} H'_m$ where $H'_m=\bigotimes_{i\in V_n^{(m)}} H_i$ (i.e.\ we consider all parties in the cluster $V_n^{(m)}$, $m\in [k(n)]$, as a single party). An immediate but key observation here is that if our state is in $\mathrm{BS}(H')$ then it is in $\mathrm{BS}(H)$. Interestingly, our isotropic network state $\sigma_d(G_n,p)\in D(H)$ when viewed in this way it is written as
\begin{equation}
\sigma_d(L_{k(n)},p)\otimes\bigotimes_{m=1}^{k(n)}\tau_m\in D(H'),
\end{equation}
where each state $\tau_m$ acts on a subspace of $H'_m$ (these states come from the isotropic states corresponding to the edges within each cluster) and $L_n$ is the path graph with $n$ vertices (and, in particular, a tree for every $n$).
By our choice of $k$ above, which has the property that $k(n)\in\omega(1)$, we have then that $|L_{k(n)}|\to\infty$ as $n\to\infty$. Thus, by the results in \cite{ASGME} (or by using Theorem~\ref{th:degABS} here),
for any given $d$ we have the property that for any $p<1$ we have
$\sigma_d(L_{k(n)},p)$ is biseparable for $n$ large enough and the proof is concluded.
\end{proof}

Next we show that the sufficient condition for AGME of
Theorem~\ref{th:degAGME} is not necessary and, in combination with
Theorem~\ref{th:highdegABS}, that the degree growth cannot completely characterize AGME or ABS. Additionally, while it does not imply that the sufficient condition for ABS given in Theorem~\ref{th:degABS} cannot be improved, it reduces drastically the gap left for improvement. 

\begin{theorem}\label{th:lowdegAGME}
For any choice of $\alpha\in(0,1]$, there exists an AGME sequence of regular graphs $\cG=(G_n)_{n\in\N}$
such that $\delta_{\max}(G_n)\in O(|G_n|^\alpha)$.
\end{theorem}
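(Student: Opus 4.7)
The plan is to realize the desired sequence as Cartesian products of complete graphs. Fix $k:=\lceil 1/\alpha\rceil\in\mathbb{N}$, so that $k\alpha\geq 1$, and let $G_n$ be the Hamming graph $H(k,n)$, that is, the graph with vertex set $[n]^k$ in which two vertices are adjacent if and only if they differ in exactly one coordinate. This is a $k(n-1)$-regular graph on $n^k$ vertices, so in particular $\cG=(G_n)_{n\geq 2}$ is a valid graph sequence. Moreover, since $k\alpha\geq 1$,
\[
\delta_{\mathrm{max}}(G_n)=k(n-1)\leq k\,n\leq k\,n^{k\alpha}=k\,|G_n|^{\alpha},
\]
so $\delta_{\mathrm{max}}(G_n)\in O(|G_n|^{\alpha})$ as required.

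To prove that $\cG$ is AGME I will appeal to Proposition~\ref{prop:sufficientASGME}. It suffices to exhibit, for every $n$ and every pair of distinct vertices $u,v$ of $G_n$, at least $n-1$ edge-disjoint paths joining $u$ and $v$ of length bounded by the universal constant $C:=k+1$. Since $k$ is fixed, $n-1\in\omega(k\log n)=\omega(\log|G_n|)$, and Proposition~\ref{prop:sufficientASGME} then immediately yields AGME.

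To construct such a system of paths, let $I\subseteq[k]$ be the set of coordinates where $u$ and $v$ disagree and set $d:=|I|$. If $d<k$, pick any $j\in[k]\setminus I$ (so that $u_j=v_j$) and, for each $c'\in[n]\setminus\{u_j\}$, define the ``$c'$-detour'' from $u$ to $v$ which first changes coordinate $j$ from $u_j$ to $c'$, then sequentially modifies each coordinate in $I$ from its $u$-value to its $v$-value in a fixed order, and finally restores coordinate $j$ to $u_j$. Each such detour has length $d+2\leq k+1$, and for distinct values of $c'$ all its internal vertices have coordinate $j$ equal to $c'$; hence the detours lie in pairwise disjoint ``layers'' of $G_n$ and share no edges, giving $n-1$ edge-disjoint paths. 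If instead $d=k$, no coordinate remains outside $I$; fix any $i_0\in[k]$ and, for each $c'\in[n]\setminus\{u_{i_0},v_{i_0}\}$, run the analogous length-$(k+1)$ detour through coordinate $i_0$ (changing $i_0$ to $c'$, then processing the other $k-1$ coordinates, then setting $i_0$ to $v_{i_0}$). Together with the direct length-$k$ path obtained by changing coordinate $i_0$ to $v_{i_0}$ first and then the remaining coordinates, the same layering argument (this time stratifying intermediate vertices by their value in coordinate $i_0$) yields again $n-1$ edge-disjoint paths of length at most $k+1$.

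The main subtlety lies precisely in this second case: since every coordinate already lies in the support of $u-v$, the detour is forced to use a third value of a coordinate that must eventually be changed anyway, so edge-disjointness between detour paths and the direct path is not a priori clear. Stratifying intermediate vertices by their value in the distinguished coordinate $i_0$ is what ultimately places the paths into disjoint layers of $G_n$ and rescues the argument. Once this path system is produced for all pairs $u,v$, Proposition~\ref{prop:sufficientASGME} immediately certifies that $\cG$ is AGME.
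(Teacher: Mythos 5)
Your proof is correct and follows essentially the same route as the paper: the same Hamming graph $H(k,n)$ with $k$ chosen so that $k\alpha\geq 1$, the same appeal to Proposition~\ref{prop:sufficientASGME}, and the same system of $n-1$ edge-disjoint paths of length at most $k+1$ obtained by detouring through the layers of a distinguished coordinate. The only (harmless) difference is that when $u$ and $v$ agree on some coordinate you route all $n-1$ detours through such a coordinate, whereas the paper always uses a disagreeing coordinate and combines one direct path with $n-2$ detours.
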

\begin{proof}
The graph sequence consists of graphs
$G_n=(V_n,E_n)$ of order $|G_n|=n^k$ and it will be convenient to use multi-indices to numerate the elements of $V_n$: for any choice of $k\in\mathbb{N}$ consider
\begin{equation}
V_n=\{I=(i_1,i_2,\ldots,i_k)\mid i_m\in[n]\quadtext{and} m\in[k]\}\;.
\end{equation}
The edge set $E_n$ is given as follows: $\{I,J\}\in E_n$ if $i_m=j_m$ for
exactly $k-1$ different values of $m$ (i.e.\ when the Hamming distance of two
multi-indices is one) \footnote{Formally, our choice of $G_1$ is ill-defined because $|G_1|=1$. This is obviously not a problem and we can fix by hand $G_1$ to be any graph of choice, for instance $G_1=K_2$ for every choice of $k$.}. By construction, for any $v\in V_n$ it holds that
\begin{equation}
\delta(v)=k(n-1)=k\left(|G_n|^{1/k}-1\right).
\end{equation}

Hence, for any choice of $\alpha\in(0,1]$ we can take $k$ large enough so that the corresponding sequence of regular graphs $\cG$ as defined above has degree growth which is in $O(|G_n|^\alpha)$. Therefore, it remains to prove that for any choice of $k\in\mathbb{N}$ the graph sequence $\cG$ defined in this way is AGME. For this, it is enough to check that the premises of Proposition~\ref{prop:sufficientASGME} are met. This is done in the following by showing that for every pair of vertices of $G_n$ there exist at least $n-1$ edge-disjoint paths with length at most $k+1$ (independently of $n$) connecting them. Notice that $n-1=|G_n|^{1/k}-1$
satisfies the needed condition since
\begin{equation}
 \frac{n-1}{\log(|G_n|)}=\frac{n-1}{k\log(n)}\xrightarrow[n\to\infty]{} \infty\,.
\end{equation}
We finish then the proof by explicitly describing the paths needed.
For the sake of simplicity, consider first two arbitrary vertices $I$ and $J$ such that $i_m\neq j_m$, for every $m\in [k]$.
Then, there is an obvious path of length $k$ connecting them, namely
\begin{equation}\label{path1}
I\rightarrow(j_1,i_2,\ldots,i_k)\rightarrow(j_1,j_2,i_3,\ldots,i_k)\rightarrow\cdots\rightarrow J.
\end{equation}
Now, in addition, for every $i'_1\neq i_1,j_1$ we can consider the following $n-2$ paths of length $k+1$
\begin{equation}\label{path2}
I\rightarrow(i'_1,i_2,\ldots,i_k)\rightarrow(i'_1,j_2,i_3,\ldots,i_k)\rightarrow\cdots\rightarrow(i'_1,j_2,j_3,\ldots,j_k)\rightarrow J.
\end{equation}
The fact $i'_1$ takes $n-2$ different values guarantees that all paths in Eq.~(\ref{path2}) are edge-disjoint among themselves while the fact that $i'_1\neq j_1$ guarantees that they remain edge-disjoint together with the path in Eq.~(\ref{path1}). Finally, notice that if the multiindices $I$ and $J$ of the vertices agree on $l<k$ entries the same construction works leading to the same number of paths of smaller lengths. In fact, the path corresponding to Eq.~(\ref{path1}) has now length $k-l$. For the remaining $n-2$ paths corresponding to Eq.~(\ref{path2}), let $p$ be the first entry for which $I$ and $J$ disagree and consider the analogous construction for every $i'_p\neq i_p,j_p$, resulting in edge-disjoint paths of length $k-l+1$.
\end{proof}

\subsection{Diameter growth}\label{sec:diameter}

Given the limitations of the connectivity and the degree to fully characterize which graph sequences are AGME or ABS, it is natural to ask whether there are other graph parameters that can better apprehend these behaviors. Although this falls out of the scope of the present work, as an invitation for future research, we consider here the particular case of the diameter, which is clearly related to connectivity properties of the network as it quantifies how far parties can be in graph distance. In this regard, it is interesting to observe that all graph sequences that can be catalogued as AGME with the results presented so far must have a uniformly bounded diameter. This follows from Eq.~(\ref{ineqdiam}) for those that satisfy the condition of Theorem~\ref{th:degAGME} and from the very definition of diameter for those that fulfill that of Proposition~\ref{prop:sufficientASGME} (such as the AGME graph sequences of Theorem~\ref{th:lowdegAGME}). Thus, this could lead us to conjecture that a uniformly bounded diameter is a necessary condition for AGME (it certainly cannot be sufficient as the example of a sequence of star graphs of increasing order shows). However, in the following we show that this is not the case by providing a construction of an AGME graph sequence with unbounded diameter. For this, we need first the following lemma, which is an immediate consequence of the main result of \cite{beigi}.

\begin{lemma}\label{lem:tensor-bsa}
Let $\rho\in D(H_1\otimes H_2)$ be entangled, then
\begin{equation}
\lim_{n\to\infty}\mathrm{BSA}\left(\rho^{\otimes n}\right)=0.
\end{equation}
\end{lemma}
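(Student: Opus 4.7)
The proof will go by contradiction. Suppose $\mathrm{BSA}(\rho^{\otimes n}) \not\to 0$; passing to a subsequence (relabelled as $n$) we may assume $\mathrm{BSA}(\rho^{\otimes n}) \geq \epsilon$ for some fixed $\epsilon > 0$. By the definition of $\mathrm{BSA}$ this supplies decompositions $\rho^{\otimes n} = \lambda_n \sigma_n + (1-\lambda_n)\tau_n$ with $\sigma_n \in S(H_1^{\otimes n}\otimes H_2^{\otimes n})$, $\tau_n$ a density matrix and $\lambda_n \geq \epsilon$. Positivity of $\tau_n$ then yields the operator inequality
$$
\lambda_n\sigma_n \;\leq\; \rho^{\otimes n},
$$
which is the crux of what one extracts from the BSA condition.

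Next I would translate this into a constraint on the ability of any test to distinguish $\rho^{\otimes n}$ from the separable set. For any POVM effect $0\leq M\leq I$, expanding the decomposition and using $\mathrm{tr}(M\tau_n)\leq 1$ yields
$$
\mathrm{tr}(M\sigma_n) \;\geq\; 1 - \frac{1-\mathrm{tr}(M\rho^{\otimes n})}{\lambda_n} \;\geq\; 1 - \frac{1-\mathrm{tr}(M\rho^{\otimes n})}{\epsilon}.
$$
Hence any test accepting $\rho^{\otimes n}$ with probability $1-\delta_n$ must also accept the \emph{separable} state $\sigma_n$ with probability at least $1-\delta_n/\epsilon$. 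The cited main result of \cite{beigi} furnishes, for every entangled $\rho$, a sequence of tests $M_n$ with $\mathrm{tr}(M_n\rho^{\otimes n})\to 1$ while $\sup\{\mathrm{tr}(M_n\sigma):\sigma\text{ separable}\}\to 0$ (a Stein-type separation of $\rho^{\otimes n}$ from the separable set). Applying it to $\sigma_n$ forces $\mathrm{tr}(M_n\sigma_n)\to 0$, contradicting the inequality above, which forces $\mathrm{tr}(M_n\sigma_n)\to 1$. Hence $\mathrm{BSA}(\rho^{\otimes n})\to 0$.

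The main difficulty is conceptual rather than computational: recognising that the convex decomposition witnessing $\mathrm{BSA}$ translates into an operator domination of a separable state by $\rho^{\otimes n}$, and that this domination is precisely what any hypothesis-testing statement against the separable set rules out. Once this pairing is spotted, the deduction is the one-line manipulation above, which matches the authors' description of the lemma as an ``immediate consequence'' of Beigi's theorem. If the precise formulation in \cite{beigi} is packaged differently (say, as strict positivity of the regularised relative entropy of entanglement on all entangled states, or as an exponential lower bound on the trace distance from the separable set), the passage to the test sequence $(M_n)$ above is standard and introduces no further obstacle.
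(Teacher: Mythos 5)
Your proof is correct, and it rests on the same external ingredient as the paper's (the Beigi--Shor theorem that $\rho^{\otimes n}$ becomes asymptotically maximally far from the separable set), but you run the argument in the dual, hypothesis-testing picture rather than the primal, trace-norm picture. The paper simply defines $T(\rho)=\tfrac12\min_{\chi\in S}\|\rho-\chi\|$, quotes $T(\rho^{\otimes n})\to 1$, and proves the pointwise bound $\mathrm{BSA}(\rho)\leq 1-T(\rho)$ in one line by substituting the BSA decomposition $\rho=(1-\mathrm{BSA}(\rho))\sigma+\mathrm{BSA}(\rho)\tau$ into the definition of $T$ and using $\tfrac12\|\sigma-\tau'\|\leq 1$; no contradiction and no optimization over measurements is needed. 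Your version replaces that one-line estimate with the equivalent statement that a state with separable weight $\geq\epsilon$ cannot be separated from $S$ by any test with error $o(\epsilon)$; your inequality $\mathrm{tr}(M\sigma_n)\geq 1-(1-\mathrm{tr}(M\rho^{\otimes n}))/\lambda_n$ is correct. The only point where you lean on something not literally in the cited reference is the claim that \cite{beigi} ``furnishes a sequence of tests $M_n$'': the theorem there is the trace-distance statement, and extracting a single $M_n$ that simultaneously accepts $\rho^{\otimes n}$ and rejects \emph{all} separable states requires a minimax (Sion/LP-duality) argument over the convex compact separable set, not just the two-state variational formula for the trace norm. You flag this as standard, and it is, so there is no gap --- but note that the paper's primal bound $\mathrm{BSA}\leq 1-T$ makes that step unnecessary, which is what makes the lemma an ``immediate'' consequence.
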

\begin{proof}
Let
\begin{equation}
T(\rho)=\frac{1}{2}\min_{\chi\in S(H_1\otimes H_2)}\|\rho-\chi\|,
\end{equation}
where $\|\cdot\|$ stands for the trace norm. This statement follows immediately from the result of \cite{beigi} that for every entangled state it holds that
\begin{equation}
\lim_{n\to\infty}T\left(\rho^{\otimes n}\right)=1,
\end{equation}
using that for any state $\rho$
\begin{equation}\label{bsadistance}
\mathrm{BSA}(\rho)\leq1-T(\rho).
\end{equation}
To see this last claim, let $\sigma$ be a state and $\tau$ a separable state so that
\begin{equation}
\rho=(1-\mathrm{BSA}(\rho))\sigma+\mathrm{BSA}(\rho)\tau.
\end{equation}
Then, for any separable state $\tau'$ we have the estimate
\begin{equation}
T(\rho)\leq\frac{1}{2}\|\rho-\mathrm{BSA}(\rho)\tau-(1-\mathrm{BSA}(\rho))\tau'\|
      =\frac{(1-\mathrm{BSA}(\rho))}{2}\|\sigma-\tau'\|\leq 1-\mathrm{BSA}(\rho)
\end{equation}
which concludes the proof.
\end{proof}

\begin{theorem}\label{teo:diam}
There exists an AGME graph sequence $\cG=(G_n)_{n\in\N}$ with the property that $\mathrm{diam}(G_n)\notin O(1)$.
\end{theorem}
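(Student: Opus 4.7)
The plan is to build a ``thick path'' graph and balance two distinct tools: Lemma~\ref{lemmaBSAto0} applied to the complete-graph clusters, and Lemma~\ref{lem:tensor-bsa} applied to the inter-cluster entanglement. For each $n$, define $G_n$ by partitioning its vertex set into $k(n)$ clusters $V_n^{(1)},\dots,V_n^{(k(n))}$ of size $\ell(n)$, adding all intra-cluster edges (so each cluster induces $K_{\ell(n)}$) and, between each pair of consecutive clusters $V_n^{(i)}$ and $V_n^{(i+1)}$, all $\ell(n)^2$ possible edges (a complete bipartite graph). With this construction $\mathrm{diam}(G_n)=k(n)-1$, so choosing $k(n)=n$ makes $\mathrm{diam}(G_n)\notin O(1)$; it only remains to pick $\ell(n)$ so that the resulting $\cG=(G_n)_n$ is AGME. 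By the monotonicity observation at the end of Section~\ref{sec:asymptotic} it suffices to treat a single dimension $d$, so I fix $d$ and choose $p^*\in(p_0(d),1)$, where $p_0(d)$ is the threshold provided by Lemma~\ref{lemmaBSAto0} for the complete-graph sequence.

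The argument is by contradiction: suppose $\sigma_d(G_n,p^*)=\sum_M q_M\chi_M$ is a biseparable decomposition, and classify the bipartitions $M$ into \emph{Type~1}, those splitting at least one cluster, and \emph{Type~2} (super-bipartitions), those of the form $M=\bigcup_{j\in J}V_n^{(j)}$ for some $\emptyset\neq J\subsetneq[k(n)]$. For Type~1, let $Q_i$ denote the cumulative weight of bipartitions splitting cluster $i$. Taking the partial trace onto the vertices of cluster $i$ turns $\sigma_d(G_n,p^*)$ into $\sigma_d(K_{\ell(n)},p^*)$ tensored with maximally mixed states on the inter-cluster ``ports''; the Type~1 contribution becomes biseparable on $V_n^{(i)}$ because partial trace preserves separability in the induced bipartition. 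Discarding the port subsystems (another local partial trace) then yields a convex decomposition of $\sigma_d(K_{\ell(n)},p^*)$ with biseparable weight $\geq Q_i$. Lemma~\ref{lemmaBSAto0} forces $Q_i\leq \ell(n)\,\epsilon_{\ell(n)}$ with $\epsilon_{\ell(n)}$ decaying exponentially in $\ell(n)$ (via Remark~\ref{Remark:sufficientASGME}), so the total Type~1 weight is at most $k(n)\,\ell(n)\,\epsilon_{\ell(n)}$.

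For Type~2, I would bound each $q_M$ with $M=\bigcup_{j\in J}V_n^{(j)}$ by the best separable approximation of $\sigma_d(G_n,p^*)$ across the bipartition $V_J|V_{\overline{J}}$. Tracing out all subsystems not incident to an edge in $E(V_J,V_{\overline{J}})$ is an LOCC operation across that bipartition, so by monotonicity of $\mathrm{BSA}$ under LOCC,
\begin{equation}
q_M\leq \mathrm{BSA}\bigl(\rho(p^*,d)^{\otimes |E(V_J,V_{\overline{J}})|}\bigr)\leq \mathrm{BSA}\bigl(\rho(p^*,d)^{\otimes \ell(n)^2}\bigr),
\end{equation}
since every nontrivial super-bipartition cuts at least one super-edge and therefore at least $\ell(n)^2$ crossing edges. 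Summing over the at most $2^{k(n)-1}$ super-bipartitions yields total Type~2 weight $\leq 2^{k(n)-1}\mathrm{BSA}(\rho(p^*,d)^{\otimes \ell(n)^2})$. With $k(n)=n$ I would then define $\ell(n)$ adaptively as the smallest integer for which $\ell(n)\geq n$ and $\mathrm{BSA}(\rho(p^*,d)^{\otimes \ell(n)^2})\leq 2^{-n-2}$; such an $\ell(n)$ exists and tends to infinity by Lemma~\ref{lem:tensor-bsa}. The Type~1 weight then tends to $0$ while the Type~2 weight is at most $1/8$, contradicting $\sum_M q_M=1$ for all large $n$. Hence $\sigma_d(G_n,p^*)$ is GME for large $n$, and Propositions~\ref{prop:facts}~(ii) and~\ref{characterizationAGME}~(c) give AGME. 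The main obstacle is that Lemma~\ref{lem:tensor-bsa} provides no explicit rate, which forces $\ell(n)$ to be defined adaptively rather than in closed form; qualitative convergence is nonetheless enough because the combinatorial $2^{k(n)-1}$ factor is eventually crushed by the vanishing BSA.
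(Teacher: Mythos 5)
Your proof is correct and follows essentially the same strategy as the paper's: split the biseparable decomposition into bipartitions that cut some cluster (controlled via Lemma~\ref{lemmaBSAto0} applied to the complete-graph reductions) and super-bipartitions (controlled via Lemma~\ref{lem:tensor-bsa} on tensor powers of the isotropic state), then contradict normalization, with one growth parameter chosen adaptively to compensate for the rate-free convergence in Lemma~\ref{lem:tensor-bsa}. The only differences are cosmetic: the paper fixes clusters of size $n$ joined by perfect matchings and lets the number of clusters grow slowly, while you fix $n$ clusters joined by complete bipartite connectors and let the cluster size grow adaptively.
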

\begin{proof}
The construction of the sequence of graphs (see Fig.~\ref{fig:k3}) depends on a function $f\colon\mathbb{N}\to\mathbb{N}$ such that $\lim_{n\to\infty}f(n)=\infty$ and that will be specified  later to have a sufficiently slow growth. The graph $G_n$ has order
$|G_n|=f(n)n$ and we partition again the vertex set into $f(n)$ clusters of order $n$ by
\begin{equation}
V_n=\bigsqcup_{k=1}^{f(n)}V_n^{(k)},
\end{equation}
where $|V_n^{(k)}|=n$ for all $k\in [f(n)]$. We label the vertex set as
ordered pairs $( i,k)\in [n]\times [f(n)]$ and the edges are given as follows:
\begin{itemize}
 \item[(i)] for each $k\in [f(n)]$ we have
 $\left\{( i,k), (i',k)\right\}\in E_n$ if $i<i'$, i.e.,
 the subgraphs induced by the cluster vertices correspond to complete graphs $K_n$;
 \item[(ii)] for each $k\in \{1,\dots, f(n)-1\}$, $i\in [n]$ we have
 $\left\{( i,k), ( i,k+1)\right\}\in E_n$, i.e., these edges specify an edge-cut of cardinality $n$ and each vertex in $V_n^{(k)}$,
 $k=2,\dots,f(n)-1$ shares two such edges with the ``twin'' vertices in the neighboring clusters.
\end{itemize}

\begin{figure}[h]
\centering
\begin{tikzpicture}[scale=0.9]
    \foreach \i in {0,1,2} {
        \node[draw, circle, inner sep=1pt, minimum size=4pt, fill=black] (a\i) at (\i*3 - 0.25, 0) {};  
        \node[draw, circle, inner sep=1pt, minimum size=4pt, fill=black] (b\i) at (\i*3 + 0.75, 1) {};  
        \node[draw, circle, inner sep=1pt, minimum size=4pt, fill=black] (c\i) at (\i*3 + 0.75, -1) {};  

        \draw (a\i) -- (b\i);
        \draw (a\i) -- (c\i);
        \draw (b\i) -- (c\i);

        \draw[black] (\i*3 + 0.5, 0) ellipse (1.2cm and 1.75cm);
    }

    \draw[bend left=0] (a0) to (a1);
    \draw[bend left=0] (a1) to (a2);
    \draw[bend left=0] (b0) to (b1);
    \draw[bend left=0] (b1) to (b2);
    \draw[bend left=0] (c0) to (c1);
    \draw[bend left=0] (c1) to (c2);
\end{tikzpicture}
 \caption{Element $G_3$ of the graph sequence $\cG$ split into three clusters whose induced subgraphs are $K_3$.}
\label{fig:k3}
\end{figure}

Notice then that, by construction, $\mathrm{diam}(G_n)=f(n)-1$ and, then, by the fact that \linebreak $\lim_{n\to\infty}f(n)=\infty$, we have $\textrm{diam}(G_n)\notin O(1)$.
It remains to show in the following that $\cG$ is AGME for some choice of $f$.
From here on, the proof follows the reasoning of
Proposition~\ref{pro:degAGME}. Assume for a contradiction that $\cG$ is not AGME and, therefore, that it contains an ABS subsequence $\cG_s=(G_{s(n)})_n$. Hence, like in the proof of Proposition~\ref{pro:degAGME}, for any fixed $d$ and for any choice of $p<1$ we have that $\sigma_d(G_{s(n)},p)$ is biseparable for $n$ large enough, i.e., in this case
\begin{equation}\label{eqapp}
\sigma_d(G_{s(n)},p)=\sum_{\emptyset\neq M\subsetneq V_{s(n)}}q_M^{(n)}\chi_M^{(n)},
\end{equation}
where $\chi_M^{(n)}$ is an $|G_{s(n)}|$-partite quantum state that is separable in the bipartition $M|\overline{M}$ and for any $M,n$ we have $q_M^{(n)}\geq 0$ with
$\sum_M q_M^{(n)}=1$. Moreover,
\begin{equation}\label{eqappbis}
\sigma_d(G_{s(n)},p)=\sum_{M\in I^n}q_M^{(n)}\chi_M^{(n)}+\sum_{M\notin I^n}q_M^{(n)}\chi_M^{(n)},
\end{equation}
where we divide as in Proposition~\ref{pro:degAGME} the partitions into two types
$I^n= \cup_{k=1}^{f(n)}I^n_k$ and its complement (see also Eq.~(\ref{eq:Ik})).
Now, if for a given $k$ we take the partial trace with respect to the subspace
$\bigotimes_{e\notin E_{s(n)}\left(V_{s(n)}^{(k)}\right)}H_e$ on both sides of Eq.~(\ref{eqappbis}), we obtain
\begin{equation}
\sigma_d(K_{s(n)},p)=\sum_{M\in I_k^n}q_M^{(n)}\rho_{k,M}^{(n)}+\sum_{M\notin I_k^n}q_M^{(n)}\tau_{k,M}^{(n)},
\end{equation}
where the states $\{\rho_{k,M}^{(n)}\}_n$ are separable in the bipartition
$M|\overline{M}$. Thus, as in the proof of Proposition~\ref{pro:degAGME}, there exists $p_0<1$ such that if $p\geq p_0$ it must then hold that
\begin{equation}\label{argu1app}
\sum_{M\in I^n}q_M^{(n)}\leq\sum_{k=1}^{f(n)}\sum_{M\in I_k^n}q_M^{(n)}\leq f(n)\max_k\sum_{M\in I_k^n}q_M^{(n)}\xrightarrow[n\to\infty]{} 0\;,
\end{equation}
if we choose $f$ to grow slow enough.

We focus now on the second summand in Eq.~(\ref{eqappbis}) and notice that the complement of $I^n$ can be taken to have cardinality $2^{f(n)-1}-1$; for each $S\notin I^n$ we take the partial trace with respect to $\bigotimes_{e\notin E_{s(n)}(S,\overline{S})}H_e$ on both sides of Eq.~(\ref{eqappbis}) to obtain that
\begin{equation}
\bigotimes_{e\in E_{s(n)}(S,\overline{S})}\rho_{e}(p,d)=q_S^{(n)}\eta_S^{(n)}+\sum_{M\neq S}q_M^{(n)}\omega_{M,S}^{(n)},
\end{equation}
where $\eta_S^{(n)}$ is separable in the bipartition $S|\overline{S}$. Then,
\begin{equation}
q_S^{(n)}\leq \mathrm{BSA}\left(\rho(p,d)^{\otimes |E_n(S,\overline{S})|}\right)
         \leq \mathrm{BSA}\left(\rho(p,d)^{\otimes n}\right)=:g(n)\, ,
\end{equation}
where we have now used that $|E_n(S,\overline{S})|\geq n$ for all $S\notin I^n$. Thus,
\begin{equation}
\sum_{M\notin I^n}q_M^{(n)}\leq\left(2^{f(n)-1}-1\right)g(n)=:h(n)\,.
\end{equation}
However, by Lemma~\ref{lem:tensor-bsa} we have that $\lim_{n\to\infty}g(n)=0$ for every choice of the visibility $p>1/(d+1)$. Therefore if $f$ has slow enough growth we can guarantee that $\lim_{n\to\infty}h(n)=0$ for any choice of $p\geq p_0$ where $p_0$ is given by the condition of Eq.~(\ref{argu1app}) above. Therefore, there exists a choice of $f$ such that, for $p$ sufficiently close to 1, the normalization condition
$\sum_Mq_M^{(n)}=1$ cannot hold for any $n$ that is large enough providing a contradiction.
\end{proof}

\section{Discussion and conclusions}\label{secconcl}

In this work we have drastically strengthened the observation of \cite{ASGME} that the AGME and ABS properties are possible in quantum network states by providing general sufficient conditions for both features based on well-established graph parameters. Although we have shown that, as intuition suggests, edge-connectivity plays an important role in this problem, it turns out that the degree growth provides stronger conditions.

Theorem~\ref{th:degABS} proves that all graph sequences with at most logarithmic degree growth must be ABS. This is in particular interesting because it shows that AGME requires that the degree in the network increases with its size and, therefore, it cannot be obtained with bounded local dimension. On the other hand, Theorem~\ref{th:degAGME} reveals that all graph sequences with linear degree growth are AGME. Thus, these two theorems provide an answer to our problem for very general classes of graphs. Nevertheless, this still leaves a considerable gap between logarithmic and linear growth. This issue is addressed by Theorems \ref{th:highdegABS} and \ref{th:lowdegAGME}, which show that the degree cannot characterize in general the AGME or the ABS properties. The first one proves that there always exist ABS graph sequences for any sublinear degree growth. Thus, the degree growth condition in Theorem \ref{th:degAGME} cannot be improved. On the other hand, Theorem~\ref{th:lowdegAGME} proves that AGME may exist at sublinear degree growth for any fractional power of choice. Additionally, this proof is constructive, so it provides instances of the cheapest networks that we know of in terms of the local dimension necessary to display the AGME property. Taking all this into account, the open question left in this respect is to characterize what the situation is between logarithmic and fractional power degree growth. Can we have AGME graph sequences $(G_n)_{n\in\N}$ for which $\delta_{\max}(G_n)$ grows polylogarithmically with $|G_n|$ and further reduce the local dimension overhead necessary to construct networks with this property or are all such graph sequences ABS?

Another possible future approach to this problem is to explore its relation to other graph parameters. In this sense, while most of the AGME graph sequences identified here have the property that $\textrm{diam}(G_n)\in O(1)$, we have showed in Theorem~\ref{teo:diam} that AGME graph sequences with unbounded diameter are also possible. However, this kind of questions can be explored in many other different directions. A particularly suggestive idea in principle is to consider the \textit{algebraic connectivity} or the so-called \textit{Cheeger constant}, which give rise to the notion of graph expander families (see e.g.\ \cite{chung}). These are graph sequences that can be regarded as highly connected and sparse. However, our results prove that these notions do not play a relevant role in the AGME/ABS question. Graph expanders can have bounded degree and, in fact, there exist
constructions of expander sequences of $k$-regular graphs with $k\geq 3$ (see \cite{marcus15} and references therein). Therefore, Theorem~\ref{th:degABS} entails that these objects do not necessarily lead to AGME. On the other hand, the graph sequences used in Proposition~\ref{th:lowlambda} have Cheeger constant going to zero as $n\to\infty$. Therefore, algebraic connectivity alone cannot give sufficient conditions for the ABS property either. However, other graph parameters might be suitable for capturing the behavior of multipartite entanglement addressed in this work.

Once the underlying dimension $d$ is fixed, if a graph sequence is AGME, this entails that there exists a value of the visibility $p<1$ such that the corresponding isotropic network state $\sigma_d(G_n,p)$ is GME for all $n$. However, the mere property of being AGME does not say anything about how far from 1 the corresponding visibility can be, which is a question of obvious practical relevance. Thus, it would be interesting to derive tools to address the problem of determining the threshold visibility of a given AGME graph sequence, i.e.\ the infimum over all $p_0$ for which the property that defines AGME as given in Definition~\ref{def:asymptotic} holds. Notice that in \cite{ASGME} we have argued that $\overline{p_2(K_n)}\lesssim0.865$ for $n$ sufficiently large, which is well within what is experimentally feasible. This indicates that the threshold visibility is likely to take accessible values as well for other less connected AGME graph sequences. Another feature that we have not touched upon here is the role of the dimension $d$ of the bipartite entanglement sources that give rise to the links in the network. The definitions of AGME and ABS depend on this choice; however, all the conditions that we obtained here are independent of this parameter. We leave for future study the question of whether there exists a graph sequence that is AGME or ABS (or none of the two) depending on the choice of $d$ or whether the dimension plays no role in this analysis.

In summary, we believe that the AGME/ABS problem in quantum networks unveils a mathematically rich scenario in the intersection of quantum information theory and graph theory. While we have shown here that these properties can be related to basic quantities in graph theory, no such notion seems to be able to completely characterize them. Therefore, we think that AGME leads to a novel graph parameter related to a different notion of connectivity. We hope that our work spurs further research in this direction.

\begin{acknowledgments}

We thank Gandalf Lechner and Zhen-Peng Xu for useful discussions. This work is partially supported by grant QUITEMAD-CM TEC-2024/COM-84 funded by Comunidad de Madrid, by grant CEX2023-001347-S funded by MCIN/AEI/10.13039/501100011033 and by grant PID2023-146758NB-I00 funded by the Spanish MICIU. F.L. thanks Wilhelm Winter for his kind invitation to the Excellence Cluster Mathematics M\"unster from March-July 2024, where part of this work was done.
Financial support was provided by the mobility grant PRX22/00472 by Ministerio de Universidades. He also acknowledges financial support from a grant 6G-INTEGRATION-3 (TSI-063000-2021-
127), funded by UNICO program (under the Next Generation EU umbrella funds), Ministerio de Asuntos Económicos y Transición Digital of Spain. F.L. and J.I. de V. are also supported by grant PID2024-160539NB-I00 funded by the Spanish MICIU. C.P. gratefully acknowledges financial support for this publication by the Fulbright Program, which is sponsored by the U.S. Department of State and the U.S.- Spain Fulbright Commission.

\end{acknowledgments}

\end{document}